\def\ps@headings{%
\def\@oddhead{\mbox{}\scriptsize\rightmark \hfil \thepage}%
\def\@evenhead{\scriptsize\thepage \hfil \leftmark\mbox{}}%
\def\@oddfoot{}%
\def\@evenfoot{}}
\def\squareforqed{\hbox{\rlap{$\sqcap$}$\sqcup$}}
\def\qed{\ifmmode\squarefored\else{\unskip\nobreak\hfil
\penalty50\hskip1em\null\nobreak\hfil\squareforqed
\parfillskip=0pt\finalhyphendemerits=0\endgraf}\fi}
\newtheorem{lemma}{Lemma}
\newtheorem{theorem}{Theorem}
\newtheorem{corollary}{Corollary}
\newtheorem{proposition}{Proposition}
\newenvironment{definition}[1][Definition]{\begin{trivlist}
\item[\hskip \labelsep {\bfseries #1}]}{\end{trivlist}}
\newcommand{\xx}[1]{\textcolor{blue}{{\bf } #1}}
\begin{document}

\title{A Truthful $(1-\epsilon)$-Optimal Mechanism for On-demand Cloud Resource Provisioning}

\author{\IEEEauthorblockN{Xiaoxi Zhang\IEEEauthorrefmark{1},~\IEEEmembership{Student Member,~IEEE,~ACM},
Chuan Wu\IEEEauthorrefmark{1},~\IEEEmembership{Member,~IEEE,~ACM},\\
Zongpeng Li\IEEEauthorrefmark{2},~\IEEEmembership{Member,~IEEE,~ACM}, and
Francis C.M. Lau\IEEEauthorrefmark{1},~\IEEEmembership{Senior Member,~IEEE}}
\IEEEauthorblockA{\IEEEauthorrefmark{1}Department of Computer Science,
The University of Hong Kong, Hong Kong}
\IEEEauthorblockA{\IEEEauthorrefmark{2}Department of Computer Science, University of Calgary, Calgary, Canada}
}

%xiaoxi%%%%\markboth{IEEE/ACM TRANSACTIONS ON NETWORKING}{}

\IEEEtitleabstractindextext{
\begin{abstract}
On-demand resource provisioning in cloud computing provides tailor-made resource packages (typically in the form of VMs) to meet users' demands. Public clouds nowadays provide more and more elaborated types of VMs, but have yet to offer the most flexible dynamic VM assembly, which is partly due to the lack of a mature mechanism for pricing tailor-made VMs on the spot. This work proposes an efficient randomized auction mechanism based on a novel application of smoothed analysis and randomized reduction, for dynamic VM provisioning and pricing in geo-distributed cloud data centers. This auction, to the best of our knowledge, is the first one in literature that achieves (i) truthfulness in expectation, (ii) polynomial running time in expectation, and (iii) $(1-\epsilon)$-optimal social welfare in expectation for resource allocation, where $\epsilon$ can be arbitrarily close to $0$.
Our mechanism consists of three modules: (1) an exact algorithm to solve the NP-hard social welfare maximization problem, which runs in polynomial time in expectation,
(2) a perturbation-based randomized resource allocation scheme which produces a VM provisioning solution that is $(1-\epsilon)$-optimal
and (3) an auction mechanism that applies the perturbation-based scheme for dynamic VM provisioning and prices the customized VMs using a randomized VCG payment, with a guarantee in truthfulness in expectation.

We validate the efficacy of the mechanism through careful theoretical analysis and trace-driven simulations.

\end{abstract}

\begin{IEEEkeywords}
Cloud Computing; Auction; Resource Allocation; Pricing; Truthful Mechanisms
\end{IEEEkeywords}}

\maketitle

\IEEEdisplaynontitleabstractindextext

\IEEEpeerreviewmaketitle

\section{Introduction}
\label{sec:intro}

Cloud computing services have been proliferating in today's Internet for the past decade. They create a shift in resource provisioning from on-premise hardware to shared resource pools accessible over the Internet. 
%aiming
%They try to cater to users' resource demands of any scale at any time. 
To be flexible at meeting users' resource demands, 
leading cloud platforms such as Amazon EC2 \cite{amazon:ec2}, Microsoft Azure \cite{microsoft:azure} and GoGrid \cite{GoGrid} exploit advanced virtualization technologies to pack resources (CPU, RAM, and Disk Storage) into virtual machine (VM) instances of various types. Undoubtedly, the more variety of VM types they can provide, the better they could meet the wide range of users' demands. For example, Amazon EC2 has been expanding the variety of VM instances they provide, which now spans $9$ categories and $39$ types \cite{amazon:ec2instances}. However, the increased variety on the provider's side still often falls short of addressing user needs precisely, which could lead to a waste of resources and an unjustifiably inflated payment by the users. For example, suppose a user  needs to run a computationally intensive job ({\em e.g.}, a MapReduce job) by acquiring $16$ vCPU units and $16$ GB memory \cite{mapreduce:memoryconfig} in EC2's Singapore data center, to process $160$ GB usage data. %of Asian users;
The best offer Amazon EC2 can make is a {\tt c3.4xlarge} instance, which unfortunately is far from a perfect match, leading to a waste of roughly half of the allocated memory and SSD storage.

\begin{comment}
	For example, the VM instance types from Amazon EC2 are illustrated in the following table.\\
	
\begin{table}[t!]	
\begin{center}
\caption{Amazon EC2 VM Instances}
\label{tab:vminstance}
\begin{tabular}{|l|l|l|l|}
\hline
VM Type & CPU & Memory & Storage (SSD)\\
\hline
m3.medium & 3 ECU & 3.75 & 1 x 4 \\
\hline
m3.xlarge & 13 ECU & 15 & 2 x 40 \\
\hline
c3.xlarge & 14 ECU & 7.5 & 2 x 40 \\
\hline
c3.8xlarge & 32 ECU & 108 & 2 x 320 \\
\hline
r3.large & 2 ECU & 6.5 & 1 x 32 \\
\hline
r3.4xlarge & 16 ECU & 52 & 1 x 320 \\
\hline
i2.8xlarge & 32 ECU & 104 & 8 x 800 \\
\hline
\end{tabular}
\end{center}
\end{table}
\end{comment}

Current virtualization technology is in fact ready for real-time, on-demand VM partitioning and provisioning ({\em e.g.}, by utilizing credit-based CPU scheduler and memory ballooning \cite{Xen-SOSP03}). What is not ready, however, 
%the challenge may mainly be attributed to the lack of
is an effective pricing mechanism to charge for those customized VMs on the spot. The current representative pricing models, {\em e.g.}, long-term reservation, fixed on-demand instance pricing and spot instance pricing employed by Amazon EC2, are not suitable for dynamically assembled VMs. Under fixed pricing, it is impossible for the cloud provider to come up with
%a set of
the appropriate prices, {\em a priori}, for any VM type that could possibly be assembled according to the user's needs. Furthermore, fixed pricing fails to cater to the ever-changing supply and demand in the market; either overpricing or underpricing would jeopardize the social welfare of the overall system as well as the provider's revenue. Amazon's spot instances market \cite{amazon:spot} represents the first attempt at a more market-driven
pricing system, which, however, %remains to be improved in many aspects, due to
comes without any guarantee of truthfulness or SLA
%and type-oblivious design that fails to exploit VM variety 
 \cite{Wang:2012kt}\cite{Wang:2013dy}. Some recent work further studied auction mechanism design for cloud resource provisioning from different perspectives \cite{Wang:2013dy}\cite{Zaman:2012:CAM:2310096.2310159}\cite{Zaman:2013:CAA:2442165.2442286}. %%%%%\xx \cite{Zhang:2013bi}. 
However, most of them model VMs as type-oblivious commodities,
 %{\em i.e.}, all VMs are essentially of the same type, 
and therefore fail to provision dynamically assembled VMs properly.

Aside from pricing, the challenge of packing available resources to maximally cater to users' VM demands translates into an NP-hard combinatorial optimization problem, which presents a tough challenge in VM auction design. The VCG mechanism \cite{vickrey:vcg}, essentially the only type of auction that guarantees both truthfulness and economic efficiency (social welfare maximization), requires an
%computation of
exact optimal allocation. When
%social welfare optimization is NP-hard and
polynomial-time approximation algorithms are applied instead, VCG loses its truthfulness property \cite{Mu'alem2008612}. To achieve truthfulness with an approximation algorithm, researchers have exploited the concept of critical bids \cite{lehmann:criticalvalue}, or resorted to some LP decomposition techniques \cite{lavi:lpdecomposition}\cite{Zhang:2014jo}\cite{Shi:sigmetrics14}. The approximation ratios of these auctions with respect to social welfare optimality depend on the efficiency of the approximation algorithm employed, which is typically much larger than $1$ \cite{Zhang:2014jo}\cite{Shi:sigmetrics14}.
% which is universally applicable to problems with a packing or covering structure.

This work aims to leverage the state-of-the-art techniques in smoothed analysis \cite{Spielman:2001:SAA:380752.380813}\cite{Brunsch:improved_smoothed} and randomized reduction, to design a highly efficient randomized auction mechanism for the provisioning and pricing of customized VM instances in a geo-distributed cloud. The resulting combinatorial VM auction is sufficiently expressive for cloud users to request the necessary custom-made VM instances in bundles in different data centers for their job execution. To the best of our knowledge, this is the first VM auction that achieves (i) truthfulness (in expectation), (ii) polynomial running time (in expectation), and (iii) $(1-\epsilon)$-optimal social welfare (in expectation) for resource allocation in a geo-distributed cloud, where $\epsilon\in (0,1)$ is a tunable parameter that can approach zero.
%thus a $(1-\epsilon)$-approximation algorithm makes the social welfare near optimal.

Our proposed auction mechanism consists of three main modules: (1) an exact algorithm to solve the NP-hard social welfare maximization problem, which runs in polynomial time in expectation based on smoothed analysis. It serves as the basis for resource allocation in (2); (2) a perturbation-based randomized resource allocation scheme that produces a VM provisioning solution achieving $(1-\epsilon)$-optimal social welfare in expectation; and (3) an auction mechanism that applies the
perturbation-based scheme to dynamic VM provisioning, and prices the customized VMs using a randomized VCG payment, which guarantees truthfulness in expectation. Detailed steps are as followed.

%(1) adopting a smoothed randomization algorithm on the social welfare maximization, (2) based on the result of smoothed randomizatoin algorithm sampling the final allocation soluition in another step of randomization, and (3) combining a random Vickrey-Clarke-Groves mechanism to render the auction truthful.

{\em First}, we formulate the social welfare optimization problem as an integer linear program and then prove its NP-hardness. Based on smoothed analysis, we randomly perturb the objective function and the packing constraints following a well designed perturbation framework, and propose an exact dynamic programming based algorithm to solve the perturbed problem. The algorithm finds a feasible solution to the original, unperturbed problem within polynomial running time in expectation. Furthermore, a transformation of this feasible solution yields a fractional solution to the original problem, which achieves $(1-\epsilon)$-optimal social welfare in expectation.

{\em Next}, we design a randomized resource allocation scheme, which outputs the allocation solution of the auction following a well designed distribution over a set of feasible solutions of the social welfare maximization problem, including the feasible solution produced in the above step. By designing the distribution in close connection with the perturbation framework, we are able to show that the expectation of such a randomized solution equals the fractional solution mentioned above, and hence it achieves $(1-\epsilon)$-optimal social welfare in expectation.

{\em Finally}, we combine the randomized resource allocation scheme with a randomized VCG payment, and complete our auction design with truthfulness guaranteed in expectation.

%{\bf Most surprising fact: the contrast between A and B. This is achieved through the most salient feature of this work: to relinquish the power of randomization through the art of calculated perturbation and anti-perturbation. First in the literature that applies a pair of co-related perturbations for algorithm design and auction allocation rule.}

%\vspace{1mm}
Our mechanism design yields some interesting results: (i) For the social welfare maximization problem we formulate, {\bf even if truthful bids are given for free}, no (deterministic) polynomial-time algorithm can guarantee $(1-\epsilon)$-approximation for arbitrarily tunable $\epsilon$ \cite{Gens:1980:CAA:1008861.1008867}. (ii) The (randomized) VM auction designed in our work is both polynomial-time and $(1-\epsilon)$-optimal in expectation, and can simultaneously elicit truthful bids from selfish cloud users.

%\vspace{1mm}

The strong properties above guaranteed by our VM auction are made possible by unleashing the power of randomization, through the art of calculated random perturbation (for computational efficiency) and associated perturbation (for truthfulness). While there exists separate literature on applying randomization for efficient algorithm design and for truthful mechanism design respectively, to the best of our knowledge, this work is the first of its kind that applies the same carefully prepared randomization scheme twice in subtly different forms and in a coordinated fashion to achieve polynomial algorithm complexity and truthful mechanism design in the same auction framework. We believe that this new technique can be generalized to be applicable to a rich class of combinatorial auctions in which social welfare maximization can be modeled as a linear integer program that is NP-hard (otherwise our technique is unnecessary) but not too hard (which admits a smoothed polynomial time algorithm) to solve.

%The highlight in the design is to exploit the potency of ``double-deck'' randomization to achieve computation efficiency and truthfulness in expectation simultaneously. The randomization adopted in the design of our smoothed exact algorithm guarantees the polynomial running time in expectation while randomly sampling the final allocation solution for provisioning the resources to users obtains a near optimal social welfare. Through the overall mechanism construction, some interesting and beautiful property of randomization in total different two steps elicits truthfulness of the auction in a simple and natural way. And even more surprisingly, a theoretical contribution for strongly NP-hard problems from our mechanism design is that randomization may play a crucial role in producing $(1-\epsilon)$-approximation algorithms with expected polynomial time complexity for those NP-hard problems without deterministic, fully polynomial-time approximation schemes (FPTAS). We validate the efficacy of the mechanism through careful theoretical analysis and trace-driven simulations.

We discuss related work in Sec.~\ref{sec:related} and present the system model in Sec.~\ref{sec:model}. Sec.~\ref{sec:auction} gives the complete auction design. Sec.~\ref{sec:simulation} presents trace-driven simulation studies and Sec.~\ref{sec:conclusion} concludes the paper.

\section{Related Work}
\label{sec:related}

Resource provisioning in cloud computing has been extensively studied with different focuses. %of investigation.
Beloglazov {\em et al.}~\cite{Beloglazov2012755} aim at minimizing the energy consumption in computing task scheduling. Alicherry {\em et al.}~\cite{Alicherry:2012ej} study VM allocation in distributed cloud systems, taking into consideration the communication cost. Joe-Wong {\em et al.}~\cite{JoeWong:2012bc} seek to balance efficiency and fairness for allocating resources of multiple types. None of them however focus on dynamic VM assembly and provisioning, which is the focus of our work.

Auction mechanisms have been applied to achieve efficient resource allocation in cloud systems. %The VCG mechanism is a celebrated and general truthful auction framework which gives a well-designed payment rule by optimally solving the social welfare maximization. Unfortunately, exactly solving NP-hard social welfare maximizations fails the computational feasibility but approximation algorithms violates the truthfulness \cite{Mu'alem2008612}.
Zaman {\em et al.}~\cite{Zaman:2012:CAM:2310096.2310159} design a truthful auction based on an approximation algorithm for resource allocation, but without proving the performance of the resource allocation algorithm.
Zaman {\em et al.}~\cite{Zaman:2013:CAA:2442165.2442286} also presents an auction-based VM allocation but focuses on static resource provisioning and only guarantees a large approximation ratio. 
Wang {\em et al.}~\cite{Wang:2012kt} propose a truthful VM auction based on a greedy allocation algorithm and a well-designed payment method; the derived allocation solution approximates the optimal solution with an approximation ratio which depends on the number of VMs. Zhang {\em et al.}~\cite{Zhang:2013bi} and Wang {\em et al.}~\cite{Wang:2013dy} design online cloud auctions but they only consider a single type of VM instances, ignoring dynamic provisioning of different VMs. Similar to our work, Zhang {\em et al.}~\cite{Zhang:2014jo} and Shi {\em et al.}~\cite{Shi:sigmetrics14} address dynamic VM provisioning, and design truthful auctions by applying an LP decomposition technique, which achieve 2.72- and 3.30-approximation of optimal social welfare, respectively.   
Mashayekhy et al.~\cite{Mashayekhy:ptas} also consider heterogeneous resources in a cloud auction. Their proposed allocation rule is proved to be a PTAS, which finds a partial allocation solution first and then allocates through dynamic programming the remaining resources to unprovisioned users. Unfortunately, the running time of their PTAS mechanism can be exponential in $1/\epsilon$, where $\epsilon$ denotes the approximation error.
Shi {\em et al.}~\cite{wshi-iwqos14} design online VM auctions via a pricing-curve-based method. However, the resources which have been allocated and paid by the users could be cancelled by the cloud provider in their problem model, which is not practical and user-friendly. One of the latest VM auctions proposed by Zhang {\em et al.}~\cite{xzhang:sigmetrics15} investigates social welfare and profit maximization in online VM auctions, where the future demand of each type of resource could be reserved in a customized amount and duration. In their work, the performance of the mechanisms relies on an assumption that each user's demand for each resource is extremely small compared to the corresponding capacity. Unlike theirs, we study the auction in an offline version where the small bid assumption is much milder than theirs.  
Additionally, this work departs from the existing literature by applying smoothed analysis and randomized reduction techniques to design a randomized auction, which achieves much better approximation to the optimal solution, {\em i.e.}, $(1-\epsilon)$-optimal social welfare (where $\epsilon$ can be very close to zero), while retaining truthfulness and computation efficiency in expectation.

%initially proposes a smoothed algorithm \cite{Roglin:2009ku} exactly and efficiently in practice to solve the social welfare maximization and skillfully combine randomization with the perturbation in smoothed algorithm to achieve truthfulness. The surprising result is that $(1-\epsilon)$-approximation allocation solutions can still be truthful combined with classical VCG payment rule if randomization is introduced.
%except for the work from Zhang {\em et al.}~\cite{linquan:infocom2014} and Shi {\em et al.}~\cite{Shi:2014iw}. Similar to our work, Zhang {\em et al.}~and Shi {\em et al.}~ study the providers’ ability to dynamically assemble VMs and propose truthful auctions which achieve a competitive ratio of $2.72$ and $3.30$ in social welfare, respectively, in typical scenarios only. This work significantly improves the efficiency of the auction for dynamic VM provisioning by achieving a $1-\epsilon$ social welfare optimality together with truthfulness in expectation and computation efficiency with high probability.

A key technique we adopt in this paper is a novel use of smoothed analysis in designing an algorithm to produce a good solution to the social welfare maximization in polynomial time in expectation. Smoothed analysis is a technique for analyzing the time complexity of an algorithm for an NP-hard problem, that exactly solves a perturbed instance of the problem based on a small, random perturbation, in order to show %initially for solving linear programs by Spielman and Teng
that the algorithm can be efficient in expectation despite its possible worst-case complexity \cite{Spielman:2001:SAA:380752.380813}. It has been argued that complexity analysis on the expectation over some distribution of the instances is more convincing than that of the average case, and more practical than that of the worst case \cite{Spielman:2001:SAA:380752.380813}. Smoothed analysis has been applied recently in areas such as combinatorial programming \cite{Roglin:2009ku}, computational geometry \cite{Arthur:2006hh}, game analysis \cite{Chen:2006ii}.
% Beier and V{\"o}cking \cite{Beier:2004es} and R{\"o}glin {\em et al.}~\cite{Roglin05smoothedanalysis}\cite{Roglin:2009ku} study the smoothed complexity of combinatorial programming. Arthur {\em et al.}~\cite{Arthur:2006hh} and Chaudhuri {\em et al.}~\cite{Chaudhuri:2009jh} apply smoothed analysis in computational geometry. Chen {em et al.}~\cite{Chen:2006ii} present the smoothed analysis of the algorithmic complexity of Nash Equilibria. In the field of machine learning, Manthey{\em et al.}~\cite{Manthey:2009uf} gives the polynomial smoothed complexity of Lloyd's algorithm for the fundamental K-means problem. Blum and Dunagan’s analysis \cite{Blum02smoothedanalysis} of the perceptron algorithm for linear programming implicitly contains results of interest in machine learning.
Dughmi {\em et al.}~\cite{conf:focs:DughmiR10} focus on social welfare maximization problems with an FPTAS, and design a randomized reduction method to convert the FPTAS into a truthful mechanism. Unlike theirs, our NP-hard social welfare maximization problem does not have a deterministic FPTAS; even so, we are able to show, surprisingly, that we can still achieve a randomized, truthful, $(1-\epsilon)$-approximation mechanism with expected polynomial complexity by applying the carefully designed permutation framework.

\section{System Model}
\label{sec:model}

\subsection{Cloud Resource Aucion}
We consider an IaaS cloud system providing a pool of geo-distributed resources ({\em e.g.,} CPU, RAM and Disk storage) through servers deployed in multiple data centers. The different types of resources are packed into heterogeneous VMs for lease to cloud users. Suppose there are $D$ data centers in the cloud and $M$ types of VMs in total can be assembled to offer (which can be potentially a very large number)\footnote{Note that we allow flexible VM assembly on demand and the numbering of VM types is purely for the ease of presentation.} composed of $K$ types of resources. Let $[X]$ denote the set of integers $\{1,2,\ldots,X\}$, $d$ index the data centers in the cloud system and $m$ index the types of VM.  Each VM of type $m\in [M]$ consumes an $r^k_m$ amount of type-$k$ resource, for all $k\in[K]$. Each data center $d\in [D]$ has a capacity $c_{dk}$ for each resource of type $k\in[K]$.
%The consisting of $D$ geo-distributed data centers. The cloud provider offers users $K$ types of resources, including CPU, RAM and Disk storage, to be packaged as VMs for lease. The overall capacity of resource of type $k\in[K]$ in data center (DC) $d\in [D]$ is $c_{kd}$.

The cloud provider acts as an auctioneer and sells custom-made VMs to $W$ cloud users through auctions. %Assume $M$ is the maximum number of VM types that the users may possibly request to run their jobs, and each VM of type $m\in [M]$ consumes an $r^k_m$ amount of type-$k$ resource, for all $k\in[K]$. 
The cloud users request resources in the form of VMs through bidding in the auction. Let $N$ be the total number of bids submitted by all the cloud users. We use $i\in[N]$ to index the bids. 
Each cloud user $w \in  [W]$ is allowed to submit multiple bids, but at most one bid can be successful. 
This assumption is reasonable given that any need for concurrently acquiring VM bundles in two or more bids can be expressed as a separate bid with a combined bundle. 
Let $\mathcal{B}_w$ denote the set of bids submitted by user $w$. %Thus, we have $0\le |\mathcal{B}_w|\le N$.
Each bid $B_i$ contains a list of requested VMs of different types and location preferences, {\em i.e.}, $q_{md}^i$ VMs of type-$m$ in data center $d$, $\forall m\in[M], d\in[D]$, and a bidding price $b_i$, {\em i.e.}, the reported valuation of the resource combination required in $B_i$.
%More specifically, bid $i$ requests for a number of $q_{md}^i$ VMs of type-$m$ in data center $d$. If the bid is accepted, the user is willing to pay at most $b_i$ to the cloud provider. In this sense, $b_i$ represents the reported valuation of the resource combination required in $B_i$. Putting together, 
More specifically, each bid $i\in[N]$ can be formulated as follows:
\begin{equation*}
B_i=\{b_i, \{q_{md}^{i}\}_{m\in[M], d\in [D]}\}.
\end{equation*}
\vspace{-2mm}
%of different types with the corresponding data center that , as well as the bidder's reported valuation for this bundle of VMs. More specifically, 

%there is an $(MD+1)$-tuple of elements in a bid: $q_{md}^{i},\forall m\in [M], \forall d\in [D]$, and $b_i$, where $q_{md}^{i}$ is the number of type-$m$ VM instances the corresponding bidder requires in data center $d$, and $b_i$ is the bidder's reported valuation for this bundle. 
%
%To further simplify the model, we extract the resource information from each bid as follows. Recall $r^k_m$ is the amount of type-$k$ resource packed in each VM of type $m$. Once bid $i$ is submitted, we derive the required type-$k$ resource in data center $d$ of user $i$ by calculating $R_{i}^{kd}=\sum_{m=1}^{M}q_{md}^{i}r_m^k$.
\noindent For ease of problem formulation, we use $R_{i}^{kd}=\sum_{m=1}^{M}q_{md}^{i}r_m^k$ to denote the overal amount of type-$k$ resource required by bid $i$ in data center $d$.
%Each cloud user $w \in  [W]$ is allowed to submit multiple bids, but at most one bid can be successful. 
%This assumption is reasonable given that any need for concurrently acquiring VM bundles in two or more bids can be expressed as a separate bid with a combined bundle. 
%Let $\mathcal{B}_w$ denote the set of bids submitted by user $n$. We have $0\le |\mathcal{B}_w|\le N$.

%  Define $i\in \mathcal{B}_u$ to indicate that $B_i$ is submitted by the user of number $n$. 

Upon receiving user bids, the cloud provider computes the outcome of the auction, including (i) the resource allocation scheme, $\vec{x}=\{x_1,...,x_N\}$, where binary variable $x_i$ is $1$ if bid $i$ is successful and $0$ otherwise, and (ii) a payment $p_i$ for each winning bid $i$.
%We assume that the VM demands in each bundle cannot be supplied partially, i.e., the cloud provider either provides all the required VMs in a bundle to the bidder or rejects the bundle. 
Let $v_i$ denote the true valuation of the bidder submitting bid $i$. The utility $u_i$ acquired due to this bid is then:
\begin{equation*}
u_i(B_i,\mathcal{B}_{-i})= \left\{\begin{array}{l l}
    %v_i-p_i & \quad \text{if $B_i$ is accepted}\\
	v_i-p_i & \quad \text{if $B_i$ is accepted}\\
    0 & \quad \text{otherwise}
  \end{array}\right.
\end{equation*}

\noindent where $\mathcal{B}_{-i}$ is the set of all bids in the auction except $B_i$.
%Let $v_i$(x) denote the true valuation of cloud user i, known only to $i$ itself. Let $\Pi_i$ be the price charged to a winning user i.
An illustration of the system is given in Fig.~\ref{illustration:auction}.

%To better illustrate the auction procedure, we give an example in  Suppose two users participate the auction to bid for 3 types of resources from 3 data centers. User 1 submits 2 bids in a bundle, which are indexed by $B_1$ and $B_2$. $B_1$ requests for 2 CPUs and 4 RAMs in data center 1, along with 1 CPU and 2 RAMs in data center 2. The highest payment (or the bidding price)  offered by this bid is $\$6$. $B_2$ bids $\$9$ for 3 CPUs and 6 RAMs in data center 1, along with 1 CPU and 2 RAMs in data center 3. User 2 aims to acquire three types of resources all from data center 2. She reports $\$14$, $\$17$, and $\$20$ in three bids $B_3$, $B_4$ and $B_5$ respectively, each requests for different amount of the resources. The cloud provider makes decisions after he receives all the bids from the two users. The two bids of user 1 are both rejected by the cloud provider ({\em i.e.,} $x_1=x_2=0$) while the first bid of user 2 is accepted ({\em i.e.,} $x_3=1, x_4=x_5=0$), with the correspondingly required resources provisioned. A bill is also ready for user 2. Note that the payment $p_3$ is lower than the bidding price $b_3$ that user 2 offers, which is due to the non-negative utility guarantee for each bidder in the auction contract.  

We summarize important notations in Table \ref{tab:notation} for ease of reference.

%\chuan{correct the notation table to ensure all notation is consistent with what is introduced in text, e.g., $\theta$}

\begin{figure}[!t]
\begin{center}
  \includegraphics[width=0.48\textwidth]{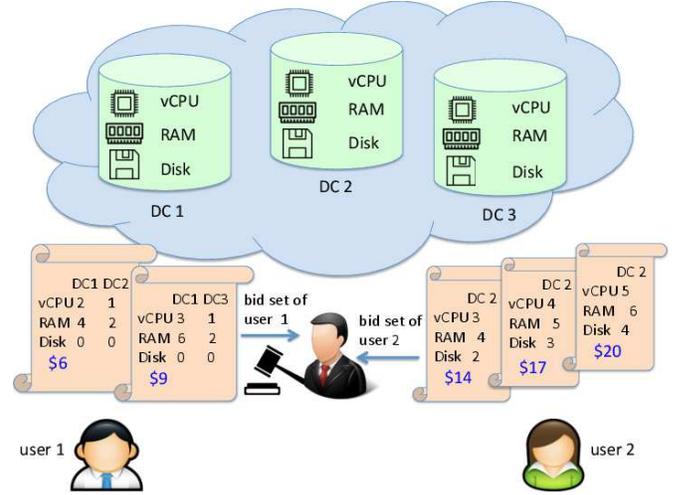}
  \caption{An illustration of the VM auction.}% with Different Bid Numbers per User}
  \label{illustration:auction}
\end{center}
\end{figure}

\begin{table}[t!]
\caption{Notation}
\label{tab:notation}
\centering
\begin{tabular}{|r|l|l|l|}
\hline
$W$  & $\#$ of users &$N$ & $\#$ of bids \\
\hline
$M$  & $\#$ of VM types &$K$ & $\#$ of resource types \\
\hline
$D$  & $\#$ of data centers& $B_i$ & the $i$th bid \\
\hline
$v_i$ & true valuation of bid $i$ & $u_i$ & utility of bid $i$\\
\hline
$P$ & perturbation matrix & $\mathcal{B}_w$ & bid set of user $w$\\
\hline
${b}_i$ & bidding price of bid $i$ & $\hat{b}_i$ & perturbed ${b}_i$\\
%\hline
\end{tabular}
\begin{tabular}{|r|l|}
\hline
%$\vec{b}$ & bidding price vector consist of $b_i$\\
%\hline

%$\vec{\hat{b}}$ & perturbed bidding price vector consist of $\hat{b}_i$\\
%\hline

%$\vec{b}_{-i}$ & bidding price vector with $b_i=0$\\
%\hline
$\epsilon$ & parameter in $(0,1)$\\
\hline

$r_m^k$   & amount of type-$k$ resource in a type-$m$ VM \\
\hline

$q^i_{md}$ & $\#$ of type-$m$ VMs in DC $d$ requested in bid $i$ \\
\hline

$R_{i}^{kd}$ & demand of type-$k$ resource in DC $d$ in bid $i$\\
\hline

$\hat{R}_{i}^{kd}$ & perturbed $R^{kd}_i$\\
\hline

%$\vec{R}$ & vector consist of $R_{i}^{kd}$\\
%\hline

%$\vec{c}$ & vector of $c_{kd}$\\
%\hline

$c_{kd}$  & capacity of type-$k$ resource in DC $d$ \\
\hline

$s(\vec{x})$ & social welfare under allocation solution $\vec{x}$ \\%and $\vec{b}$\\
\hline

%$\bar{s}(\vec{x})$ & social welfare under allocation solution $\vec{x}$ and $\bar{\vec{b}}$\\
%\hline

$C_{kd}(\vec{x})$ & demand for type-$k$ resource in DC $d$ under $\vec{x}$\\
\hline

$\theta^j_i$ & parameter in $[0,\epsilon/N]$\\
\hline

%$\vec{\theta}$ & random vector consist of $\theta_i$\\
%\hline

$\Omega(\vec{x})$ & distribution based on $\vec{x}$, $\epsilon$ and $\vec{\theta}$\\
\hline

%$\mathcal{R}$ & compact set of all the $\Omega(\vec{x})$\\
%\hline

%$\mathcal{T}$ & set of all feasible solutions of ILP (\ref{eqn:socialwelfare})\\
%\hline

%$\vec{l}_i$ & feasible solution with $l_i=1$ and $l_{i'}=0,\forall i'\ne i$.\\
%\hline

%$l_i$ & the $i$th entry of $\vec{l}_i$\\
%\hline

%$\vec{x}$ & any feasible solutions of ILP(\ref{eqn:socialwelfare})\\
%\hline 

$x_i$ & to accept (1) or reject (0) bid $i$\\
\hline

$\vec{x}^{\ast}$ & optimal allocation solution of ILP (\ref{eqn:socialwelfare})\\
\hline

$\vec{x}^p$& optimal allocation solution of ILP (\ref{eqn:socialwelfare_p})\\
\hline

$\vec{x}^f$ & fractional solution perturbed from $\vec{x}^p$\\
\hline

%$\vec{y}$ & feasible solution sampled over $\Omega(\vec{x})$\\
%\hline

$\vec{y}^{\epsilon}$ & auction's final allocation solution\\
\hline

%$y_i^{\epsilon}$ & the $i$th entry of $\vec{y}^{\epsilon}$\\
%\hline

%$\vec{x}^p_{-i}$ & optimal solution of ILP (\ref{eqn:socialwelfare_p}) under $\vec{b}_{-i}$\\
%\hline

%$\vec{y}_{-i}^{\epsilon}$ & auction's final allocation solution under $\vec{b}_{-i}$\\
%\hline

$p_i(\vec{y}^{\epsilon})$ & payment of bid $i$ under $\vec{y}^{\epsilon}$\\
\hline

%$p_i^{\prime}$ & VCG-like payment\\
%\hline

%$\vec{p}$ & payment vector consist of $p_i(\vec{y}^{\epsilon})$\\
%\hline
\end{tabular}
\end{table}
\subsection{Goals of Mechanism Design}
We pursue the following properties in our mechanism design. (i) {\em Truthfulness}: The auction mechanism is truthful if for any user $n$, declaring its true valuation of the VM bundle in each of its bids always maximizes its utility, regardless of other users' bids. Truthfulness ensures that selfish buyers are automatically elicited to reveal their true valuations of the VMs they demand, simplifying the bidding strategy and the auction design.
%This simplifies analysis of the resulting auction in theory, and increases the predicability of auction outcomes in prac- tice.
(ii) {\em Social welfare maximization}: The social welfare is the sum of the cloud provider's revenue, $\sum_{w \in  [W]} \sum_{i\in \mathcal{B}_w}p_ix_i$, and the aggregate users' utility $\sum_{w \in  [W]} \sum_{i\in \mathcal{B}_w}(v_i-p_i)x_i$. Since the cloud provider's revenue and the payment from the users cancel out, the social welfare is equivalent to the overall valuation of the winning bids $\sum_{w \in  [W]} \sum_{i\in \mathcal{B}_w}v_ix_i$, which equals $\sum_{w \in  [W]} \sum_{i\in \mathcal{B}_w}b_ix_i$ under truthful bidding. Different from existing work that achieve only approximate social welfare optimality with a ratio much larger than $1$, we seek to achieve $(1-\epsilon)$-optimality where $\epsilon$ is a tunable parameter that can be arbitrarily close to $0$. (iii) {\em Computational efficiency}: A polynomial-time resource allocation algorithm is desirable for the auction to run efficiently in practice. Our auction mechanism leverages the power of randomization to break through the inapproximability barrier of the social welfare maximization problem which does not have a deterministic FPTAS. Consequently, we target polynomial time complexity of the mechanism {\em in expectation}.
 %In our mechanism design, we seek to achieve close-to-optimal social welfare, computed by an NP-hard combinatorial optimization problem (given below), and hence introduce a slight tradeoff in computation efficiency: the running time of the allocation algorithm is polynomial in expectation, rather than absolutely polynomial.

%We desire to design a truthful auction, thus the bidding price $\hat{b}(\mathcal{S}_i)$ should be equal to the true valuation $v_i$. 
Next, we formulate the social welfare maximization problem, which gives rise to the optimal resource allocation solution for the cloud provider to address users' VM demands, assuming truthful bidding is guaranteed. %Here $R_{i}^{kd}=\sum_{m=1}^{M}q_{md}^{i}r_m^k$ denotes the overall demand of type-$k$ resource in data center $d$ in bid $i$.
%Naturally, we assume that users know the resource capacities such that they won't request too many VMs in their bids, $i.e.,\forall k\in [K], \forall i \in [N], R_i^{kd}< c_{kd}$.

{\small
\begin{equation}
{\rm maximize} \hspace*{2mm} \sum_{w \in  [W]} \sum_{i\in \mathcal{B}_w}{b}_ix_i \label{eqn:socialwelfare}
\end{equation}

subject to:\\
\begin{align*}
\sum_{w \in  [W]}\sum_{i\in \mathcal{B}_w} x_i R_{i}^{kd}\leq c_{kd}, &&\forall k\in [K], \forall d \in [D], ~~~~~~~~~(\ref{eqn:socialwelfare}a)\\
\sum_{i\in \mathcal{B}_w}x_i\leq 1, &&\forall w \in  [W], ~~~~~~~~~(\ref{eqn:socialwelfare}b)\\
x_i\in \{0,1\}, &&\forall i\in \mathcal{B}_w,\forall w \in  [W]. ~~~~~(\ref{eqn:socialwelfare}c)\\
\end{align*}
}
\noindent Constraint (\ref{eqn:socialwelfare}a) states that the overall demand for each type of resource in the winning bids should not exceed the overall capacity of the resource in each data center. Constraint (\ref{eqn:socialwelfare}b) specifies that each user can win at most one bid.

\begin{theorem}	\label{thm:nphard}
 The social welfare maximization problem defined in the integer linear program (ILP) (\ref{eqn:socialwelfare}) is NP-hard and %APX-hard, {\em i.e.}, 
there does not exist a deterministic FPTAS %and even a deterministic PTAS 
for the problem.
\end{theorem}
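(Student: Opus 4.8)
\emph{Proof proposal.} The plan is to prove both halves by reduction, using increasingly strong source problems. First, for NP-hardness alone, I would reduce from \textsc{0-1 Knapsack}: instantiate ILP~(\ref{eqn:socialwelfare}) with one data center ($D=1$), one resource type ($K=1$), and a single bid per user, so that constraint~(\ref{eqn:socialwelfare}b) becomes vacuous; then identify $c_{11}$ with the knapsack capacity, $R_i^{11}$ with the item weights, and $b_i$ with the item profits. The resulting instance is literally \textsc{0-1 Knapsack}, so the social welfare maximization problem is NP-hard. This reduction, however, cannot establish the second claim, since \textsc{0-1 Knapsack} does admit an FPTAS; ruling out an FPTAS requires a source problem that is NP-hard \emph{in the strong sense} and whose optimum is small.

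For the no-FPTAS claim, the idea is to use the multi-resource packing constraint~(\ref{eqn:socialwelfare}a) together with the ``at most one winning bid per user'' constraint~(\ref{eqn:socialwelfare}b) to encode \textsc{3-Partition}. Recall such an instance: $3m$ integers $a_1,\dots,a_{3m}$ with $\sum_i a_i = mB$ and $B/4<a_i<B/2$ for all $i$; decide whether they split into $m$ triples each of sum $B$. I would build an ILP~(\ref{eqn:socialwelfare}) instance with $D=1$, $K=m$ resource types (one per target triple) with capacities $c_{k1}=B$, and for each element $i$ a user with $m$ bids $B_{i,1},\dots,B_{i,m}$, where bid $B_{i,j}$ has value $1$ and, via a suitable choice of VM types, demands exactly $a_i$ units of resource $j$ and nothing else (so $R_{(i,j)}^{k1}=a_i$ if $k=j$ and $0$ otherwise). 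Constraint~(\ref{eqn:socialwelfare}b) makes ``accepting $B_{i,j}$'' mean ``element $i$ goes to triple $j$''. Since the total demand of all bids of all users equals $mB$, the total capacity, a short counting argument using $B/4<a_i<B/2$ shows the optimum of this ILP equals $3m$ if and only if the \textsc{3-Partition} instance is feasible, and is at most $3m-1$ otherwise.

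Given this reduction, the no-FPTAS conclusion follows from the standard gap argument. All numeric data (the $a_i$, $B$, and the values) are polynomially bounded because \textsc{3-Partition} is strongly NP-complete, and the optimum is at most $3m$ (each of the $3m$ users wins at most one value-$1$ bid). Hence running a purported FPTAS with $\epsilon=1/(4m)$, an inverse-polynomial quantity, would terminate in time polynomial in the instance size and, since the objective is integer-valued with a YES/NO gap of at least $1$, would return exactly $3m$ on YES instances and at most $3m-1$ on NO instances --- solving a strongly NP-complete problem in polynomial time, a contradiction unless $\mathrm{P}=\mathrm{NP}$. Equivalently, one may invoke the Garey--Johnson criterion that a strongly NP-hard optimization problem with polynomially bounded optimum has no FPTAS unless $\mathrm{P}=\mathrm{NP}$. (A less self-contained alternative: observe that already $K=2$, $D=1$ contains the $2$-dimensional knapsack problem, which is known not to admit an FPTAS unless $\mathrm{P}=\mathrm{NP}$.)

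The hard part will be getting the no-FPTAS reduction exactly right rather than the surrounding bookkeeping. It is tempting but insufficient to encode just any NP-hard knapsack variant: two-constraint encodings that use exponentially large coefficients are consistent with pseudo-polynomial dynamic programs and scaling-based schemes, so the construction must simultaneously be faithful, keep every coefficient polynomially bounded, and keep the optimum polynomially bounded with a unit gap between the YES and NO cases. The one genuinely non-routine point is verifying the ``optimum $=3m$ iff YES'' equivalence --- in particular that no NO instance can reach objective value $3m$ --- and this is precisely where the $B/4<a_i<B/2$ restriction (forcing every fully loaded resource type to correspond to exactly three elements) does the work.
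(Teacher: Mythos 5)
Your proposal is correct, but it takes a genuinely different and more self-contained route than the paper. The paper's own proof simply rewrites ILP~(\ref{eqn:socialwelfare}) --- folding the per-user constraints~(\ref{eqn:socialwelfare}b) into additional packing rows with unit coefficients and unit capacities --- as an instance of the multi-dimensional knapsack problem, and then cites the known facts that MKP is strongly NP-hard and admits no FPTAS unless $\mathrm{P}=\mathrm{NP}$; no explicit reduction is constructed. (Strictly speaking that only exhibits the problem as a subclass of MKP; the hardness follows because the hard MKP instances, e.g.\ one bid per user, are expressible in the model --- a point the paper leaves implicit and you make explicit.) Your 0-1 knapsack reduction for plain NP-hardness buys the same thing as the paper's citation. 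For the no-FPTAS half, your 3-Partition gap construction is a correct alternative, and your emphasis on where the work happens (unit integrality gap, polynomially bounded optimum, and the role of $B/4<a_i<B/2$ in forcing exactly three elements per saturated resource) is exactly right. The one caveat: that construction requires $K=m$ resource types, i.e., $K$ growing with the instance, whereas the paper's positive results (and its expected running time $O(KDN^{8KD+1}/\epsilon^{2KD})$) only make sense for constant $KD$; in that fixed-dimension regime your 3-Partition argument is silent, and the inapproximability must come from the result you relegate to a parenthesis --- that the 2-dimensional knapsack problem ($K=2$, $D=1$, one bid per user) already has no FPTAS unless $\mathrm{P}=\mathrm{NP}$, which is precisely the result the paper cites. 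So your parenthetical ``alternative'' is in fact the argument that matches both the paper and the parameter regime against which the theorem is meant to be contrasted.
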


The proof is given in \opt{short}{our technical report \cite{FullPaper}}\opt{long}{Appendix \ref{sec:thm:nphard}}.

\section{Auction Design}
\label{sec:auction}
%\section{A Smoothed Randomization to Achieve $(1-\epsilon)$-approximation}

At a high level, our strategy for truthful VM auction design is to apply a randomized VCG-like payment mechanism that works in concert with a randomized allocation algorithm, with the latter achieving optimal social welfare in expectation. Such randomized auctions leverage maximal-in-distributional range (MIDR) algorithms, which are known to be a powerful tool for designing (randomized) truthful mechanisms \cite{journals:siamcomp:DobzinskiD13}. An MIDR algorithm is a randomized allocation algorithm that chooses an allocation solution randomly from a set of feasible solutions of the social welfare maximization problem, following a distribution that is independent of the bidders' bids, and leads to the largest expected social welfare as compared to all other such distributions in a range, {\em e.g.}, the set of distributions over all the feasible solutions.\footnote{To achieve truthfulness of VCG-based mechanisms, there are no additional restrictions on the distributional range or the distributions within the range ({\em e.g.,} the size or specific form) in such MIDR allocation algorithms. In fact, a well-designed range of distributions, as what we will propose, is the key for a better approximation ratio.}
If we can design an MIDR allocation rule, %({\em i.e.}, an MIDR allocation rule),
then we can combine a randomized VCG payment scheme following a similar distribution to obtain an auction mechanism that is truthful in expectation \cite{journals:siamcomp:DobzinskiD13}. To achieve the other two goals of our auction design,  
%the allocation algorithm should run in polynomial time and be $(1-\epsilon)$-optimal in social welfare, both in expectation.
 the allocation algorithm should be $(1-\epsilon)$-optimal in social welfare and have polynomial running time in expectation.

%Next we first design an exact algorithm to solve the social welfare maximization problem in Sec.~\ref{sec:exactalg}. 
%
%Then we apply a perturbation framework to design the randomized allocation algorithm in Sec.~\ref{sec:randalg}, making use of the exact algorithm. 
%, based on a novel application of smoothed analysis techniques. 
%

We next establish the randomized allocation algorithm in two steps. First, we design an exact algorithm based on dynamic programming for solving the social welfare maximization problem in Sec. \ref{sec:exactalg}. Next, we design the randomized allocation algorithm based on a perturbation framework, by running the exact algorithm on a randomized perturbed version of the original maximization problem and sampling the final allocation solution from a distribution. The randomized sampling compensates the perturbation on the problem, done before running the exact algorithm, by transforming the optimal solution of the perturbed problem into an near-optimal-in-expectation solution to the original problem.
As the core of the randomized allocation algorithm, the perturbation rule is carefully designed, in order to lead to a $(1-\epsilon)$ approximation ratio, as well as polynomial running time of the algorithm, both in expectation. %chuan: the following details are not necessary to be discussed here. The former is demonstrated by smoothed analysis in Theoremm \ref{thm:expectedtime}. The latter is proved by making use of an auxiliary algorithm (shown in Table \ref{table:alg_dos}) in Lemma \ref{lem:POPT_vs_OPT}, along with playing an underline trick of the equality $(P\vec{b})^T\vec{x}=\vec{b}^T (P^T\vec{x})$, where $P$ is the perturbation matrix of the objective function. 

We further describe the payment scheme in Sec.~\ref{sec:payment}. %The missing proofs of lemmas and theorems can be found in the appendices. 

\subsection{An Exact Algorithm for Social Welfare Maximization}
%\subsection{A Pareto-based Smoothed Exact Algorithm to Solve Perturbed Social-welfare Problem}
\label{sec:exactalg}

The basic idea of the exact algorithm is to enumerate all the feasible allocation solutions excluding those absolutely ``bad'' ones, and then select the optimal allocation solution $\vec{x}=\{x_i,\forall i\in\mathcal{B}_w,w \in [W]\}$ that achieves maximum aggregate bidding price (corresponding to maximum social welfare under truthful bidding) among the set of ``good'' feasible solutions. The set of ``good'' solutions are defined to be those {\em Pareto optimal} solutions which are not dominated by any other feasible solutions, and the ``bad'' ones are those dominated by at least one Pareto optimal solution. This is in line with classical dynamic programming approaches for enumerating Pareto optimal solutions in traditional combinatorial optimization \cite{Nemhauserullmann}.

Let $s(\vec{x})=\sum_{w \in [W]}\sum_{i\in \mathcal{B}_w}{b}_ix_i$ denote the social welfare under allocation solution $\vec{x}$, and $C_{kd}(\vec{x})=\sum_{w \in [W]}\sum_{i\in \mathcal{B}_w}x_iR_{i}^{kd}$ be the total demand for type-$k$ resource in data center $d$ under $\vec{x}$. %(~ \forall k \in [K], \forall d \in [D])
The Pareto optimal solutions are defined as follows.

%In order to remove ``bad'' solutions such that the number of enumerated solutions will be decreased, we observe that a solution $\vec{x}$ cannot be optimal if it is dominated by another solution.Hence, it suffices to enumerate only those solutions that are not dominated by other solutions, the Pareto optimal solutions. Thus we have the definition for Pareto optimal solutions like this:

\begin{definition}\label{def:pareto}(Pareto Optimal Allocation)
An allocation solution $\vec{x}$ is Pareto optimal if it satisfies all the constraints in ILP (\ref{eqn:socialwelfare}), and there does not exist a feasible solution $\vec{x}^{\prime}$ that dominates $\vec{x}$, {\em i.e.}, $\nexists \vec{x}^{\prime}$ such that $s(\vec{x}^{\prime})\ge s(\vec{x})$ and $C_{kd}(\vec{x}^{\prime})\le C_{kd}(\vec{x}),\forall k\in[K],\forall d \in [D]$, with at least one inequality being strict among the above, as well as $\sum_{i\in \mathcal{B}_w}x'_i\le 1$, $\forall w\in[W]$. 
\end{definition}

We identify all the Pareto optimal solutions using a dynamic programming approach: Let $\mathcal{P}(i)$ be the set of all Pareto optimal solutions when we only consider the first $i$ bids in set $[N]$ (the bids in $[N]$ are ordered in any fashion). Let $i$-dimensional vector $\vec{x}^{(i)}$ denote a Pareto optimal solution in $\mathcal{P}(i)$. We compute $\mathcal{P}(i)$ from $\mathcal{P}(i-1)$, and eventually obtain $\mathcal{P}(N)$ which is the set of Pareto optimal solutions of ILP (\ref{eqn:socialwelfare}).

We show the following property of the Pareto optimal solution sets\opt{short}{:}\opt{long}{, with proof given in Appendix \ref{sec:lem:recursion}.}

\begin{lemma}\label{lem:recursion}
%If $\vec{x}^{(i)}$ does not accept bid $i$, i.e., $\vec{x}^i_i=0,$ then $\vec{x}^{(i)}\in \mathcal{P}(i-1)$. If $\vec{x}^i$ accepts object $i$, i.e., $\vec{x}^i _i=1$, then the solution obtained from $x$ by removing bid $i$ belongs to $\mathcal{P}(i-1)$.
%All the solutions $\vec{x}^{(i)},\forall i\in \{1,...,N-1\}$ are prefixes of the solutions in $\mathcal{P}(N)$.
If $\vec{x}^{(i)}$ is a Pareto optimal solution in $\mathcal{P}(i)$, then the vector obtained by removing the last element $x_{i}^{(i)}$ from $\vec{x}^{(i)}$ is a Pareto optimal solution in $\mathcal{P}(i-1)$, %$\vec{x}^{(i)}\in \mathcal{P}(i)$, then $\vec{x}^i-\vec{x}^i_i\in \mathcal{P}(i-1)$, 
$\forall i=2,\ldots,N$.
\end{lemma}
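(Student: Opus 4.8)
The plan is to argue by contradiction: suppose $\vec{x}^{(i)} \in \mathcal{P}(i)$ but the truncated vector $\vec{z} := (x_1^{(i)}, \ldots, x_{i-1}^{(i)})$ is \emph{not} Pareto optimal in $\mathcal{P}(i-1)$. First I would check that $\vec{z}$ is at least feasible for the first $i-1$ bids: the packing constraints $(\ref{eqn:socialwelfare}a)$ and the user constraints $(\ref{eqn:socialwelfare}b)$ restricted to bids $1,\ldots,i-1$ are inherited directly from the feasibility of $\vec{x}^{(i)}$, since dropping a variable only decreases the left-hand sides and removes at most one constraint-contribution. So $\vec{z}$ is a feasible point of the $(i-1)$-bid problem; if it is not Pareto optimal, then by Definition~\ref{def:pareto} there exists a feasible $\vec{z}'$ for the first $i-1$ bids that dominates it, i.e. $s(\vec{z}') \ge s(\vec{z})$, $C_{kd}(\vec{z}') \le C_{kd}(\vec{z})$ for all $k,d$, and $\sum_{i'\in\mathcal{B}_w} z'_{i'} \le 1$ for all $w$, with at least one inequality strict.

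The key step is then to \emph{lift} $\vec{z}'$ back to an $i$-dimensional vector by appending the same last coordinate: define $\vec{x}' := (z'_1, \ldots, z'_{i-1}, x_i^{(i)})$. I would verify that $\vec{x}'$ is feasible for the full $i$-bid problem: the resource demands satisfy $C_{kd}(\vec{x}') = C_{kd}(\vec{z}') + x_i^{(i)} R_i^{kd} \le C_{kd}(\vec{z}) + x_i^{(i)} R_i^{kd} = C_{kd}(\vec{x}^{(i)}) \le c_{kd}$, and similarly the single-winning-bid constraint for the user $w$ owning bid $i$ holds because $\sum_{i'\in\mathcal{B}_w, i'\le i-1} z'_{i'} + x_i^{(i)} \le \sum_{i'\in\mathcal{B}_w, i'\le i-1} z_{i'} + x_i^{(i)} = \sum_{i'\in\mathcal{B}_w} x^{(i)}_{i'} \le 1$ (using that the first $i-1$ coordinates of $\vec{z}$ agree with those of $\vec{x}^{(i)}$). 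Then, since $s$ is additive, $s(\vec{x}') = s(\vec{z}') + b_i x_i^{(i)} \ge s(\vec{z}) + b_i x_i^{(i)} = s(\vec{x}^{(i)})$, and likewise every $C_{kd}(\vec{x}') \le C_{kd}(\vec{x}^{(i)})$, with at least one of these inequalities strict because the corresponding one was strict for $\vec{z}'$ versus $\vec{z}$ (the appended coordinate contributes identically to both sides of whichever inequality was strict). Hence $\vec{x}'$ dominates $\vec{x}^{(i)}$, contradicting $\vec{x}^{(i)} \in \mathcal{P}(i)$.

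The main obstacle — though it is more bookkeeping than depth — is handling the case split on which inequality is strict for the pair $(\vec{z}', \vec{z})$ and confirming that strictness is preserved under the lift; in particular one must be careful that appending $x_i^{(i)}$ adds the \emph{same} quantity $b_i x_i^{(i)}$ (resp. $x_i^{(i)} R_i^{kd}$, resp. $x_i^{(i)}$ to the relevant user sum) to both of the compared vectors, so a strict inequality stays strict and weak ones stay weak. A second minor point worth stating explicitly is that the user constraint $(\ref{eqn:socialwelfare}b)$ must be tracked per user rather than globally, since appending bid $i$ only affects the constraint of the single user $w$ with $i\in\mathcal{B}_w$; all other users' constraints are untouched. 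Once these are checked, the contradiction is immediate and the lemma follows for every $i = 2, \ldots, N$.
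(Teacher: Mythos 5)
Your proposal is correct and follows essentially the same argument as the paper: assume the truncated vector is dominated by some $\vec{z}'$ in the $(i-1)$-bid problem, append the same last coordinate $x_i^{(i)}$ to $\vec{z}'$, and observe that the lifted vector dominates $\vec{x}^{(i)}$, contradicting its Pareto optimality. Your version simply spells out the feasibility of the truncation and the lift and the preservation of strict inequalities, which the paper's terser proof leaves implicit.
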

\noindent Let $\mathcal{P}(i-1)+1$ denote the set of $i$-dimensional solutions obtained by simply adding $1$ as the $i$th element to each solution vector in $\mathcal{P}(i-1)$ (removing infeasible solutions), and $\mathcal{P}(i-1)+0$ be the set obtained by adding $0$ as the $i$th element. Given Lemma \ref{lem:recursion}, we know that any solution in $\mathcal{P}(i)$ must be contained in set $\mathcal{P}(i-1)+0\cup\mathcal{P}(i-1)+1$. In the algorithm given in Alg.~\ref{alg:pareto}, we start with $\mathcal{P}(1)$, which contains two Pareto optimal solutions $1$ (accept $B_1$) and $0$ (reject $B_1$), if the resource demands in bid $B_1$ do not exceed the respective capacity limits, and contains only one Pareto optimal solution $0$, otherwise. Then we construct $\mathcal{P}(i), i=2,\ldots, N$, by eliminating infeasible or non-Pareto-optimal solutions from $\mathcal{P}(i-1)+0\cup\mathcal{P}(i-1)+1$. Finally, the exact allocation solution of ILP (\ref{eqn:socialwelfare}) is obtained as the solution in $\mathcal{P}(N)$ that achieves the maximum social welfare.

\begin{algorithm}[!t]
	{\label{alg:pareto}}
    \caption{The Exact Algorithm for ILP (\ref{eqn:socialwelfare})}
    \label{alg:pareto}

    \SetKwFunction{StoogeSort}{StoogeSort}
    \SetKwFunction{Swap}{Swap}

    \SetKwData{OneThird}{OneThird}
    \SetKwData{TwoThirds}{TwoThirds}

    \SetArgSty{}
    	{\bf Input:} $\vec{b}, \vec{R}, \vec{c}$\\
		{\bf Output:} exact optimal solution $\vec{x}$\\
		
    	\If {$C_{kd}(\{1\}) \le c_{kd},\forall k\in [K], \forall d\in [D]$}{
		$\mathcal{P}(1)=\{0,1\}$;
		}
		\Else {$\mathcal{P}(1)=\{0\}$;}
		
  		\For {$i=2,..., N$}{
        \For {all $\vec{x}^{(i-1)}\in \mathcal{P}(i-1)$}{
		$\vec{x}^{(i)}=\{\vec{x}^{(i-1)},1\}$;\\
        \If {$\vec{x}^{(i)}$ satisfies Constraints (\ref{eqn:socialwelfare}a) and (\ref{eqn:socialwelfare}b)} {
        Put $\vec{x}^{(i)}$ into $\mathcal{P}(i-1)+1$\;
          }%for
        }%if
        Merge $\mathcal{P}(i-1)+0$ and $\mathcal{P}(i-1)+1$ into $\mathcal{P}(i)^{\prime}$\;
        Prune the solutions dominated by others in $\mathcal{P}(i)^{\prime}$ to obtain $\mathcal{P}(i)=\{\vec{x}^{(i)}\in \mathcal{P}(i)^{\prime}|\nexists \vec{x}^{(i)\prime}\in \mathcal{P}(i)^{\prime}:\vec{x}^{(i)\prime}~\mbox{dominates}~\vec{x}^{(i)}\}$\;

  }%for
  
\Return $\vec{x}=\mbox{argmax}_{\vec{y} \in \mathcal{P}(N)}s(\vec{y})$%$\vec{x}\in \mathcal{P}(N)$ with $s(\vec{x})=\max\{s(\vec{y})|\vec{y} \in \mathcal{P}(N), C_{kd}(\vec{y}) \le c_{kd},\forall k \in [K],\forall d \in [D]\}$
\end {algorithm} 

The computation complexity of the exact algorithm in Alg.~\ref{alg:pareto} is polynomial in the number of Pareto optimal solutions in $\mathcal{P}(N)$, as given in Theorem \ref{thm:boundedtime}, which is based on Lemma \ref{nondecreasingparetoset}. %\opt{long}{The detailed proofs of both are given in Appendix \ref{sec:thm:boundedtime} and \ref{sec:lem:nondecreparetoset}, respectively.}

\begin{lemma}\label{nondecreasingparetoset}
The number of Pareto optimal solutions $|\mathcal{P}(i)|$ does not decrease with $i$, i.e., $|\mathcal{P}(1)| \le ... \le |\mathcal{P}(N)|$.
\end{lemma}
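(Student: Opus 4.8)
The plan is to establish the chain link by link, i.e.\ to show $|\mathcal{P}(i-1)|\le|\mathcal{P}(i)|$ for each $i=2,\ldots,N$ by constructing an injection $g\colon\mathcal{P}(i-1)\hookrightarrow\mathcal{P}(i)$. The one structural handle available is Lemma~\ref{lem:recursion}: deleting the last coordinate of any $\vec{x}^{(i)}\in\mathcal{P}(i)$ produces a member of $\mathcal{P}(i-1)$. Hence $\mathcal{P}(i)$ splits into \emph{fibers} indexed by $\mathcal{P}(i-1)$, the fiber over $\vec{x}^{(i-1)}$ being the subset of $\{\{\vec{x}^{(i-1)},0\},\{\vec{x}^{(i-1)},1\}\}$ that survives pruning in Alg.~\ref{alg:pareto}; each fiber has size $0$, $1$, or $2$, and $|\mathcal{P}(i)|$ is the sum of the fiber sizes. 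It therefore suffices to inject the set $E$ of solutions of $\mathcal{P}(i-1)$ with an empty fiber into the set $F$ of solutions with a size-$2$ fiber, since then $|\mathcal{P}(i)|-|\mathcal{P}(i-1)|=|F|-|E|\ge 0$. A naive choice such as $\vec{x}^{(i-1)}\mapsto\{\vec{x}^{(i-1)},0\}$ does not work: a Pareto solution that already uses a lot of resource can have \emph{both} extensions pruned — its $1$-extension because a data-center capacity is exceeded, its $0$-extension because some leaner solution together with bid $i$ strictly dominates it — so $E$ is genuinely nonempty in general and a re-routing is needed.

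Next I would record when an extension survives. The extension $\{\vec{x}^{(i-1)},0\}$ is always feasible, so it is absent from $\mathcal{P}(i)$ only when it is strictly dominated there; and any dominator must end in $1$, because a dominator ending in $0$ would truncate to a feasible solution strictly dominating $\vec{x}^{(i-1)}$ in $\mathcal{P}(i-1)$, contradicting Pareto optimality. The extension $\{\vec{x}^{(i-1)},1\}$ lies in $\mathcal{P}(i)$ unless it is infeasible (a capacity in Constraint~(\ref{eqn:socialwelfare}a) is violated, or bid $i$'s owner is already served by $\vec{x}^{(i-1)}$) or strictly dominated. Thus for $\vec{x}^{(i-1)}\in E$ there is some $\{\vec{z}^{(i-1)},1\}\in\mathcal{P}(i)$ strictly dominating $\{\vec{x}^{(i-1)},0\}$, with $\vec{z}^{(i-1)}\in\mathcal{P}(i-1)$ by Lemma~\ref{lem:recursion}; among all such dominators I would fix the canonical one, say with lexicographically smallest resource vector $(C_{kd})_{k,d}$ and largest $s(\cdot)$ as a tie-break, and set $g(\vec{x}^{(i-1)})=\vec{z}^{(i-1)}$. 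One then has to check $\vec{z}^{(i-1)}\in F$ (its $1$-extension is in $\mathcal{P}(i)$ by construction; the content is that its $0$-extension is also undominated) and that $g$ is injective — for the latter, using that two distinct members of $\mathcal{P}(i-1)$ are mutually incomparable, so that a single $\{\vec{z}^{(i-1)},1\}$ cannot be the canonical dominator of the $0$-extensions of two different solutions without contradiction.

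The hard part is verifying $\vec{z}^{(i-1)}\in F$. The obstacle is an asymmetry between the coordinate axes: the relation ``$\{\vec{z}^{(i-1)},1\}$ dominates $\{\vec{x}^{(i-1)},0\}$'' controls the resource coordinates exactly — it forces $C_{kd}(\vec{z}^{(i-1)})+R_i^{kd}\le C_{kd}(\vec{x}^{(i-1)})$ for all $k,d$ — but on the welfare axis it only gives $s(\vec{z}^{(i-1)})+b_i\ge s(\vec{x}^{(i-1)})$, so iterated dominations telescope cleanly on resources but not on welfare. My plan is to follow the maximal chain $\{\vec{z}_1^{(i-1)},1\}$ dominates $\{\vec{z}_0^{(i-1)},0\}=\{\vec{x}^{(i-1)},0\}$, then $\{\vec{z}_2^{(i-1)},1\}$ dominates $\{\vec{z}_1^{(i-1)},0\}$, and so on: the resource footprints strictly decrease along it (using that each $R_i^{kd}\ge 0$ with at least one strictly positive, which I would justify from the model, otherwise handling a zero-demand bid as a trivial special case), so the chain is finite, terminates at a solution whose $0$-extension is undominated, hence in $F$, and one argues that this terminal solution is exactly the canonical $g$-image. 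The remaining bookkeeping — distinct Pareto \emph{vectors} sharing the same profile $(s,\{C_{kd}\})$, which the lemma counts separately but which occur only on a measure-zero set of inputs and in any case can be absorbed into the tie-break — is the last loose end. I expect the chain-termination argument, together with the injectivity check, to carry essentially all of the work.
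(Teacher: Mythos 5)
Your instinct that the naive ``substitution'' map is not injective is exactly right --- and, for what it's worth, the paper's own proof \emph{is} that naive argument: it observes that every pruned element of $\mathcal{P}(i-1)+0$ has a dominator in $\mathcal{P}(i-1)+1$ and concludes that a substitute ``always exists,'' without addressing the many-to-one collapse you flag. Unfortunately, the repair you propose cannot be completed, because the injection $E\hookrightarrow F$ you are trying to build does not exist in general: the lemma as stated is false. Take $K=D=1$, capacity $c=5$, three bids from three distinct users with $(b_1,R_1)=(1,4)$, $(b_2,R_2)=(2,5)$, $(b_3,R_3)=(10,3)$. The feasible solutions on the first two bids are $(0,0),(1,0),(0,1)$ with profiles $(s,C)=(0,0),(1,4),(2,5)$, pairwise incomparable, so $|\mathcal{P}(2)|=3$. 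Adding bid $3$, every two-bid combination exceeds the capacity, so the feasible set gains only $(0,0,1)$ with profile $(10,3)$, which dominates both $(1,4)$ and $(0,1)$'s profile $(2,5)$; hence $\mathcal{P}(3)=\{(0,0,0),(0,0,1)\}$ and $|\mathcal{P}(3)|=2<3$. In your language, $E=\{(1,0),(0,1)\}$ (both extensions of each die: the $0$-extension by domination, the $1$-extension by infeasibility) while $F=\{(0,0)\}$, so $|E|>|F|$ and no injection exists. Your chain argument does terminate, but both chains terminate at the same element of $F$, which is precisely the injectivity failure you hoped to rule out; nothing in the structure of Pareto sets prevents one cheap, high-value new solution $\{\vec{z},1\}$ from simultaneously dominating many old solutions whose own $1$-extensions are infeasible or dominated. (One can also arrange a drop in $|\mathcal{P}(i)|$ without invoking infeasibility at all, so the small-bid assumption does not rescue the statement either.)

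So the concrete gap is not a missing step in your construction but the truth of the target: no amount of canonical tie-breaking or chain-following will produce the injection. The honest conclusions are (i) the step ``the canonical dominator lies in $F$'' that you single out as the hard part is exactly the step that fails, and (ii) the lemma needs either additional hypotheses or should be bypassed. For the purpose it serves in the paper --- replacing $\sum_{i<N}|\mathcal{P}(i)|^2$ by $N|\mathcal{P}(N)|^2$ in the running-time bound of Theorem~\ref{thm:boundedtime} --- the monotonicity is not actually needed: each prefix problem on bids $1,\ldots,i$ is itself an instance of the perturbed program, so the smoothed bound of Theorem~\ref{thm:expectednumber} applies to every $E[|\mathcal{P}(i)|^2]$ separately and the expected total work $\sum_i E[|\mathcal{P}(i)|^2]$ is still polynomial. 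If you want a salvageable statement to prove, that is the one to aim at, not Lemma~\ref{nondecreasingparetoset}.
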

% $|\mathcal{P}(i)^{\prime}|=|\mathcal{P}(i-1)|+|\mathcal{P}(i-1)+i|=2|\mathcal{P}(i-1)|.$ And if some solution in $\mathcal{P}(i)^{\prime}$\\
The proof is given in \opt{short}{our technical report \cite{FullPaper}}\opt{long}{Appendix \ref{sec:lem:nondecreparetoset}}.

\begin{theorem}\label{thm:boundedtime}
The computation complexity of Alg.~\ref{alg:pareto} is $O(KDN|\mathcal{P}(N)|^2)$.
\end{theorem}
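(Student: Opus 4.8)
The plan is to bound the total running time of Alg.~\ref{alg:pareto} by accounting for the cost of each of the $N-1$ iterations of the outer loop, and then showing that each iteration costs $O(KD|\mathcal{P}(N)|^2)$. First I would observe that in iteration $i$, the algorithm forms the candidate set $\mathcal{P}(i)'$ from $\mathcal{P}(i-1)+0 \cup \mathcal{P}(i-1)+1$, which has at most $2|\mathcal{P}(i-1)|$ elements. Building $\mathcal{P}(i-1)+1$ requires, for each solution in $\mathcal{P}(i-1)$, appending a $1$ and checking feasibility against Constraints (\ref{eqn:socialwelfare}a) and (\ref{eqn:socialwelfare}b); the feasibility check touches $KD$ resource constraints (plus the single user constraint), so this step costs $O(KD|\mathcal{P}(i-1)|)$. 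Note that checking feasibility incrementally is cheap if we store $s(\vec{x})$ and $C_{kd}(\vec{x})$ alongside each Pareto vector, so appending $x_i=1$ and updating these aggregates costs $O(KD)$ per solution.

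The dominant cost is the pruning step: to extract $\mathcal{P}(i)$ from $\mathcal{P}(i)'$ we must, for each pair of solutions in $\mathcal{P}(i)'$, test whether one dominates the other, where a domination test compares the $KD+1$ quantities $s(\cdot)$ and $C_{kd}(\cdot)$. There are $O(|\mathcal{P}(i)'|^2) = O(|\mathcal{P}(i-1)|^2)$ such pairs, each tested in $O(KD)$ time, giving $O(KD|\mathcal{P}(i-1)|^2)$ for iteration $i$. Here is where Lemma~\ref{nondecreasingparetoset} enters: since $|\mathcal{P}(1)| \le |\mathcal{P}(2)| \le \cdots \le |\mathcal{P}(N)|$, we have $|\mathcal{P}(i-1)| \le |\mathcal{P}(N)|$ for every $i$, so each iteration costs $O(KD|\mathcal{P}(N)|^2)$. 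Summing over the $N-1$ iterations yields $O(KDN|\mathcal{P}(N)|^2)$; the final $\mathrm{argmax}$ over $\mathcal{P}(N)$ in the return statement costs only $O(|\mathcal{P}(N)|)$ (or $O(KD|\mathcal{P}(N)|)$ if $s(\cdot)$ is recomputed) and is absorbed, as is the $O(KD)$ initialization of $\mathcal{P}(1)$.

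I expect the main subtlety — rather than a genuine obstacle — to be in arguing the pruning step carefully: a naive all-pairs comparison is $O(KD|\mathcal{P}(i)'|^2)$, but one must be sure that this suffices (e.g., that no repeated passes are needed, which holds because a single pass marking every dominated solution for removal correctly leaves exactly the Pareto-optimal set, provided ties are broken consistently so that two solutions with identical $(s, \{C_{kd}\})$ vectors do not both survive). One should also confirm that maintaining the auxiliary values $s(\vec{x}^{(i)})$ and $C_{kd}(\vec{x}^{(i)})$ incrementally keeps the per-solution bookkeeping at $O(KD)$ rather than $O(iKD)$, so that no hidden factor of $N$ creeps into the per-iteration bound. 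Everything else is routine summation. It is worth emphasizing that this theorem bounds the complexity purely in terms of $|\mathcal{P}(N)|$, which can be exponential in the worst case; the polynomial-in-expectation claim advertised in the introduction will come later, from bounding $\mathbb{E}[|\mathcal{P}(N)|]$ via smoothed analysis on the perturbed instance.
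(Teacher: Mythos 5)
Your proposal is correct and follows essentially the same argument as the paper: decompose each iteration into the candidate-construction step (linear in $|\mathcal{P}(i-1)|$) and the all-pairs domination pruning step ($O(KD|\mathcal{P}(i)'|^2)$), then invoke Lemma~\ref{nondecreasingparetoset} to bound each $|\mathcal{P}(i)|$ by $|\mathcal{P}(N)|$ and sum over the $N$ rounds. Your additional remarks on incremental maintenance of $s(\cdot)$ and $C_{kd}(\cdot)$ and on tie-breaking in the pruning pass are sensible implementation details but do not change the argument.
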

The proof is given in \opt{short}{our technical report \cite{FullPaper}}\opt{long}{Appendix \ref{sec:thm:boundedtime}}.

The algorithm runs in exponential time in the worst case, since there can be exponentially many Pareto optimal solutions to check in the worst case. In what follows, however, we will show that this exact algorithm is efficient in practice, {\em i.e.}, running in polynomial time in expectation, and can be used as a building block in a perturbation framework for producing a randomized allocation algorithm. % which runs in polynomial time in expectation.

\subsection{The Randomized $(1-\epsilon)$-Approx. Allocation Algorithm}
\label{sec:randalg}

We next design the randomized algorithm to solve the social welfare maximization problem in (\ref{eqn:socialwelfare}) in polynomial time in expectation. The basic idea is to obtain a set of feasible allocation solutions that achieve $(1-\epsilon)$-optimal social welfare in expectation, following a well-designed distribution, and then randomly output an allocation solution from this set following this distribution. To achieve computation efficiency, the set of feasible solutions are to be computed in polynomial time in expectation, including one solution from solving the random perturbation of the social welfare maximization problem, based on smoothed analysis techniques \cite{Spielman:2001:SAA:380752.380813}\cite{Roglin:2009ku}. The random perturbation on the original problem is carefully designed, in close connection with the distribution to sample feasible solutions, to achieve $(1-\epsilon)$-optimal social welfare of (\ref{eqn:socialwelfare}) in expectation. Especially, the most salient feature of algorithm design in this work, as the first in the literature, is to apply a pair of associated random perturbation schemes for smoothed polynomial time algorithm design and for randomized auction design, respectively.
%which is the first in the literature.

%to obtain $(1-\epsilon)$-approximate fractional solution and leverage it as the expectation to sample out the feasible allocation solutions following a well-designed distribution. The $(1-\epsilon)$-approximate fractional solution is not obtained simply by solving ILP (\ref{eqn:socialwelfare}) directly, but by transforming from the optimal integer solution of a randomly perturbed social-welfare problem, which is one of the feasible solutions of the original problem as well. Naturally, the perturbation should be carefully designed to achieve an $(1-\epsilon)$ optimal social welfare in expectation and efficiently computed for the algorithm to work. The design of the perturbation matrix is closely related to the distribution over the feasible allocation solution.

\noindent {\bf Algorithm design.}~Given an arbitrary parameter $\epsilon\in (0,1)$ and $KDN$ random variables $\{\theta^j_1, \theta^j_2, \ldots, \theta^j_N\}_{j\in \{0, \ldots, KD-1\}}$ that are independently and identically chosen from a uniform distribution on the interval of $[0,\frac{\epsilon}{N}]$. %(following any distribution that is not necessarily uniform).  
Let $\vec{\theta}^j=\{\theta^j_1,\ldots,\theta^j_N\}, \forall j\in \{0, \ldots, KD-1\}$. 
Suppose the packing constraints in (\ref{eqn:socialwelfare}a) are ordered. %according to the order of tuple $(k, d)$. For example, constraint (\ref{eqn:socialwelfare}a) with $k=1, d=1$ is sorted as the first constraint while the second constraint is (\ref{eqn:socialwelfare}a) with $k=1, d=2$.  
 Let $j\in [KD]$ index the sorted constraints; thus we can use ${R}^j_i$ to replace ${R}^{kd}_i$ in (\ref{eqn:socialwelfare}a), $\forall j\in \{1,\ldots,KD\}$, and will refer to $j$ as a resource, which represents the corresponding resource ($k$) in the respective data center ($d$).

%chuan: no abuse here We slightly abuse the notation of $j$ to index the random variable $\theta^j_i$ for all $i\in [N]$.
% As defined above, $j$ ranges in $\{0, \ldots, KD-1\}$ where $\theta^0_i$ is the random variable associated with $b_i$, the coefficient in the objective function.
We perturb the bidding price $b_i$ in the objective function and the demand $R^{i}_{kd}$ of the first $K\times D -1$ packing constraints of ILP (\ref{eqn:socialwelfare}) independently to:
%\[
%\textstyle
\begin{align}
&\hat{b}_i=(1-\epsilon/2){b}_i+\frac{\theta^0_i\sum_{i^{\prime}=1}^N{{b}_{i^{\prime}}}}{N}, \forall i\in[N]~~\label{eqn:perturb_b};\\
&\hat{R}^j_i={R}^j_i+\frac{\theta^j_i\sum_{i^{\prime}=1}^N{{R}^j_{i^{\prime}}}}{N}, \forall i\in[N], j\in\{1, \ldots, KD-1\}~~\label{eqn:perturb_R}
%& \text{where}~~U_0=1; U_j=\frac{R^j_{max}}{R^j_{min}},&\forall j\in[KD-1].
\end{align}
%\]

\noindent Here, $\theta^0_i, \forall i\in[N]$, are the random variables associated with $b_i$'s, the coefficients in the objective function, and $\theta^j_i, \forall i\in[N]$, are associated with ${R}^j_i$'s in the $j$th constraint in (\ref{eqn:socialwelfare}a), $\forall j\in\{1, \ldots, KD-1\}$. Note that the last constraint in (\ref{eqn:socialwelfare}a) is not perturbed. We define $\hat{R}^{KD}_i = R^{KD}_i, \forall i\in[N]$, for this unperturbed last constraint in (\ref{eqn:socialwelfare}a).\footnote{In line with the latest smoothed analysis techniques, we adopt the semi-random model \cite{Spielman:2001:SAA:380752.380813}. %\chuan{add citation}.
In this context, for a binary maximization problem with one objective function and multiple packing constraints, the objective function and all the packing constraints except the last one are perturbed due to the reasons explained in Appendix \ref{sec:thm:expectednumber}.} %%%%%%%%Define $R^j_{max}=\max_{i\in[N]}R^j_i$ as the upper bound of the maximal amount of resource $j$ among all bids.
%\chuan{shouldn't it be max demanded resource $j$ in all bids instead of ``allocated''?}.
%\xx{Explain why the last packing constraint is not perturbed.} 
The perturbed social welfare maximization problem is:
%{\small 
%\begin{eqnarray}
%{\rm maximize}  & \sum_{n \in [W]} \sum_{i\in \mathcal{B}_n}\hat{b}_ix_i\label{eqn:socialwelfare_p}\\
%\max DP=\sum_{i=1}^{L}\hat{b}_ix_i
%\mbox{subject to: } & \mbox{constraints (\ref{eqn:socialwelfare}a)(\ref{eqn:socialwelfare}b)(\ref{eqn:socialwelfare}c).}\nonumber
%\end{eqnarray}
%}

{\small
\begin{equation}
{\rm maximize} \hspace*{2mm} \sum_{w \in [W]} \sum_{i\in \mathcal{B}_w} \hat{b}_i x_i \label{eqn:socialwelfare_p}
\end{equation}

subject to:\\
\begin{align*}
&\sum_{w \in [W]}\sum_{i\in \mathcal{B}_w} x_i \hat{R}_i^j\le c_j, \forall j\in \{1, \ldots, KD\}, ~~~~~~~~~(\ref{eqn:socialwelfare_p}a)\\
&(\ref{eqn:socialwelfare}b)~\text{and}~(\ref{eqn:socialwelfare}c)
\end{align*}
}

%\chuan{shouldn't it be $j\in \{1, \ldots, KD\}$ in (\ref{eqn:socialwelfare_p}a)?} Yes.
\vspace{0.5mm}

%The only difference from $IP(4)$ is the perturbed bidding price of each bid, which is obtained by random perturbation working on original bidding price:\\

\noindent According to the perturbation in \eqref{eqn:perturb_b}, let
{\small 
\begin{equation}\label{eqn:perturbmatrix}
P=(1-\epsilon/2)I+\frac{\vec{\theta^0}\vec{1}^T}{N}
\end{equation}
}

\noindent be the perturbation matrix of the objective function, where $I$ is the $N\times N$ identity matrix. Then we can express the perturbation of bidding price vector as $\vec{\hat{b}}=P\vec{b}$. % and the perturbation of $j$th dimension of demand vector as $\vec{\hat{R}}^j = P^j \vec{R}^j$.
We solve the perturbed social welfare maximization problem using the exact algorithm (Alg.~\ref{alg:pareto}), and derive the optimal solution $\vec{x}^{p}$ and optimum value of the perturbed objective function $POPT=\vec{\hat{b}}^T \vec{x}^p$. We will show that the expected running time to solve the randomly perturbed ILP is polynomial in Theorem \ref{thm:expectednumber} and Theorem
\ref{thm:expectedtime}. %Here, we only illustrate the basic idea of the smoothed analysis. 

%In the worst case analysis, the performance of the proposed algorithm is evaluated under exactly the worst case instance(s), constructed by the parameters fixed by an adversary. The worst case(s) means the the case that makes the algorithm perform worst. Nevertheless, the most difficult cases are sometimes very rare in the real world, which makes the worst case analysis far too pessimistic in qualifying the performance of an algorithm.
%Smoothed analysis aims to overcome the over-pessimism of the worst case analysis and the over-simplicity of the average case analysis. It studies the cases where input parameters are drawn with restricted probability every where within a distribution.  

%\xx{Explain the high-level idea that how smoothed analysis yields the expected poly-time.}

Let $\vec{x}^{\ast}$ be the optimal solution of ILP (\ref{eqn:socialwelfare}), and $OPT=\vec{b}^T\vec{x}^{\ast}$ be the optimal social welfare. The following lemma shows that the optimal objective value of the perturbed problem is at least $(1-\epsilon)$-fraction of the optimal social welfare of the original problem, which is very close as long as the perturbation, decided by $\epsilon$, is small enough, under a small bid assumption.  %serves as one of the two critical parts (the other is the sampling part after that) in proving the near optimality of our allocation algorithm.
The proof is given in Appendix \ref{sec:lem:POPT_vs_OPT}.

\begin{lemma}\label{lem:POPT_vs_OPT}
$POPT \ge (1-\epsilon) OPT$, %with small bid assumption that 
 if $\max_{i\in [N], k\in[K], d\in[D]}\frac{R^{kd}_i}{c_{kd}} \le \frac{1}{2KD(2+\frac{1}{\epsilon})}$.
\end{lemma}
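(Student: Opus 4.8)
The plan is to show that $POPT \geq (1-\epsilon)\,OPT$ by exhibiting a single feasible point for the perturbed ILP~(\ref{eqn:socialwelfare_p}) whose perturbed objective value is at least $(1-\epsilon)\,OPT$; since $POPT$ is the optimum of~(\ref{eqn:socialwelfare_p}), this suffices. The natural candidate is a \emph{scaled-down} version of $\vec{x}^{\ast}$, the optimal solution of the original ILP~(\ref{eqn:socialwelfare}). The difficulty is that $\vec{x}^{\ast}$ is $0/1$-valued, so we cannot simply multiply it by a constant less than $1$ and stay integral; instead I would argue that, under the small-bid hypothesis, $\vec{x}^{\ast}$ \emph{itself} is already feasible for the perturbed constraints~(\ref{eqn:socialwelfare_p}a), because the perturbation only inflates each $R^j_i$ by a small relative amount. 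Concretely, $\hat{R}^j_i \le R^j_i (1 + \theta^j_i/N \cdot (\sum_{i'} R^j_{i'}/R^j_i))$ — more usefully, $\sum_i x^{\ast}_i \hat R^j_i = \sum_i x^{\ast}_i R^j_i + \frac{\theta^j_i}{N}\sum_i x^{\ast}_i \sum_{i'} R^j_{i'}$, and each $\theta^j_i \le \epsilon/N$, so the added term is at most $\frac{\epsilon}{N^2}\cdot N \cdot \sum_{i'} R^j_{i'} \le \frac{\epsilon}{N} \sum_{i'} R^j_{i'}$.

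First I would bound the extra demand incurred by the perturbation on constraint $j$ (corresponding to resource $k$ in data center $d$): using $\theta^j_i \in [0,\epsilon/N]$ and the original feasibility $\sum_i x^{\ast}_i R^j_i \le c_j = c_{kd}$, together with the small-bid assumption $R^{kd}_i \le \frac{c_{kd}}{2KD(2+1/\epsilon)}$ for every $i$ (so in particular $\sum_{i'} R^j_{i'}$ is controlled relative to $c_{kd}$ once we know $N$... actually the cleaner route is to bound $\frac{\theta^j_i}{N}\sum_{i'}R^j_{i'}$ per term and sum). The point is that the total perturbed demand $\sum_i x^\ast_i \hat R^j_i$ exceeds $c_{kd}$ by at most a small multiple of $c_{kd}$. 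If it does exceed $c_{kd}$, then $\vec{x}^\ast$ is not directly feasible; here I would instead take the solution $\vec{x}'$ obtained from $\vec{x}^\ast$ and shrink the capacity usage — but since we need integrality, the honest approach used in such arguments is: the perturbed problem's optimum $POPT$ is compared not via $\vec x^\ast$ but via the observation that scaling capacities up, or equivalently noting that $(1-\epsilon/2)$-scaling built into $\hat b_i$ leaves slack. Let me restructure: evaluate the perturbed objective at $\vec x^\ast$, getting $\sum_i \hat b_i x^\ast_i = (1-\epsilon/2)\sum_i b_i x^\ast_i + \frac{\theta^0_i}{N}\sum_i x^\ast_i\sum_{i'} b_{i'} \ge (1-\epsilon/2)OPT$, which already beats $(1-\epsilon)OPT$; the work is entirely in showing feasibility.

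So the key steps, in order, are: (1) write out $\sum_i x^\ast_i \hat b_i \ge (1-\epsilon/2)\,OPT \ge (1-\epsilon)\,OPT$, which is immediate since the perturbation of $\vec b$ is a convex-combination-plus-nonnegative-term; (2) verify $\vec x^\ast$ satisfies the perturbed packing constraints~(\ref{eqn:socialwelfare_p}a) for $j = 1,\ldots,KD-1$: bound $\sum_i x^\ast_i \hat R^j_i = \sum_i x^\ast_i R^j_i + \sum_i \frac{\theta^j_i}{N} x^\ast_i \sum_{i'} R^j_{i'}$; use $\theta^j_i \le \epsilon/N$, $\sum_i x^\ast_i R^j_i \le c_{kd}$, and the small-bid bound on each $R^{kd}_{i'}$ to control $\sum_{i'} R^j_{i'}$ relative to $c_{kd}$, concluding the sum is $\le c_{kd}$; (3) note the last constraint $j=KD$ is unperturbed ($\hat R^{KD}_i = R^{KD}_i$) so $\vec x^\ast$ trivially satisfies it, and constraints~(\ref{eqn:socialwelfare}b),~(\ref{eqn:socialwelfare}c) are unchanged; (4) conclude $\vec x^\ast$ is feasible for~(\ref{eqn:socialwelfare_p}), hence $POPT \ge \sum_i \hat b_i x^\ast_i \ge (1-\epsilon)\,OPT$.

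The main obstacle is step~(2): making the constant $\frac{1}{2KD(2+1/\epsilon)}$ in the small-bid hypothesis come out exactly right. The subtlety is that $\sum_{i'} R^j_{i'}$ ranges over \emph{all} $N$ bids, not just winning ones, so bounding it requires the per-bid cap $R^{kd}_{i'} \le \frac{c_{kd}}{2KD(2+1/\epsilon)}$ but also a bound on how many bids can contribute — which is where one must be careful: in the worst case $N$ can be large, so the bound $\sum_{i'} R^j_{i'} \le N \cdot \frac{c_{kd}}{2KD(2+1/\epsilon)}$ combined with the $\frac{1}{N}$ factor from the perturbation gives exactly $\frac{\epsilon}{N}\cdot N \cdot \frac{c_{kd}}{2KD(2+1/\epsilon)} = \frac{\epsilon\, c_{kd}}{2KD(2+1/\epsilon)}$, and summing this has to telescope against the remaining capacity slack. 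I would track the constant carefully: the factor $2KD$ presumably absorbs a union over the (at most $KD$) constraints if one instead scales $\vec x^\ast$ down uniformly, and the $(2+1/\epsilon)$ pairs with the $(1-\epsilon/2)$ shrinkage in $\hat b$; I expect the intended argument actually scales $\vec x^\ast$ by a factor like $(1-\epsilon/2)$ conceptually and rounds, with the small-bid assumption ensuring the rounding loss is absorbed — but for a clean writeup, the direct-feasibility-of-$\vec x^\ast$ route above, with the constant bookkeeping done honestly in step~(2), is what I would pursue.
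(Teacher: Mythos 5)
Your step (2) is the crux, and it fails: the small-bid assumption does \emph{not} make $\vec{x}^{\ast}$ feasible for the perturbed problem. The perturbation only ever \emph{inflates} the left-hand side of each packing constraint, so
\begin{equation*}
\sum_i x^{\ast}_i \hat{R}^j_i \;=\; \sum_i x^{\ast}_i R^j_i \;+\; \sum_i x^{\ast}_i \frac{\theta^j_i}{N}\sum_{i'} R^j_{i'} \;\le\; c_j + \frac{\epsilon\, c_j}{2KD(2+\frac{1}{\epsilon})},
\end{equation*}
and that upper bound is strictly \emph{larger} than $c_j$, not smaller. Whenever $\vec{x}^{\ast}$ is tight (or nearly tight) on some constraint $j$ --- which is exactly what one expects of an optimal packing --- the perturbed demand exceeds $c_j$ and $\vec{x}^{\ast}$ is infeasible for (\ref{eqn:socialwelfare_p}). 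The small-bid hypothesis bounds each \emph{individual} $R^{kd}_i$ relative to $c_{kd}$; it creates no slack in the aggregate constraint, so no amount of ``constant bookkeeping'' can turn your inequality around. You actually noticed this mid-proposal (``If it does exceed $c_{kd}$, then $\vec{x}^{\ast}$ is not directly feasible'') but then reverted to the direct-feasibility route, which is the step that breaks.

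The missing idea is a \emph{rounding-down} step. The paper constructs an integral $\vec{x}^{\ast-}\le\vec{x}^{\ast}$ by dropping accepted bids until every violated constraint is restored, and the small-bid assumption is used to bound the \emph{welfare lost by dropping}, not to prove feasibility of $\vec{x}^{\ast}$. Concretely: the excess on constraint $j$ is at most $\epsilon R^j_{\max}$ (your own computation, done per-term), so one needs to drop at most $(1+2\epsilon)R^j_{\max}$ worth of perturbed demand per violated constraint (the extra term is the last bid dropped). Dropping bids in increasing order of $\hat{b}_i/\hat{R}^j_i$ guarantees the fraction of perturbed welfare lost in round $j$ is at most the fraction of demand dropped, i.e.\ at most $\frac{(1+2\epsilon)R^j_{\max}}{c_j}$; summing over at most $KD$ constraints and invoking $\frac{R^j_{\max}}{c_j}\le\frac{1}{2KD(2+\frac{1}{\epsilon})}$ bounds the total loss by $\epsilon/2$. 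This $\epsilon/2$ is then multiplied against the $(1-\epsilon/2)$ factor you correctly identified from the $\hat{b}$ perturbation, giving $(1-\epsilon/2)^2\ge 1-\epsilon$ --- which also explains why the lemma claims only $(1-\epsilon)$ rather than the $(1-\epsilon/2)$ your step (1) alone would suggest: the other half of the budget is consumed by the dropping step you omitted.
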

The small bid assumption stated in Lemma \ref{lem:POPT_vs_OPT} 
essentially requires that the demand in each bid for each type of resource in each desired datacenter is small as compared to the corresponding resource capacity, which is easy to justify in real systems. Moreover, if a smaller $\epsilon$ is chosen, the assumption becomes stronger, {\em i.e.,} the ratio of the largest demand among the bids for each type of resource over the resource capacity is required to be smaller.

Lemma \ref{lem:POPT_vs_OPT} gives that $POPT=(P\vec{b})^T\vec{x}^p=\vec{b}^T(P^T\vec{x}^p)\ge (1-\epsilon) OPT$. We can obtain a potential solution $\vec{x}^f=P^T\vec{x}^p$ to the original problem,
%easily from $\vec{x}^p$ in polynomial time,
 which achieves $(1-\epsilon)$-optimal social welfare. %We can see that we can construct a solution $\vec{x}^f=P^T\vec{x}^p$ easily from $\vec{x}^p$ in polynomial time, which preserves the $(1-\epsilon)$-approximation to OPT of the original objective function. 
However, the bad news is that $\vec{x}^f$ may well be fractional due to the fractional entries in $P^T$, and hence not a feasible solution of ILP (\ref{eqn:socialwelfare}) (not to mention whether it satisfies other constraints in (\ref{eqn:socialwelfare}) or not). We hence cannot directly use $\vec{x}^f$ as the allocation solution to our social welfare maximization problem (\ref{eqn:socialwelfare}), but design a random sampling approach to produce a feasible allocation solution from a set of feasible solutions of (\ref{eqn:socialwelfare}) following a well-designed distribution, such that the expectation of the randomly produced solution is $\vec{x}^f$, which achieves $(1-\epsilon)$-optimal social welfare in expectation. 

Let $\vec{l}_i$ denote a solution of (\ref{eqn:socialwelfare}) that accepts only the $i$th bid and rejects all the other bids, {\em i.e.}, $l_i^i=1$ and $l_i^{i'}=0,\forall i'\ne i$. We can easily see that $\vec{l}_i,\forall i\in N$, are feasible solutions to (\ref{eqn:socialwelfare}). Note that $\vec{x}^p$ is a feasible solution to (\ref{eqn:socialwelfare}) as well, since all the perturbed coefficients of the packing constraints in ILP (\ref{eqn:socialwelfare_p}) are no smaller than those of ILP (\ref{eqn:socialwelfare}). The set of feasible solutions to sample from hence is $\{\vec{x}^p,\vec{l}_1, \ldots, \vec{l}_N, \vec{0}\}$, where $\vec{0}$ is a $N$-dimensional all-zero vector. The final allocation solution of  (\ref{eqn:socialwelfare}), denoted by $\vec{y}^{\epsilon}$, is randomly produced following the distribution $\Omega(\vec{x}^p)$ below: % Given the random variables $\theta_1,...,\theta_L$ fixed before algorithm A and $\vec{x}^{p}$ output by algorithm A.
%\chuan{correct $\theta_j$ in (\ref{eqn:optimaldistribution}) and (\ref{eqn:expect_y}) and Alg. 2}
\begin{equation}\label{eqn:optimaldistribution}
%\vec{y}^{\epsilon}\sim D(\vec{x}^p)
{\small
\Omega(\vec{x}^p)= \left\{ 
\begin{array}{l l l l}
Pr[\vec{y}^{\epsilon}=\vec{x}^{p}]=1-\epsilon/2,  \\
Pr[\vec{y}^{\epsilon}=\vec{l}_i]=\frac{\sum_{j=1}^N\theta^0_j {x}^p_j}{N}, \forall i\in \{1,...,N\},\\
Pr[\vec{y}^{\epsilon}=\vec{0}]%\\
=1-Pr[\vec{y}^{\epsilon}=\vec{x}^{p}]-\sum_{i=1}^N Pr[\vec{y}^{\epsilon}=\vec{l}_i].\\
\end{array} \right.}
\end{equation}

\noindent We can verify that the probabilities of all candidate solutions are positive and sum up to exactly $1$. We then have that the expectation of $\vec{y}^{\epsilon}$ is

\begin{equation}
{\small
E[\vec{y}^{\epsilon}]=(1-\epsilon/2)\vec{x}^p+(\frac{\sum_{j=1}^N\theta^0_j{x}^p_j}{N})(\sum_{i=1}^N\vec{l}_i)=P^T\vec{x}^p=\vec{x}^f. \label{eqn:expect_y}}
\end{equation}

\noindent Given the above, the design of all the candidate solutions in $\Omega(\vec{p})$ and probability assignment of each of them were aiming to make the expectation equal to $P^T\vec{x}^p$.
The high level idea is using $\Omega(\vec{x}^p)$ to randomly perturb $\vec{x}^p$ to $\vec{x}^f$ where $P^T$ of $\vec{x}^p$ compensates the perturbation $P$ of $\vec{b}$ in \eqref{eqn:perturbmatrix}. According to the critical property $(P\vec{b})^T\vec{x}^p=\vec{b}^T(P^T\vec{x}^p)$, i.e., the perturbation of the objective function is equal to the perturbation of the solution, $\vec{x}^f$ enables the $(1-\epsilon)$ approximation. %, as indicated in \eqref{eqn:POPT_OPT}. 
Here, $\vec{y}^{\epsilon}$ is equal to $x^p$ with a high probability of $1-\epsilon/2$, which is in accordance with the $1-\epsilon/2$ part in \eqref{eqn:perturbmatrix}. Each of the base vectors $\vec{l}_i$ and the zero vector $\vec{0}$ is chosen as a candidate to make $\Omega(\vec{x}^p)$ diffuse enough, such that an expected polynomial number of Pareto optimal solutions to \eqref{eqn:socialwelfare_p} can be guaranteed, which will be proved in Theorem \ref{thm:expectednumber}.

\noindent\textbf{Algorithm steps.} We summarize the above steps in Alg.~\ref{alg:perturbation}, which is our randomized algorithm for computing a $(1-\epsilon)$-approximate solution to social welfare optimization problem (\ref{eqn:socialwelfare}).

\begin{algorithm}
\caption{The $(1-\epsilon)$-Approx.~Algorithm for ILP (\ref{eqn:socialwelfare})}
\label{alg:perturbation}

\SetKwFunction{StoogeSort}{StoogeSort}
\SetKwFunction{Swap}{Swap}
\SetKwData{OneThird}{OneThird}
\SetKwData{TwoThirds}{TwoThirds}
\SetArgSty{}
	{\bf Input:} $\epsilon\in (0,1), \vec{b}, \vec{R}, \vec{c}$\\
	{\bf Output:} $(1-\epsilon)$-approximate allocation solution $\vec{y}^{\epsilon}$\\	
	%Initialize: Choose Parameter $0\le \epsilon \le 1$\;
	 Choose $\{\theta^j_1,...,\theta^j_N\}_{j\in\{0,\ldots, KD-1\}}$ independently and identically in the interval $[0,\frac{\epsilon}{N}]$\;
	 Construct each perturbed parameter $\hat{b}_i$ and $\hat{R}^{kd}_{i}$ according to \eqref{eqn:perturb_b} and \eqref{eqn:perturb_R}, respectively\;
	 %Produce perturbation matrix: $P=(1-\epsilon/2)I+\frac{\vec{\theta}\vec{1}^T}{N}$\;
	 Compute $\vec{x}^{p}=$ Algorithm \ref{alg:pareto}$(\vec{\hat{b}}, \mathbf{\hat{R}},\mathbf{c})$\;
	 Produce distribution $\Omega(\vec{x}^p)$ according to (\ref{eqn:optimaldistribution})\;
	\Return A sample $\vec{y}^{\epsilon}$ according to $\Omega(\vec{x}^p)$ in (\ref{eqn:optimaldistribution}).
\end{algorithm}
%
\begin{comment}
\vspace{1mm}
{\bf Xiaoxi:}
\vspace{0.5mm}
\noindent{$\bullet$ The essential connection between the distribution and the perturbation matrix is based on a simple observation that $(P\vec{b})^T\vec{x}=\vec{b}^T(P^T\vec{x})$, which indicates that the perturbation of the objective function is equal to the perturbation of the allocation solution.}
\noindent{$\bullet$ Then the random perturbation of the objective function is applied to the smoothed analysis of our exact algorithm to desire??? polynomial time complexity in expectation while the perturbation of solution is transformed to the expectation of the solution over a distribution.}
\noindent{$\bullet$ The fundamental rationale behind the transformation above is that expectation over a distribution is essentially a linear combination and the perturbation of a solution is a linear combination of feasible solutions too.
\vspace{1mm}
\end{comment}
%
%%%%%%%%%%%%%%%%%%%%%%%%%%%%%%%%%%%%%%%%%%%{\bf Chuan: give their connections intuitively and describe the high level idea why they are defined so}.%%%%%%%%%%%%%%%%%%%%%%%%

We further illustrate our algorithm using a simple example in Fig.~\ref{illustration:algorithm}. We consider a toy example where only two bids are submitted to the auctioneer. Suppose (0,0), (0,1), (1,1) and (1,0) are the feasible solutions of the original problem in \eqref{eqn:socialwelfare}. Since the coefficients of the packing constraints $\hat{R}^{kd}_i$ are randomly enlarged to be $\hat{R}^{kd}_i$ in the perturbed problem \eqref{eqn:socialwelfare_p}, the set of feasible solutions of the perturbed problem could be shrunk. An extreme case could happen where both bids are accepted in the optimal solution of \eqref{eqn:socialwelfare}, {\em i.e.}, $\vec{x}^{\ast}=(1,1)$, while at most one of the two bids could be accepted in the optimal solution of \eqref{eqn:socialwelfare_p} (the green area denotes the feasible region of the perturbed problem). By running the exact algorithm Alg. \ref{alg:pareto} to solve \eqref{eqn:socialwelfare_p}, the optimal solution $\vec{x}^p=(1,0)$ is obtained.
The near-optimality of Alg. \ref{alg:perturbation} is basically achieved by two facts. 
One is that due to the small perturbation of $\hat{R}^{kd}_i$'s, there is an extremely small probability\footnote{For rigorousness, this probability could be quite high when there are a large number of bids submitted. 
Fortunately, according to the proof of Lemma \ref{lem:POPT_vs_OPT}, with the small bid assumption, even if the optimal solution of the original problem is infeasible to the perturbed problem, the objective value of the perturbed problem under $\vec{x}^{p}$ will not deviate much to that under $\vec{x}^{\ast}$.} that the optimal solution of the original problem is infeasible to the perturbed problem.
The other is that the perturbation of the vector $\vec{b}^T$ could be compensated by the perturbation of the feasible solutions of the perturbed problem. As Fig.~\ref{illustration:algorithm} (a) shows, Alg. \ref{alg:perturbation} first perturbs $\vec{b}$ to $P\vec{b}$ by rotating $\vec{b}$ by a small angle $\alpha$, and exactly solves for the maximal $(P\vec{b})^T\vec{x}$ from the three feasible solutions (blue points in Fig. \ref{illustration:algorithm}(a)). It is equivalent to rotating each feasible solution of the perturbed problem by an angle of $- \alpha$ and solving for the maximal $\vec{b}^T(P^T\vec{x})$ from the rotated solutions $P^T\vec{x}$'s (pink points in Fig. \ref{illustration:algorithm}(b)). Directed by this insight, we randomly choose $\vec{y}^{\epsilon}$ from a set of feasible solutions of the original problem with the expectation of $P^T\vec{x}^p$. Because $\vec{b}^T(P^T\vec{x}^p)=(P\vec{b})^T\vec{x}$, the expected social welfare gained by Alg. \ref{alg:perturbation}, which is $\vec{b}^T\vec{y}^{\epsilon}$, preserves the approximation ratio of $(P\vec{b})^T\vec{x}^p$ to $\vec{b}^T\vec{x}^{\ast}$. 
\begin{figure}[!t]
\begin{center}
  \includegraphics[width=0.48\textwidth]{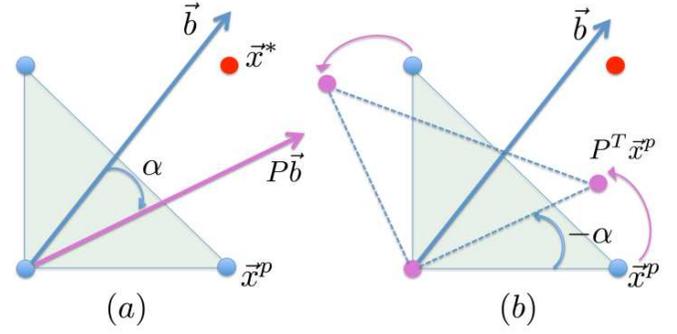}
%{\footnotesize The length of the black line segments in (a) and (b) are the same, showing that $(P\vec{b})^T\vec{x}^p=\vec{b}^T(P^T\vec{x}^p)$.  \par}
  \caption{An example to show the mechanism design}% with Different Bid Numbers per User}
  \label{illustration:algorithm}
\end{center}
\end{figure}

\noindent\textbf{Analysis.} Alg.~\ref{alg:perturbation} achieves the following properties.

\noindent {\bf (i) The expected running time of the randomized Alg.~\ref{alg:perturbation} is polynomial.}  Although the worst-case computation complexity of the exact Algorithm in Alg.~\ref{alg:pareto} is exponential due to exponentially many Pareto optimal solutions in the worst case (Theorem \ref{thm:boundedtime}), we show that the algorithm runs efficiently in practice, based on smoothed analysis techniques \cite{Spielman:2001:SAA:380752.380813}\cite{Roglin:2009ku}. The reason is that the expected number of the Pareto optimal solutions
%, $E[|\mathcal{P}(N)|]$,
of the perturbed social welfare maximization problem in (\ref{eqn:socialwelfare_p}) is polynomial, and hence the exact algorithm runs in polynomial time in expectation when applied to the perturbed problem---perturbed with a $P$ generated randomly as in (\ref{eqn:perturbmatrix}). According to smoothed analysis, Alg.~\ref{alg:pareto} is said to run in {\em smoothed polynomial time}.
%The smaller $\epsilon$ is chosen, the more closer is the smoothed analysis to a worst-case analysis.

%In the following analysis, we will aim at computing the expected number of Pareto optimal solutions $E[|\mathcal{P}(N)|]$. Thus the expected running time of our algorithm will be polynomially bounded in $E[|\mathcal{P}(N)|]$.The expectation is taken over the random coins of perturbation. A smoothed analysis of the social welfare problem to give an upper bound of $E[\mathcal{P}(N)]$ is adopted then.\\
%(\chuan{explain what ``$2$-moment'' means in the theorem below}) I use $|\mathcal{P}(N)|^2$ instead
\begin{theorem}\label{thm:expectednumber}
The expectation of the random variable $|\mathcal{P}(N)|^2$ of the perturbed social welfare maximization problem (\ref{eqn:socialwelfare_p}) is upper bounded by $O(N^{8KD}/\epsilon^{2KD})$, where the perturbed parameters are produced according to \eqref{eqn:perturb_b} and \eqref{eqn:perturb_R} with $\{\theta^j_1, \theta^j_2, \ldots, \theta^j_N\}_{j\in\{0,..., KD-1\}}$ independently and identically chosen from a uniform distribution on the interval of $[0,\frac{\epsilon}{N}]$.
%For an instance of the social welfare problem with $L$ bids in which the resource demands are specified arbitrarily and the bidding prices are chosen independently according to densities $f_1, f_2, ..., f_n$ that are bounded by $\frac{L^2}{\epsilon\hat{b}_{max}}$, the expected number of Pareto optimal solutions is upper bounded by $1+\frac {L^4}{\epsilon}.$\\
\end{theorem}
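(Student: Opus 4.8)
The plan is to recognize $\mathcal{P}(N)$ as the Pareto-optimal set of a multi-criteria binary optimization problem and then bound the second moment of its cardinality via the smoothed analysis of Pareto curves, in the spirit of Beier--V\"ocking and R\"oglin--Teng \cite{Roglin:2009ku}, instantiated with the density parameter dictated by our perturbation rule \eqref{eqn:perturb_b}--\eqref{eqn:perturb_R}.

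\emph{Setup and reduction to a density bound.} By Definition \ref{def:pareto}, the elements of $\mathcal{P}(N)$ are precisely the Pareto-optimal points of the $(KD{+}1)$-criteria problem ``maximize $s(\vec x)=\hat{\vec b}^{\,T}\vec x$, minimize $C_j(\vec x)=(\hat{\vec R}^{\,j})^T\vec x$ for $j\in[KD]$'' over the domain $\mathcal{D}\subseteq\{0,1\}^N$ cut out by the user-uniqueness constraints \eqref{eqn:socialwelfare}b; since $\mathcal{D}$ is a proper subset of the hypercube, it only shrinks the Pareto set, so bounds proved for the full hypercube apply a fortiori. Because the $\theta^j_i$ are mutually independent and each perturbed coefficient $\hat b_i=(1-\epsilon/2)b_i+\theta^0_i\tfrac{\sum_{i'}b_{i'}}{N}$ (resp.\ $\hat R^j_i=R^j_i+\theta^j_i\tfrac{\sum_{i'}R^j_{i'}}{N}$) is an affine image of a single uniform variable, the coefficients within each of the $KD$ perturbed linear functions (the objective and constraints $1,\dots,KD-1$) are independent random variables whose densities are supported on intervals of width $\tfrac{\epsilon}{N}\cdot\tfrac{\sum_{i'}b_{i'}}{N}$ (resp.\ $\tfrac{\epsilon}{N}\cdot\tfrac{\sum_{i'}R^j_{i'}}{N}$). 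The number of Pareto-optimal solutions is invariant under scaling any single criterion by a positive constant, so after normalizing $\sum_i b_i=1$ and $\sum_i R^j_i=1$ we obtain that every perturbed coefficient has density at most $\phi:=N^2/\epsilon$.

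\emph{Applying the smoothed Pareto-curve moment bound.} I would then invoke the quantitative form of the smoothed analysis of Pareto curves: for a binary optimization problem with $d$ perturbed linear functions whose coefficients are independent with density at most $\phi$ (plus any number of unperturbed criteria --- here the single unperturbed constraint $\hat R^{KD}=R^{KD}$, which is exactly why the semi-random model with one ``free'' direction is used, cf.\ the footnote near \eqref{eqn:socialwelfare_p} and Appendix \ref{sec:thm:expectednumber}), the $c$-th moment of the number of Pareto-optimal solutions is $O\big((n^2\phi)^{cd}\big)$; equivalently each perturbed direction contributes a factor $n^2\phi$ to the expectation and $n^4\phi^2$ to the second moment. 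This follows by iterating the Beier--V\"ocking winner/loser-gap estimate over the $d$ perturbed directions and combining it with the higher-moment counting of R\"oglin--Teng \cite{Roglin:2009ku}. Here $d=KD$, $c=2$, $n=N$, and $\phi=N^2/\epsilon$; using Lemma \ref{nondecreasingparetoset} to pass from any intermediate $|\mathcal{P}(i)|$ to $|\mathcal{P}(N)|$, we conclude
\[
 E\big[|\mathcal{P}(N)|^2\big]\;=\;O\!\Big(\big(N^4\cdot(N^2/\epsilon)^2\big)^{KD}\Big)\;=\;O\!\Big(N^{8KD}/\epsilon^{2KD}\Big),
\]
as claimed (the first-moment analogue being $E[|\mathcal{P}(N)|]=O(N^{4KD}/\epsilon^{KD})$).

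\emph{Main obstacle.} The delicate part is not the structural reduction or the exponent arithmetic but establishing the second-moment version of the Pareto-curve bound for our perturbation family. The first moment $E[|\mathcal{P}(N)|]$ already follows from the classical winner/loser-gap argument --- bounding the probability that two Pareto points lie within $\delta$ in a perturbed direction by $O(n\phi\delta)$ and integrating --- but controlling $E[|\mathcal{P}(N)|^2]$ requires the refined R\"oglin--Teng analysis that bounds the \emph{joint} probability of two prescribed gaps being simultaneously small, and one must verify that (i) the interval-uniform, affinely-shifted perturbations of \eqref{eqn:perturb_b}--\eqref{eqn:perturb_R} satisfy the bounded-density hypothesis after normalization (done above, with $\phi=N^2/\epsilon$), and (ii) leaving the $KD$-th constraint unperturbed does not break the argument, which is precisely the semi-random setting covered by the references. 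Everything else --- the domain restriction from \eqref{eqn:socialwelfare}b, the scale-invariance normalization, and the counting of exponents --- is routine.
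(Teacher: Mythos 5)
Your proposal follows essentially the same route as the paper's proof: normalize the coefficients so that each perturbed entry has density at most $\phi=N^2/\epsilon$, then invoke a black-box smoothed moment bound of the form $O\bigl((N^{2}\phi)^{\alpha KD}\bigr)$ for the number of Pareto-optimal solutions of a multi-objective binary program with $KD$ independently perturbed (quasi-concave density) criteria plus one unperturbed criterion, and set $\alpha=2$ (the paper cites the Brunsch--R\"oglin improvement rather than R\"oglin--Teng, but the bound and the exponent arithmetic are identical). One small caution: your claim that restricting the domain via (\ref{eqn:socialwelfare}b) ``only shrinks the Pareto set'' is false in general---deleting a dominating point can enlarge the Pareto set of what remains---but this does not damage the argument, since the cited smoothed bounds are stated for arbitrary feasible subsets of $\{0,1\}^N$, which is exactly how the paper accommodates those constraints.
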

\vspace{-1mm}
The proof is given in Appendix \ref{sec:thm:expectednumber}.

%Hence, we can show the polynomial-in-expectation running time of line 5 in Alg.~\ref{alg:perturbation}.

\begin{theorem}\label{thm:expectedtime}
The expected running time of the randomized algorithm Alg.~\ref{alg:perturbation} is polynomial.\\
\end{theorem}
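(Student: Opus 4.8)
The plan is to observe that Alg.~\ref{alg:perturbation} performs five steps, and that only one of them --- the invocation of the exact algorithm Alg.~\ref{alg:pareto} on the perturbed instance --- has a running time that depends on the random draws; the other four steps run in deterministic polynomial time regardless of the realization of $\vec{\theta}$. First I would dispose of the easy steps. Drawing the $KDN$ variables $\{\theta^j_i\}$ costs $O(KDN)$. Forming every perturbed coefficient via \eqref{eqn:perturb_b} and \eqref{eqn:perturb_R} requires computing the single sum $\sum_{i'}b_{i'}$ and then $O(1)$ work per $\hat b_i$, and one sum $\sum_{i'}R^j_{i'}$ per constraint $j$ followed by $O(1)$ work per $\hat R^j_i$; in total this is polynomial in $N$, $K$ and $D$. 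Building the distribution $\Omega(\vec{x}^p)$ in \eqref{eqn:optimaldistribution} reduces to computing $\sum_{j}\theta^0_j x^p_j$ once and then listing $N+2$ probabilities and $N+2$ candidate vectors of length $N$, i.e.\ $O(N^2)$ time; and drawing a single sample $\vec{y}^\epsilon$ from a distribution with $N+2$ atoms takes $O(N)$ time. None of these costs involves $\vec{\theta}$ beyond elementary arithmetic.

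It therefore remains to bound the expected cost of Step~3, the computation $\vec{x}^p = \text{Alg.~\ref{alg:pareto}}(\vec{\hat{b}}, \hat{\mathbf{R}}, \mathbf{c})$. Since Alg.~\ref{alg:pareto} is applied here to the perturbed ILP~(\ref{eqn:socialwelfare_p}), Theorem~\ref{thm:boundedtime} gives its running time as $O\!\big(KDN\,|\mathcal{P}(N)|^2\big)$, where $|\mathcal{P}(N)|$ now denotes the number of Pareto optimal solutions of the perturbed problem~(\ref{eqn:socialwelfare_p}) --- a random variable determined by $\vec{\theta}$. By linearity of expectation, $E[\text{time of Step 3}] = O\!\big(KDN\cdot E[\,|\mathcal{P}(N)|^2\,]\big)$. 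This is exactly the second moment that Theorem~\ref{thm:expectednumber} controls: $E[\,|\mathcal{P}(N)|^2\,] = O\!\big(N^{8KD}/\epsilon^{2KD}\big)$. Substituting yields $E[\text{time of Step 3}] = O\!\big(KD\,N^{8KD+1}/\epsilon^{2KD}\big)$, a polynomial in $N$ once $K$, $D$ and the accuracy parameter $\epsilon$ are regarded as fixed, as is customary in smoothed analysis. Adding the five per-step bounds by linearity of expectation then shows that the expected running time of Alg.~\ref{alg:perturbation} is $O\!\big(KD\,N^{8KD+1}/\epsilon^{2KD}\big)$, hence polynomial, which is the claim.

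In fact, this theorem is essentially a corollary of Theorems~\ref{thm:boundedtime} and~\ref{thm:expectednumber}, so there is no new technical obstacle to overcome; the only genuinely delicate point is that Theorem~\ref{thm:boundedtime} expresses the cost of Alg.~\ref{alg:pareto} through $|\mathcal{P}(N)|^2$ rather than $|\mathcal{P}(N)|$, so a bound on $E[\,|\mathcal{P}(N)|\,]$ alone would not close the argument (one cannot push the square outside the expectation in the needed direction). The proof goes through precisely because Theorem~\ref{thm:expectednumber} is phrased as a bound on the second moment $E[\,|\mathcal{P}(N)|^2\,]$, so no Jensen-type step is required. A minor additional remark is that the perturbed numbers produced by \eqref{eqn:perturb_b}--\eqref{eqn:perturb_R} are rational combinations of the inputs with denominator $N$ in which the $\theta^j_i$ carry only bounded precision, so their bit-lengths, and hence the per-operation arithmetic costs in Steps~2--4, grow only polynomially and do not affect the conclusion.
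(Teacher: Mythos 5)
Your proposal is correct and follows essentially the same route as the paper's own proof: combine Theorem~\ref{thm:boundedtime} with the second-moment bound of Theorem~\ref{thm:expectednumber} to get the expected cost $O(KDN^{8KD+1}/\epsilon^{2KD})$ of the call to Alg.~\ref{alg:pareto}, and observe that the remaining steps (perturbation, building $\Omega$, sampling) are deterministically polynomial. Your aside on why a bound on $E[|\mathcal{P}(N)|]$ alone would not suffice is a worthwhile clarification the paper leaves implicit, but it does not change the argument.
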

\vspace{-5mm}
The proof is given in Appendix \ref{sec:thm:expectedtime}.\\

%\vspace{-5mm}
\noindent {\bf (ii) Alg.~\ref{alg:perturbation} achieves $(1-\epsilon)$-optimal social welfare.}

\begin{theorem}	\label{thm:smoothedfptas}
Alg.~\ref{alg:perturbation} is a $(1-\epsilon)$-approximation randomized algorithm for the social welfare maximization problem
(\ref{eqn:socialwelfare}), under the small bid assumption $\max_{i\in [N], k\in[K], d\in[D]}\frac{R^{kd}_i}{c_{kd}} \le \frac{1}{2KD(2+\frac{1}{\epsilon})}$.
\end{theorem}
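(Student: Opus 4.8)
The plan is to establish two facts: (a) the random output $\vec{y}^{\epsilon}$ of Alg.~\ref{alg:perturbation} is \emph{always} a feasible solution of ILP~(\ref{eqn:socialwelfare}); and (b) its expected social welfare obeys $E[s(\vec{y}^{\epsilon})]\ge(1-\epsilon)\,OPT$, the expectation being taken over both sources of randomness in the algorithm: the perturbation vectors $\{\vec{\theta}^j\}_{j\in\{0,\ldots,KD-1\}}$ (which fix the matrix $P$ of \eqref{eqn:perturbmatrix}, the perturbed program~(\ref{eqn:socialwelfare_p}), and hence its exact optimizer $\vec{x}^{p}$) and the final draw of $\vec{y}^{\epsilon}$ from $\Omega(\vec{x}^{p})$. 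Fact (a) is quick: $\vec{y}^{\epsilon}$ is supported on $\{\vec{x}^{p},\vec{l}_1,\ldots,\vec{l}_N,\vec{0}\}$; each $\vec{l}_i$ and $\vec{0}$ trivially meets (\ref{eqn:socialwelfare}a)--(\ref{eqn:socialwelfare}c), and $\vec{x}^{p}$ is feasible for~(\ref{eqn:socialwelfare}) because every perturbed coefficient $\hat{R}^{j}_i$ in (\ref{eqn:socialwelfare_p}a) dominates the original $R^{j}_i$ (by \eqref{eqn:perturb_R} and $\theta^j_i\ge 0$) while constraints (\ref{eqn:socialwelfare}b) and (\ref{eqn:socialwelfare}c) are common to both programs.

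For fact (b) I would condition on a realization of $\{\vec{\theta}^j\}$, so that $P$, $\vec{\hat{b}}=P\vec{b}$ and $\vec{x}^{p}$ are deterministic. After checking that $\Omega(\vec{x}^{p})$ in \eqref{eqn:optimaldistribution} is a bona fide probability distribution---nonnegative masses summing to one, which is exactly where the ranges $\theta^0_j\in[0,\epsilon/N]$ and $x^p_j\in\{0,1\}$ enter---linearity of expectation together with $\sum_{i=1}^N\vec{l}_i=\vec{1}$ gives the conditional mean $E[\vec{y}^{\epsilon}\mid\{\vec{\theta}^j\}]=(1-\epsilon/2)\vec{x}^{p}+\tfrac{1}{N}\bigl(\sum_{j}\theta^0_j x^p_j\bigr)\vec{1}=P^{T}\vec{x}^{p}=\vec{x}^{f}$, which is precisely~\eqref{eqn:expect_y}. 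Since $s(\cdot)$ is the linear functional $\vec{x}\mapsto\vec{b}^{T}\vec{x}$, one more use of linearity together with the algebraic identity $(P\vec{b})^{T}\vec{x}^{p}=\vec{b}^{T}(P^{T}\vec{x}^{p})$ gives
\begin{equation*}
E[\,s(\vec{y}^{\epsilon})\mid\{\vec{\theta}^j\}\,]=\vec{b}^{T}\vec{x}^{f}=\vec{b}^{T}P^{T}\vec{x}^{p}=(P\vec{b})^{T}\vec{x}^{p}=\vec{\hat{b}}^{T}\vec{x}^{p}=POPT.
\end{equation*}

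The final step invokes Lemma~\ref{lem:POPT_vs_OPT}: under the small bid assumption $\max_{i\in[N],k\in[K],d\in[D]}R^{kd}_i/c_{kd}\le 1/\bigl(2KD(2+1/\epsilon)\bigr)$ we have $POPT\ge(1-\epsilon)\,OPT$. The point that makes the argument close cleanly is that this inequality holds for \emph{every} admissible realization of $\{\vec{\theta}^j\}\in[0,\epsilon/N]^{KDN}$ and not merely on average, since the bound in the lemma depends only on $\epsilon$, $K$, $D$ and the worst-case demand-to-capacity ratio, not on the specific perturbation. Hence $E[s(\vec{y}^{\epsilon})\mid\{\vec{\theta}^j\}]\ge(1-\epsilon)\,OPT$ pointwise, and taking the outer expectation over $\{\vec{\theta}^j\}$ via the tower rule gives $E[s(\vec{y}^{\epsilon})]\ge(1-\epsilon)\,OPT$. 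Together with fact (a), this is exactly what it means for Alg.~\ref{alg:perturbation} to be a $(1-\epsilon)$-approximation randomized algorithm for~(\ref{eqn:socialwelfare}).

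The genuine mathematical content---that inflating the packing coefficients as in \eqref{eqn:perturb_R} and scaling the objective by $1-\epsilon/2$ as in \eqref{eqn:perturb_b} sacrifices at most a $(1-\epsilon)$ factor of $OPT$ under small bids---is carried entirely by Lemma~\ref{lem:POPT_vs_OPT}, which is proved separately, so I expect no serious obstacle inside this theorem itself. The only places needing care are the two-stage conditioning on $\{\vec{\theta}^j\}$ (so that \eqref{eqn:expect_y}, which is written for a fixed $\vec{x}^{p}$, is applied as a conditional expectation, and so that the uniform-in-$\{\vec{\theta}^j\}$ bound of Lemma~\ref{lem:POPT_vs_OPT} survives the outer expectation) and the routine verification that $\Omega(\vec{x}^{p})$ is well defined, i.e.\ that the residual mass $1-\epsilon/2-\sum_{j}\theta^0_j x^p_j$ assigned to $\vec{0}$ is nonnegative; I would treat that last inequality as the most delicate piece of bookkeeping.
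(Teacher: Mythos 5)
Your proposal is correct and follows essentially the same route as the paper's proof, which consists solely of the chain $E[\vec{b}^T\vec{y}^{\epsilon}]=\vec{b}^T\vec{x}^f=\vec{b}^T(P^T\vec{x}^p)=(P\vec{b})^T\vec{x}^p=POPT\ge(1-\epsilon)OPT$ via \eqref{eqn:expect_y} and Lemma~\ref{lem:POPT_vs_OPT}. Your additional bookkeeping---feasibility of the support of $\Omega(\vec{x}^p)$, the two-stage conditioning on $\{\vec{\theta}^j\}$, and the nonnegativity of the residual mass on $\vec{0}$ (which amounts to $\sum_{j}\theta^0_jx^p_j\le\epsilon/2$ and is asserted but not actually verified in the paper either)---only makes explicit what the paper leaves implicit.
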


\begin{proof} 
We have shown $E[\vec{y}^{\epsilon}]=\vec{x}^f$. Hence $E[\vec{b}^T\vec{y}^{\epsilon}]=\vec{b}^T\vec{x}^f=\vec{b}^T(P^T\vec{x}^{p})=(P\vec{b})^T\vec{x}^p
=POPT\ge (1-\epsilon)OPT$, based on Lemma \ref{lem:POPT_vs_OPT}.%(\ref{eqn:POPT_OPT}).

\end{proof}

\vspace{-5mm}
\subsection{The Truthful-in-Expectation VM Auction}
\label{sec:payment}
Recall the MIDR mechanism design \cite{journals:siamcomp:DobzinskiD13} that we introduced at the beginning of this section. By now we have designed the randomized allocation algorithm which chooses an allocation solution following the distribution $\Omega(\vec{x}^p)$ in (\ref{eqn:optimaldistribution}). This distribution is independent of the cloud users' bids. We next show that it leads to the largest expected social welfare among a compact set of such distributions, such that we can combine a randomized VCG payment scheme following a similar distribution, to obtain an auction mechanism that is truthful in expectation.

%Previous work \cite{conf:focs:DughmiR10} has proved that combining a VCG-like payment with optimal allocation solution of expected social welfare maxmization with respect to reported bidding prices yields a truthful-in-expectation-auction. 

\begin{theorem}
The randomized allocation Alg.~\ref{alg:perturbation} is an MIDR allocation rule for the social welfare maximization problem
(\ref{eqn:socialwelfare}).
\end{theorem}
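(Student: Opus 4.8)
The plan is to exhibit a fixed, bid-independent distributional range $\mathcal{R}$ of distributions over feasible solutions of ILP~(\ref{eqn:socialwelfare}) and show that, on every bid vector $\vec{b}$, Alg.~\ref{alg:perturbation} outputs the element of $\mathcal{R}$ of largest expected social welfare. It is cleanest to argue this first conditioned on the internal random parameters $\vec{\theta}=\{\theta^j_i\}$ and then average out $\vec{\theta}$. So fix $\vec{\theta}$; this fixes the perturbation matrix $P$ of (\ref{eqn:perturbmatrix}) and all perturbed coefficients $\hat{R}^j_i$. Let $\mathcal{F}_{\vec{\theta}}$ be the (finite) feasible region of the perturbed ILP~(\ref{eqn:socialwelfare_p}). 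Since the perturbation only inflates the left-hand sides of the packing constraints and leaves (\ref{eqn:socialwelfare}b)--(\ref{eqn:socialwelfare}c) unchanged, $\mathcal{F}_{\vec{\theta}}$ is a subset of the feasible set of (\ref{eqn:socialwelfare}); likewise $\vec{l}_1,\dots,\vec{l}_N$ and $\vec{0}$ are feasible for (\ref{eqn:socialwelfare}) (each single bid fits within capacity, \emph{a fortiori} under the small-bid assumption of Lemma~\ref{lem:POPT_vs_OPT}). For each $\vec{x}\in\mathcal{F}_{\vec{\theta}}$, reading (\ref{eqn:optimaldistribution}) with $\vec{x}$ in place of $\vec{x}^p$ defines a distribution $\Omega(\vec{x})$ supported on $\{\vec{x},\vec{l}_1,\dots,\vec{l}_N,\vec{0}\}$; the first thing I would check --- exactly as was done for $\vec{x}^p$ in Sec.~\ref{sec:randalg} --- is that $\Omega(\vec{x})$ is a bona fide probability distribution, hence a distribution over feasible solutions of (\ref{eqn:socialwelfare}). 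Put $\mathcal{R}_{\vec{\theta}}:=\{\Omega(\vec{x}):\vec{x}\in\mathcal{F}_{\vec{\theta}}\}$; since $\mathcal{F}_{\vec{\theta}}$ and all the probabilities in (\ref{eqn:optimaldistribution}) are functions of $\vec{R},\vec{c},\vec{\theta}$ only, $\mathcal{R}_{\vec{\theta}}$ does not involve $\vec{b}$.

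The engine of the argument is the generalization of (\ref{eqn:expect_y}) from $\vec{x}^p$ to an arbitrary $\vec{x}\in\mathcal{F}_{\vec{\theta}}$. Using $\sum_{i=1}^N\vec{l}_i=\vec{1}$ one obtains $E_{\vec{y}\sim\Omega(\vec{x})}[\vec{y}]=(1-\epsilon/2)\vec{x}+\frac{(\vec{\theta}^0)^{T}\vec{x}}{N}\vec{1}=P^{T}\vec{x}$, whence $E_{\vec{y}\sim\Omega(\vec{x})}[\vec{b}^{T}\vec{y}]=\vec{b}^{T}P^{T}\vec{x}=(P\vec{b})^{T}\vec{x}=\hat{\vec{b}}^{T}\vec{x}$. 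Therefore, conditioned on $\vec{\theta}$, maximizing the expected social welfare over $\mathcal{R}_{\vec{\theta}}$ is the same as maximizing $\hat{\vec{b}}^{T}\vec{x}$ over $\mathcal{F}_{\vec{\theta}}$, i.e.\ solving the perturbed ILP~(\ref{eqn:socialwelfare_p}); its optimum value is $POPT$, attained at $\vec{x}^p$, which is exactly the solution the exact routine Alg.~\ref{alg:pareto} returns and hence the distribution $\Omega(\vec{x}^p)$ that Alg.~\ref{alg:perturbation} samples from. So, conditioned on $\vec{\theta}$, Alg.~\ref{alg:perturbation} outputs $\arg\max_{\Delta\in\mathcal{R}_{\vec{\theta}}}E_{\vec{y}\sim\Delta}[\vec{b}^{T}\vec{y}]$.

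To obtain an unconditional MIDR statement, I would take as the distributional range $\mathcal{R}$ the set of all distributions of the form ``draw $\vec{\theta}$ from its fixed uniform law, then output $\Omega(\vec{g}(\vec{\theta}))$'', indexed by arbitrary selection maps $\vec{g}$ with $\vec{g}(\vec{\theta})\in\mathcal{F}_{\vec{\theta}}$ for all $\vec{\theta}$; this family does not depend on $\vec{b}$. The expected social welfare of such a distribution on input $\vec{b}$ is $E_{\vec{\theta}}[\hat{\vec{b}}(\vec{\theta})^{T}\vec{g}(\vec{\theta})]$, which by the pointwise-in-$\vec{\theta}$ optimality just established is maximized by $\vec{g}(\vec{\theta})=\vec{x}^p(\vec{\theta})$ --- precisely the selection implemented by Alg.~\ref{alg:perturbation}. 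Hence Alg.~\ref{alg:perturbation} is maximal in the bid-independent range $\mathcal{R}$, which is the MIDR property. (Equivalently, and matching the construction of Sec.~\ref{sec:payment}, one may keep the conditional statement and note that a mixture over the bid-independent coins $\vec{\theta}$ of conditional-MIDR rules, each paired with its VCG payment, is truthful in expectation.) The step deserving the most care is this identification of the correct range and the verification that it is bid-independent despite the algorithm's use of internal randomness; the algebra behind maximality is immediate from the identity $\vec{b}^{T}P^{T}\vec{x}=(P\vec{b})^{T}\vec{x}$, which is what turns the maximality of $\vec{x}^p$ for the perturbed objective into maximality of $\Omega(\vec{x}^p)$ for the true expected social welfare.
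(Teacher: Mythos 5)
Your proof is correct and follows essentially the same route as the paper: the same range $\{\Omega(\vec{x}):\vec{x}\text{ feasible for the perturbed ILP}\}$, the same generalization of (\ref{eqn:expect_y}) to arbitrary $\vec{x}$, and the same pivot on the identity $(P\vec{b})^{T}\vec{x}=\vec{b}^{T}(P^{T}\vec{x})$ to turn optimality of $\vec{x}^{p}$ for the perturbed objective into maximality of $\Omega(\vec{x}^{p})$ over the range. Your explicit final step of averaging over the internal coins $\vec{\theta}$ via selection maps $\vec{g}(\vec{\theta})$ is a welcome refinement of a point the paper leaves implicit (it simply fixes $\vec{\theta}$ and argues conditionally), but it does not change the substance of the argument.
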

\begin{proof}
An MIDR allocation rule of a social welfare maximization problem returns an allocation solution that is sampled randomly from a distribution over a feasible set of the problem, which achieves the largest expected social welfare, among random solutions produced following distributions in a distributional range, which is a fixed compact set of probability distributions over the feasible set that are independent of the users' bids \cite{journals:siamcomp:DobzinskiD13}.

Let $\mathcal{T}$ and $\mathcal{\hat{T}}$ denote the set of all feasible solutions of ILP (\ref{eqn:socialwelfare}) and of the perturbed ILP \eqref{eqn:socialwelfare_p}, respectively.
Recall the perturbation rule for each demand parameter in \eqref{eqn:perturb_R}. We randomly enlarge the coefficient of the packing constraints from $R^{kd}_i$ to $\hat{R}^{kd}_i$, therefore, $\mathcal{\hat{T}}$ is a subset of $\mathcal{T}$. 
For each feasible solution $\vec{x}\in \mathcal{\hat{T}}$, we can obtain a distribution $\Omega(\vec{x})$ in the same way as the distribution in (\ref{eqn:optimaldistribution}) by replacing all the $\vec{x}^{p}$ with $\vec{x}$ shown in $\Omega(\vec{x}^p)$. Then let $\vec{y}$ denote the random allocation solution produced following $\Omega(\vec{x})$, i.e., $\vec{y}\sim\Omega(\vec{x})$.
%\vspace{-3mm}
%\begin{equation}\label{eqn:anydistribution}
%\vec{y}\sim \Omega(\vec{x}), \Omega(\vec{x})\in \mathcal{R}(\vec{\theta}), \forall \vec{x}\in \mathcal{T},
%\end{equation}
%{\small
%\begin{equation}%\forall \vec{x}\in \mathcal{T},
%\Omega(\vec{x})= \left\{ 
%\begin{array}{l l l l}
%Pr[\vec{y}=\vec{x}]=1-\epsilon,  \\
%Pr[\vec{y}=\vec{l}_i]=\frac{\sum_{j=1}^N\theta_j {x}_j}{L}, \forall i\in \{1,...,N\},\\
%Pr[\vec{y}=\vec{0}]=1-Pr[\vec{y}=\vec{x}]-\sum_{i=1}^NPr[\vec{y}=\vec{l}_i].\\
%\end{array} \right.\label{eqn:feasdistribution}
%\end{equation}
%}
%\vspace{-3mm}
%\noindent Here $\vec{y}$ is a random allocation solution produced following $\Omega(\vec{x})$. 
%\end{comment}
Given $\epsilon$ and $\{\theta^j_1,\ldots,\theta^j_N\}_{j\in\{0, \ldots, KD-1\}}$, the distribution $\Omega(\vec{x})$ is dependent on feasible solution $\vec{x}$, 
%and random parameters $\vec{\theta}=\{\theta_1,...,\theta_N\}$,
but independent of the users' bids. $\mathcal{R}=\{\Omega(\vec{x}),\forall \vec{x}\in \mathcal{\hat{T}}\}$ is a compact set including all the distributions indexed by feasible solution $\vec{x}$ in the set $\mathcal{\hat{T}}$. Due to $\mathcal{\hat{T}}\subset \mathcal{T}$, $\mathcal{R}$ is also a compact set of distributions over the feasible solutions in $\mathcal{T}$. %, with the same given $\vec{\theta}$.
Using $\mathcal{R}$ as the distributional range, we have

\vspace{-5mm}
{\small
\begin{eqnarray}
E_{\vec{y}^{\epsilon}\sim \Omega(\vec{x}^p)}[\vec{b}^T\vec{y}^{\epsilon}]&=&(P\vec{b})^T\vec{x}^p = \max_{\vec{x}\in\mathcal{\hat{T}}}(P\vec{b})^T\vec{x} \nonumber\\
&=&\max_{\vec{x}\in \mathcal{\hat{T}}}\vec{b}^T(P^T\vec{x})\nonumber\\
&=&\max_{\Omega(\vec{x}) \in \mathcal{R}}E_{\vec{y}\sim \Omega(\vec{x})}[\vec{b}^T\vec{y}].\label{eqn:e_by}
\end{eqnarray}
}
\vspace{-3mm}

\noindent The first equation is due to (\ref{eqn:expect_y}). The second equation is because $\vec{x}^p$ is the optimal solution of the perturbed ILP (\ref{eqn:socialwelfare_p}). %, and the set of feasible solutions of ILP (\ref{eqn:socialwelfare}) and ILP (\ref{eqn:socialwelfare_p}) are the same. 
The last equation is due to $E_{\vec{y}\sim \Omega(\vec{x})}[\vec{y}]=P^T\vec{x}$, which can be readily obtained according to \eqref{eqn:expect_y}. Hence the solution $\vec{y}^{\epsilon}$ selected following distribution $\Omega(\vec{x}^p)$ in Alg.~\ref{alg:perturbation} achieves the largest expected social welfare, among all the solutions produced following distributions in the distributional range $\mathcal{R}$, leading to an MIDR allocation rule. 
\end{proof}

\vspace{-0.5mm}
Now we describe our randomized VCG payment, which is based on the allocation solution $\vec{y}^{\epsilon}$, as follows:
{\small
\begin{equation}
p_i(\vec{y}^{\epsilon})=\vec{b}_{-i}^T\vec{y}^{\epsilon}_{-i}-(\vec{b}^T\vec{y}^{\epsilon}-b_i{y}^{\epsilon}_i),\forall i\in[N].\label{eqn:payment}
\end{equation}
}
\noindent Here $\vec{b}_{-i}$ denotes the bidding price vector where the $i$th bidding price is set to $0$; $\vec{y}^{\epsilon}_{-i}$ is the random allocation solution output by Alg.~\ref{alg:perturbation} with the input bidding price vector $\vec{b}_{-i}$.
%$b_i$ and ${y}^{\epsilon}_i$ are the $i$th element in $\vec{b}$ and $\vec{y}^{\epsilon}$, respectively.
Hence $\vec{b}_{-i}^T\vec{y}^{\epsilon}_{-i}$ is the social welfare when the $i$th bid is excluded from the auction. Further recall $\vec{y}^{\epsilon}$ is the output of Alg.~\ref{alg:perturbation} with the full bidding price vector $\vec{b}$. Let ${y}^{\epsilon}_i$ be the $i$th element of $\vec{y}^{\epsilon}$. Hence $\vec{b}^T\vec{y}^{\epsilon}-b_i{y}^{\epsilon}_i$ is the social welfare achieved by all the other bids except bid $i$, when all the bids are considered in the auction.

\begin{algorithm}[!t]
\caption{The Randomized Auction Mechanism}
\label{alg:auction}
	{\bf Input:} $\epsilon\in (0,1), \vec{b}, \vec{R}, \vec{c}$\\
	{\bf Output:} allocation solution $\vec{y}^{\epsilon}$ and payment $\vec{p}$\\

	Compute $\vec{y}^{\epsilon}$=Algorithm \ref{alg:perturbation}$(\epsilon, \vec{b}, \vec{R}, \vec{c})$\;
	Compute payment $p_i(\vec{y}^{\epsilon})=\vec{b}_{-i}^T\vec{y}^{\epsilon}_{-i}-(\vec{b}^T\vec{y}^{\epsilon}-b_i{y}^{\epsilon}_i)$, for all accepted bids $i\in [N]$\;
	\Return $\vec{y}^{\epsilon}$ and $\vec{p}$
\end{algorithm}

We summarize our complete randomized auction mechanism in Alg.~\ref{alg:auction}. The following theorem shows that the auction design fulfils our design objectives.

\begin{theorem}\label{thm:truthful}
The auction mechanism in Alg.~\ref{alg:auction}, which combines the randomized allocation algorithm in Alg.~\ref{alg:perturbation} and the randomized VCG payment in (\ref{eqn:payment}), runs in polynomial time in expectation, is truthful in expectation, and achieves $(1-\epsilon)$-optimal social welfare in expectation, under the small bid assumption $\max_{i\in [N], k\in[K], d\in[D]}\frac{R^{kd}_i}{c_{kd}} \le \frac{1}{2KD(2+\frac{1}{\epsilon})}$.
\end{theorem}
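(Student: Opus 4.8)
The plan is to derive all three properties from results already in hand, since Alg.~\ref{alg:auction} merely couples the allocation rule of Alg.~\ref{alg:perturbation} with the payment rule (\ref{eqn:payment}). The $(1-\epsilon)$-optimality of expected social welfare is immediate: the allocation returned by Alg.~\ref{alg:auction} is exactly the sample $\vec{y}^{\epsilon}$ produced by Alg.~\ref{alg:perturbation}, so Theorem~\ref{thm:smoothedfptas} gives $E[\vec{b}^T\vec{y}^{\epsilon}]=POPT\ge(1-\epsilon)OPT$ under the stated small-bid assumption, which is precisely the claim. For the expected running time, I would observe that Alg.~\ref{alg:auction} invokes Alg.~\ref{alg:perturbation} once to compute $\vec{y}^{\epsilon}$ and at most once more per winning bid to compute $\vec{y}^{\epsilon}_{-i}$, i.e.\ at most $N+1$ invocations, followed by $O(N)$ arithmetic for each payment. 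Each invocation runs in expected polynomial time by Theorem~\ref{thm:expectedtime}, with the polynomial bound (inherited from the $O(N^{8KD}/\epsilon^{2KD})$ bound of Theorem~\ref{thm:expectednumber}) independent of the particular bid vector fed in; hence by linearity of expectation the total expected work is still polynomial, and the $O(N^2)$ payment arithmetic is negligible.

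The substantive part is truthfulness in expectation, which I would obtain from the MIDR property established in the theorem preceding (\ref{eqn:payment}) together with the VCG-style payment, following the MIDR-to-truthful reduction of \cite{journals:siamcomp:DobzinskiD13}. Fix a bid $i$ and all other bids $\mathcal{B}_{-i}$; let $v_i$ be the true valuation behind bid $i$ and $\tilde b_i$ the reported price. Using the identity $\vec{b}^T\vec{y}^{\epsilon}-\tilde b_i y^{\epsilon}_i=\vec{b}_{-i}^T\vec{y}^{\epsilon}$ (with $\vec{b}_{-i}$ as defined in the paper, i.e.\ zero in coordinate $i$), the payment rewrites as $p_i=\vec{b}_{-i}^T\vec{y}^{\epsilon}_{-i}-\vec{b}_{-i}^T\vec{y}^{\epsilon}$, so the bidder's utility from bid $i$ is $u_i=v_i y^{\epsilon}_i+\vec{b}_{-i}^T\vec{y}^{\epsilon}-\vec{b}_{-i}^T\vec{y}^{\epsilon}_{-i}$. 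Taking expectations over the mechanism's internal randomness, the term $E[\vec{b}_{-i}^T\vec{y}^{\epsilon}_{-i}]$ does not depend on $\tilde b_i$, because $\vec{y}^{\epsilon}_{-i}$ is the output of Alg.~\ref{alg:perturbation} on the instance with bid $i$ removed. Hence maximizing $E[u_i]$ over $\tilde b_i$ is equivalent to maximizing $E_{\vec{y}^{\epsilon}\sim\Omega(\vec{x}^p)}\big[v_i y^{\epsilon}_i+\vec{b}_{-i}^T\vec{y}^{\epsilon}\big]$, i.e.\ the expected social welfare evaluated at the bidder's \emph{true} value in coordinate $i$. Now, since the distributional range $\mathcal{R}=\{\Omega(\vec{x}):\vec{x}\in\hat{\mathcal T}\}$ depends only on $\epsilon$ and on the perturbed demand constraints (and is therefore independent of all bidding prices), and since Alg.~\ref{alg:perturbation} is MIDR, reporting $\tilde b_i=v_i$ makes Alg.~\ref{alg:perturbation} select from $\mathcal{R}$ exactly the distribution that maximizes this objective, by the chain of equalities in (\ref{eqn:e_by}) applied to the bid vector with $v_i$ in coordinate $i$; any misreport makes it select a distribution still lying in $\mathcal{R}$, which therefore cannot yield a strictly larger value. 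So truthful reporting maximizes $E[u_i]$ for every bid, and since each user wins at most one of its bids (constraint (\ref{eqn:socialwelfare}b)), a user's best response decomposes bid-by-bid, giving truthfulness in expectation as defined in Sec.~\ref{sec:model}.

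The step I expect to require the most care is this truthfulness argument, for two reasons. First, one must justify that the ``externality'' term $\vec{b}_{-i}^T\vec{y}^{\epsilon}_{-i}$ in the payment is genuinely unaffected in expectation by bidder $i$'s report, which in turn hinges on the distributional range being bid-independent — a property that holds only because the perturbation in (\ref{eqn:perturb_R}) acts on the demand coefficients and not through the prices, so that $\hat{\mathcal T}$ (and hence $\mathcal{R}$) is fixed once $\epsilon$ and the $\theta^j_i$'s are drawn. Second, one must invoke the MIDR guarantee of the preceding theorem at the right object, namely as the statement that $\Omega(\vec{x}^p)$ attains $\max_{\Omega\in\mathcal{R}}E_{\vec{y}\sim\Omega}[\vec{b}'^T\vec{y}]$ for whatever bid vector $\vec{b}'$ is actually submitted, so that truthful submission aligns this maximization with the bidder's own objective. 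The remaining two properties are routine consequences of Theorems~\ref{thm:smoothedfptas} and~\ref{thm:expectedtime} via linearity of expectation and require no further ideas.
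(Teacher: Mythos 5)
Your proof is correct and rests on the same skeleton as the paper's: $(1-\epsilon)$-optimality from Theorem~\ref{thm:smoothedfptas}, expected polynomial time from Theorem~\ref{thm:expectedtime}, and truthfulness from the MIDR property combined with the VCG-style payment (\ref{eqn:payment}). The one genuine difference is in how the truthfulness step is discharged. The paper treats it as a black box: it quotes the result of \cite{journals:siamcomp:DobzinskiD13} that an MIDR rule plus the expected-externality payment $p_i'=E[\vec{b}_{-i}^T\vec{y}^{\epsilon}_{-i}-(\vec{b}^T\vec{y}^{\epsilon}-b_i y^{\epsilon}_i)]$ is truthful in expectation, quotes \cite{conf:focs:DughmiR10} for the fact that any randomized payment whose expectation equals $p_i'$ suffices, and then just verifies $E[p_i(\vec{y}^{\epsilon})]=p_i'$. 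You instead unfold that citation into the first-principles argument: rewrite $E[u_i]$ as $E[v_i y^{\epsilon}_i+\vec{b}_{-i}^T\vec{y}^{\epsilon}]$ minus a term independent of bidder $i$'s report, and observe that truthful reporting makes the MIDR rule optimize exactly this quantity over the bid-independent range $\mathcal{R}$, via the chain of equalities (\ref{eqn:e_by}) applied to the vector with $v_i$ in coordinate $i$. Your version is more self-contained and makes explicit \emph{why} the range must be bid-independent (the perturbation (\ref{eqn:perturb_R}) touches only demand coefficients, so $\hat{\mathcal{T}}$ and hence $\mathcal{R}$ are fixed once the $\theta$'s are drawn), which the paper leaves implicit in its MIDR theorem; the paper's version is shorter and makes the reliance on the randomized-payment reduction of \cite{conf:focs:DughmiR10} visible. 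You are also more careful than the paper on the running-time claim, explicitly counting the $N+1$ invocations of Alg.~\ref{alg:perturbation} needed for the payments, whereas the paper only asserts that the payment can be computed in expected polynomial time. Both arguments are sound; one small point worth a sentence in either write-up is that the MIDR property and hence truthfulness hold conditionally on the realized $\theta$'s, and therefore also unconditionally.
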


\begin{proof}
According to the principles of MIDR algorithms \cite{journals:siamcomp:DobzinskiD13}, to render a truthful-in-expectation mechanism, we should combine an MIDR allocation rule with a VCG-like payment as follows:

%$p_i^{\prime}=E_{\vec{x}^{\ast}\sim D^{\ast},D^{\ast}\in \mathcal{R}}[\vec{b_{i-}}^T\vec{x}^{\ast}_{i-}-(\vec{b}^T\vec{x}^{\ast}-b_i\vec{x}^{\ast}_i)].~~~~~~~~(16)$\\
\vspace{-5mm}
\begin{equation}\label{vcglike}
{\small
p_i^{\prime}=E[\vec{b}_{-i}^T\vec{y}^{\epsilon}_{-i}-(\vec{b}^T\vec{y}^{\epsilon}-b_i{y}^{\epsilon}_i)],\\	}			
%p_i^{\prime}=max_{\vec{x}_{-i}\in \mathcal{T}}E[\vec{b}_{-i}^T\vec{x}_{-i}]\\
%-max_{\vec{x}\in \mathcal{T}}E[\vec{b}^T\vec{x}-b_ix_i].\\
\end{equation}

\noindent where the expectation is computed as follows in more details:
$E_{\vec{y}^{\epsilon}_{-i}\sim \Omega(\vec{x}^p_{-i})}[\vec{b}_{-i}^T\vec{y}^{\epsilon}_{-i}]-E_{\vec{y}^{\epsilon}\sim \Omega(\vec{x}^p)}[\vec{b}^T\vec{y}^{\epsilon}-b_i{y}^{\epsilon}_i]$. Here $\vec{x}^p_{-i}$ is the optimal solution of the perturbed ILP (\ref{eqn:socialwelfare_p}), produced by line 5 of Alg.~\ref{alg:perturbation}, when the input bidding price vector is $\vec{b}_{-i}$.

%	Each feasible allocation solution $\vec{x}_{-i}$ in the maximization of the expected welfare problem $E[\vec{b_{-i}}^T\vec{x}^{\ast}_{-i}]$ is sampled from a pre-designed distribution of $\vec{x}_{-i}$ and each feasible allocation solution $\vec{x}$ in the maximization of $E[(\vec{b}^T\vec{x}-b_i\vec{x}_i)]$ is sampled from a distribution dependent on $\vec{x}$. 

	Since it may not be always possible to compute the expectation in (\ref{vcglike}) efficiently, it has been proved \cite{conf:focs:DughmiR10} that instead of using (\ref{vcglike}), we can use a randomized payment rule to yield the truthfulness in expectation as well, as long as the expected payment of the randomized payment rule equals $p^{\prime}_i$ in (\ref{vcglike}).

%Then $E[\vec{y}\sim \Omega(\vec{x})]=P^T\vec{x},~~\forall \vec{x}\in \mathcal{T}, \forall \Omega(\vec{x}) \in \mathcal{R}(\vec{\theta})$

%Let $\Phi(\theta)$ denote the distribution of $\vec{\theta}$ and $\Phi(\theta)_{-i})$ denote the distribution of $\vec{\theta}_{-i}$, where $\vec{\theta}_{-i}$ denotes $\{\theta_1,...,\theta_{-i1},\theta_{i+1},...,\theta_N\}$. Let $\mathcal{R}(\vec{\theta}_{-i})$ denote the range of all the distribution $\Omega(\vec{x}_{-i})$. Moreover, the vcg-like payment in (\ref{vcglike}) can be specified as:\\

%$p_i^{\prime}=max_{\vec{x}_{-i}\sim \Omega(\vec{x}_{-i}), \Omega(\vec{x}_{-i})\in \mathcal{R}(\vec{\theta}_{-i}), \vec{\theta}_{-i}\sim \Phi(\vec{\theta}_{-i})}E[\vec{b}_{-i}^T\vec{x}_{-i}]-max_{\vec{x}\sim \Omega(\vec{x}), \Omega(\vec{x}) \in \mathcal{R}(\vec{\theta}), \vec{\theta}\sim \Phi(\vec{\theta})}E[(\vec{b}^T\vec{x}-b_i\vec{x}_i)]$

%where  $\mathcal{R}(\vec{\theta}_{-i})$ denotes the range of all the distribution $\Omega(\vec{x}_{-i})$ and $\mathcal{R}(\vec{\theta})$ denotes the range of all the distributions $\Omega(\vec{x})$.

The expectation of our random payment in (\ref{eqn:payment}) is exactly 

\vspace{-5mm}
\begin{eqnarray}
	&&E[p_i(\vec{y}^{\epsilon})]\nonumber\\
	&=&E_{\vec{y}^{\epsilon}_{-i}\sim \Omega(\vec{x}^p_{-i})}[\vec{b}_{-i}^T\vec{y}^{\epsilon}_{-i}]-E_{\vec{y}^{\epsilon}\sim \Omega(\vec{x}^p)}[\vec{b}^T\vec{y}^{\epsilon}-b_i{y}^{\epsilon}_i]\nonumber\\
	%&=&\max_{\vec{x}_{-i}\in \mathcal{T}_{-i}}E_{\vec{y}_{-i}\sim \Omega(\vec{x}_{-i})}[\vec{b}^T\vec{y}_{-i}]-\max_{\vec{x}\in \mathcal{T}}E_{\vec{y}\sim \Omega(\vec{x})}[\vec{b}^T\vec{y}-b_i\vec{y}_i]\nonumber\\
	%&=&\max_{\vec{x}_{-i}\sim \Omega(\vec{x}_{-i}), \Omega(\vec{x}_{-i})\in \mathcal{R}(\vec{\theta}_{-i}), \vec{\theta}_{-i}\sim \Phi(\vec{\theta}_{-i})}E[\vec{b_{-i}}^T\vec{x}_{-i}]\nonumber\\
	%&&-\max_{\vec{x}\sim \Omega(\vec{x}), \Omega(\vec{x}) \in \mathcal{R}(\vec{\theta}), \vec{\theta}\sim \Phi(\vec{\theta})}E[(\vec{b}^T\vec{x}-b_ix_i)]
	&=&p^{\prime}_i.\nonumber
\end{eqnarray}	
Hence the random payment in (\ref{eqn:payment}) renders truthfulness in expectation and can be computed in polynomial time in expectation. Combining Theorem \ref{thm:expectedtime} and Theorem \ref{thm:smoothedfptas}, this theorem is proved.
\end{proof}

%Thus our Randomized VCG payment renders a truthful auction based on $\vec{y}^{\epsilon}$ obtained from the Randomized FPTAS.\\

\section{Performance Evaluation}
\label{sec:simulation}

We evaluate our randomized auction using trace-driven simulations, exploiting Google cluster-usage data \cite{clusterdata:Reiss2011} which record jobs submitted to the Google cluster. Each job contains multiple tasks, with information on resource demands (CPU, RAM, Disk) of the tasks. We translate each job into a VM bundle bid, and the resource demand of each task in the job into a VM in the bundle. 
%\xx{Since the resource demand of each task in the Google cluster-usage data is normalized within $[0,1]$, we uniformly partition the range $[0,1]$ into 10 intervals for each type of resources and assign the VM type requied by a task according to the intervals in which the demand of each type of resource locates. For example, some task requires...
There are around $1000$ types of VMs consisting of $K=3$ types of resources in our experiments.

%\xx{As Fig. \ref{fig:clustering} shows, the tasks are located based on the amount of CPU, disk, and memory requested and then clustered into 100 different types according to both task density and their locations, as indicated by points being labelled with different colors.}
%
Each task is mapped to a VM of a specific type by resource demands, and further mapped to a data centre randomly. We then generate the VM demands of the bundle, $q_{md}^i$, by counting the number of VMs of the same type mapped to the same data center among the tasks in the job. We estimate the unit prices of CPU, Disk and RAM, respectively, based on the prices of Amazon EC2 instances and their resource composition, and then set the bid price of each bundle based on the unit prices and the overall resource demands in the bundle, scaled by a random number in $[0.75, 1.5]$. In this way, we obtain a pool of bidding bundles from the Google cluster data.
%Let $|\mathcal{B}_w|$ denote the upper-bound of the number of bids that each user can submit.
Each user randomly chooses at most $|\mathcal{B}_w|$ bundles from the pool to bid. % among $\{1,...,|\mathcal{B}_w|\}$. 
We compute the capacity of type-$k$ resource, $c_{dk}$, in a data center based on the overall amount of this resource required in this data center in all the bid bundles submitted by the users, and scale it down using a random factor in [0, 0.5W/N], %({\bf Xiaoxi} according to $c_{d,k} = \sum_{i}^L R^i_{kd} \times (W/N) \times rand(0,0.5)$, 
such that roughly no more than half of the users can win a bid under constraint (\ref{eqn:socialwelfare}b), without loss of generality. %Here, $(W/N)$ means that the capacity won't dramatically increased by the same times as the bid number per user.) 
By default, the number of users is $W=500$, the upper-bound on the number of bids a user can submit is $|\mathcal{B}_w|=4$, the number of data centers is $D=8$, and $\epsilon=0.05$. We repeat each experiment for 50 times to obtain the average results. 
Note that some of the resource demands ($R^{kd}_i$) violate the small bid assumption that is required in the analysis of the approximation ratio of our allocation algorithm in Lemma \ref{lem:POPT_vs_OPT}, Theorem \ref{thm:smoothedfptas} and Theorem \ref{thm:truthful}. In this sense, our algorithm could work well even when the bid demands are not as small as we require in the theoretical analysis.

\subsection{Approximation Ratio}
%{\bf B should be replaced by $|\mathcal{B}_w|$}.
We first study the average approximation ratio achieved by our algorithm, computed by the social welfare achieved by Alg.~\ref{alg:perturbation} over the optimal social welfare by solving (\ref{eqn:socialwelfare}) exactly. Let RPAA represent our Randomized Perturbation-based Approximation Algorithm in Alg.~\ref{alg:perturbation} in the figures. Given $\epsilon=0.05$, the theoretical expected approximation ratio is $0.95$. Fig.s~\ref{fig:comp1}-\ref{fig:comp2} show that the average approximation ratio is larger than the theoretical ratio and approaches the latter when the number of users (or bids) increases. According to Lemma \ref{lem:POPT_vs_OPT} %(\ref{eqn:POPT_OPT})
 and (\ref{eqn:expect_y}) in Sec.~\ref{sec:auction}, $1-\epsilon$ is a lower bound of the approximation ratio, and the results show that under practical settings the average ratio achieved is better. The reason why the ratio approaches the theoretical one when $W$ is large can also be explained by (\ref{eqn:POPT_OPT_1}) in the proof of Lemma \ref{lem:POPT_vs_OPT}, that the inequality tends to equality when $W$ (and hence $N$) is large. 
% as follows: the perturbation matrix is related to both $\epsilon$ and the input size $L$. %Since $|\mathcal{B}_w|=4$ when $N$ is varied, the relative growth rate of $N$ is the same as that of $L$. When $N$ (and hence $L$) is larger, the perturbed bidding price vector is closer to $1-\epsilon$ times of the original bidding price vector according to the perturbation matrix in (\ref{eqn:perturbmatrix}). 
%%%%%%%%%%%%%%%%%%%%%%%%%%%%  old result  %%%%%%%%%%%%%%%%%%%%%%%%%%%%%%%
%The average ratio does not vary much with different numbers of data centers, since the perturbation does not influence constraints (\ref{eqn:socialwelfare}a) where the number of data centers shows up.
%%%%%%%%%%%%%%%%%%%%%%%%%%%%%%%%%%%%%%%%%%%%%%%%%%%%%%%%%%%%%%%%%%%%%%%%%
Fig. \ref{fig:comp1} and Fig. \ref{fig_k} also demonstrate that the average ratio decreases slightly with the increase of the number of data centers or the number of resource types, respectively. The reasons are the same: either a larger $D$ or a large $K$ leads to more constraints in (\ref{eqn:socialwelfare}a) being perturbed, which causes a larger deviation between the original problem and the perturbed problem.

Fig.~\ref{fig:comp2} further shows that the average approximation ratio is worse if a user submit more bids. 
%\chuan{but the figure shows the reverse..}. 
The reason is clear: when the number of users $W$ is fixed, the more bids a user can submit, the larger the total number of bids $N$ is; hence due to the same reason as analyzed above, the ratio is closer to the theoretical one.

Fig.~\ref{fig:comp3} compares the average approximation ratio obtained in our experiments and the respective theoretical approximation ratios at different values of $\epsilon$, by plotting the relative approximation ratio $\frac{\text{average approx. ratio}}{1-\epsilon}$.
%\begin{equation}\label{eqn:relative}
%\frac{\text{actual approx. ratio}}{1-\epsilon}. 
%\end{equation}
The average approximation ratio of our algorithm outperforms the theoretical ratio more when $\epsilon$ is larger, since $\theta^0_i,i=1,\ldots, N$ 
%\chuan{correct the symbol}, 
are larger (they are selected in the interval $[0,\epsilon / N]$) and the gap between the empirical approximation ratio and the theoretical one %(\chuan{the gap of what?}) 
is larger according to (\ref{eqn:POPT_OPT_1}) in the proof of Lemma \ref{lem:POPT_vs_OPT}. %Sec.~\ref{sec:auction}.
%the theoretical approximation ratio $1-\epsilon$ is smaller.  
The better performance at a smaller $N$ can be explained similarly as that for Fig.s \ref{fig:comp1}-\ref{fig:comp2}.

\begin{figure}[]
\begin{center}
  \includegraphics[width=0.48\textwidth]{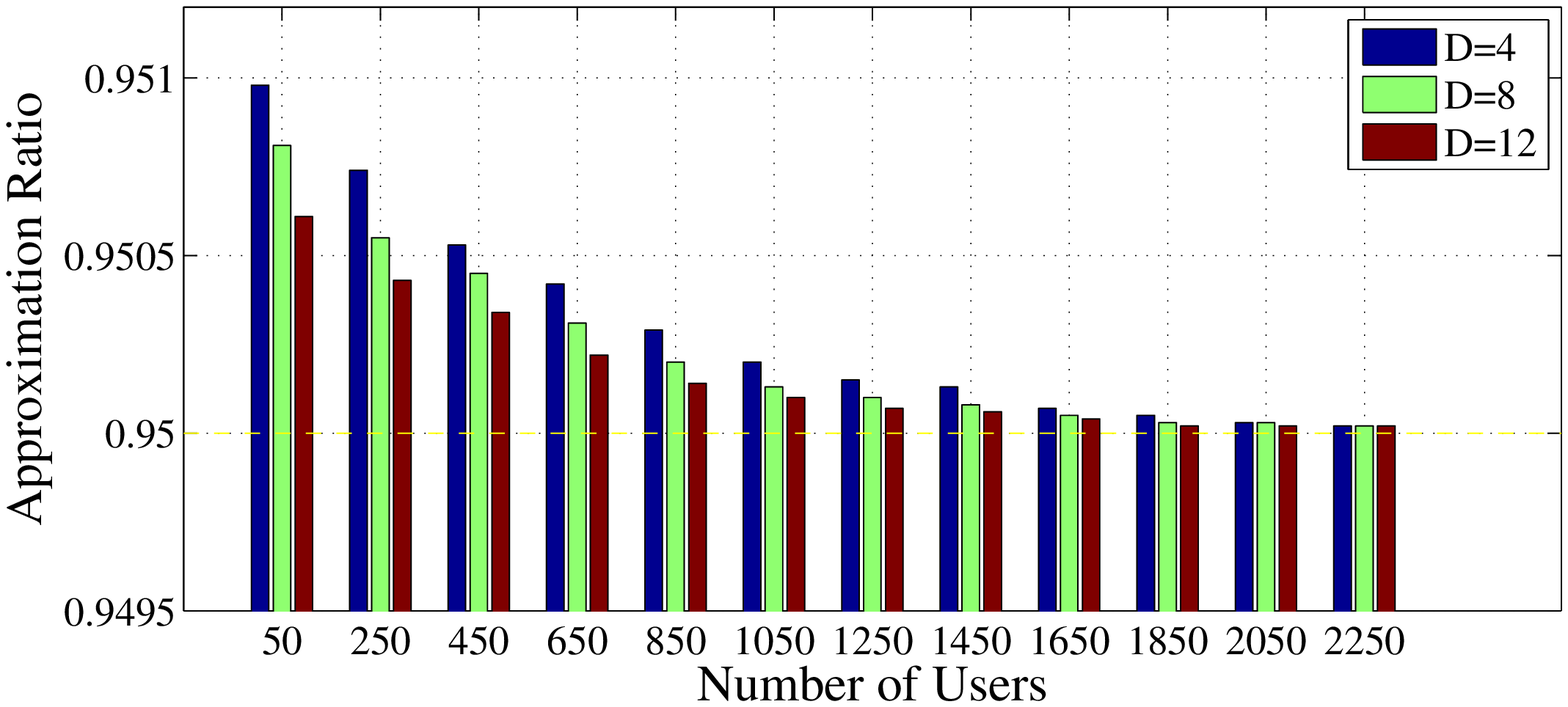}
  %\caption{Approx. Ratio of RPAA with Different Numbers of DCs}
\caption{Approx. Ratio of RPAA}% with Different Numbers of Users}
  \label{fig:comp1}
\end{center}
\vspace{-7mm}
\end{figure}
\begin{figure}[]
\begin{center}
  \includegraphics[width=0.48\textwidth]{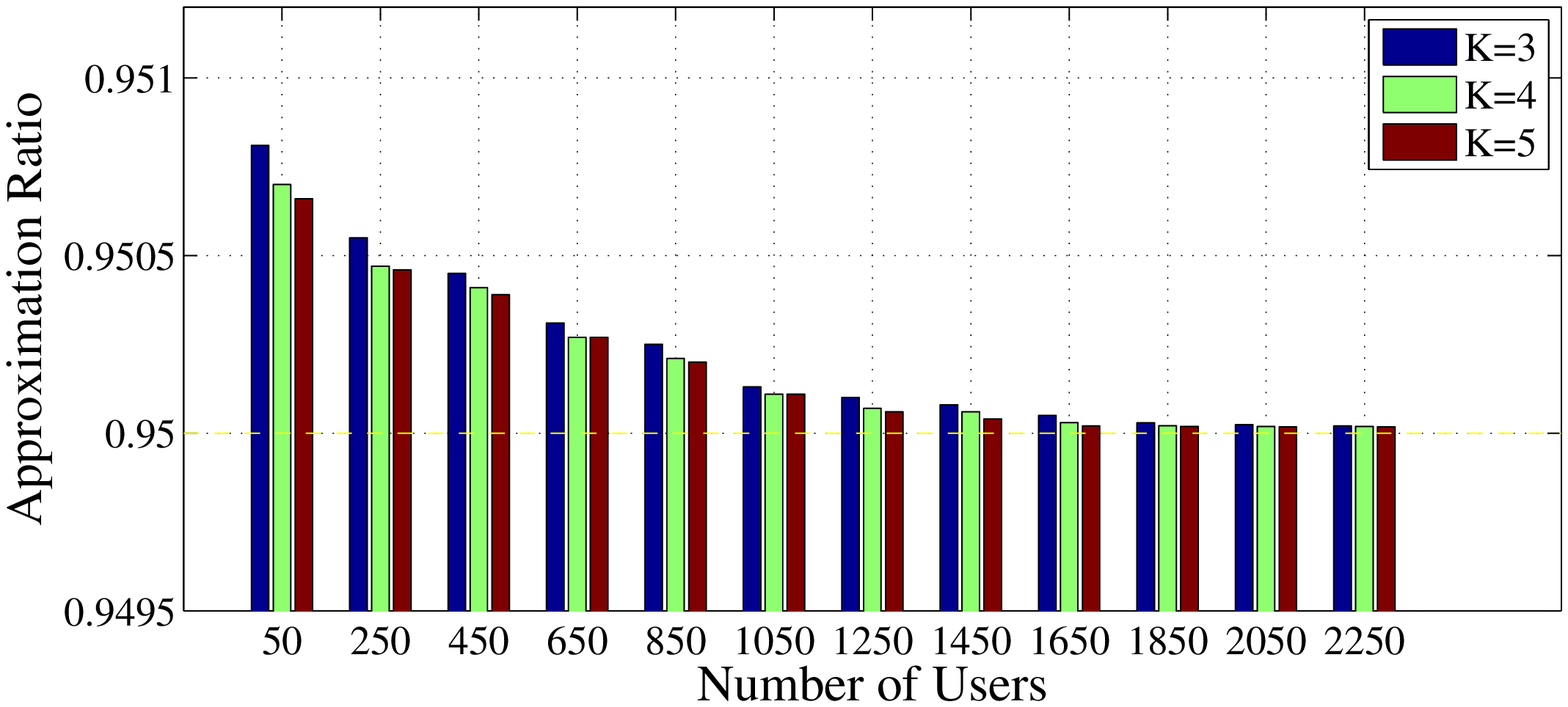}
\caption{Approx. Ratio of RPAA}
  \label{fig_k}
\end{center}
\vspace{-7mm}
\end{figure}

\begin{figure}[]
\begin{center}
  \includegraphics[width=0.48\textwidth]{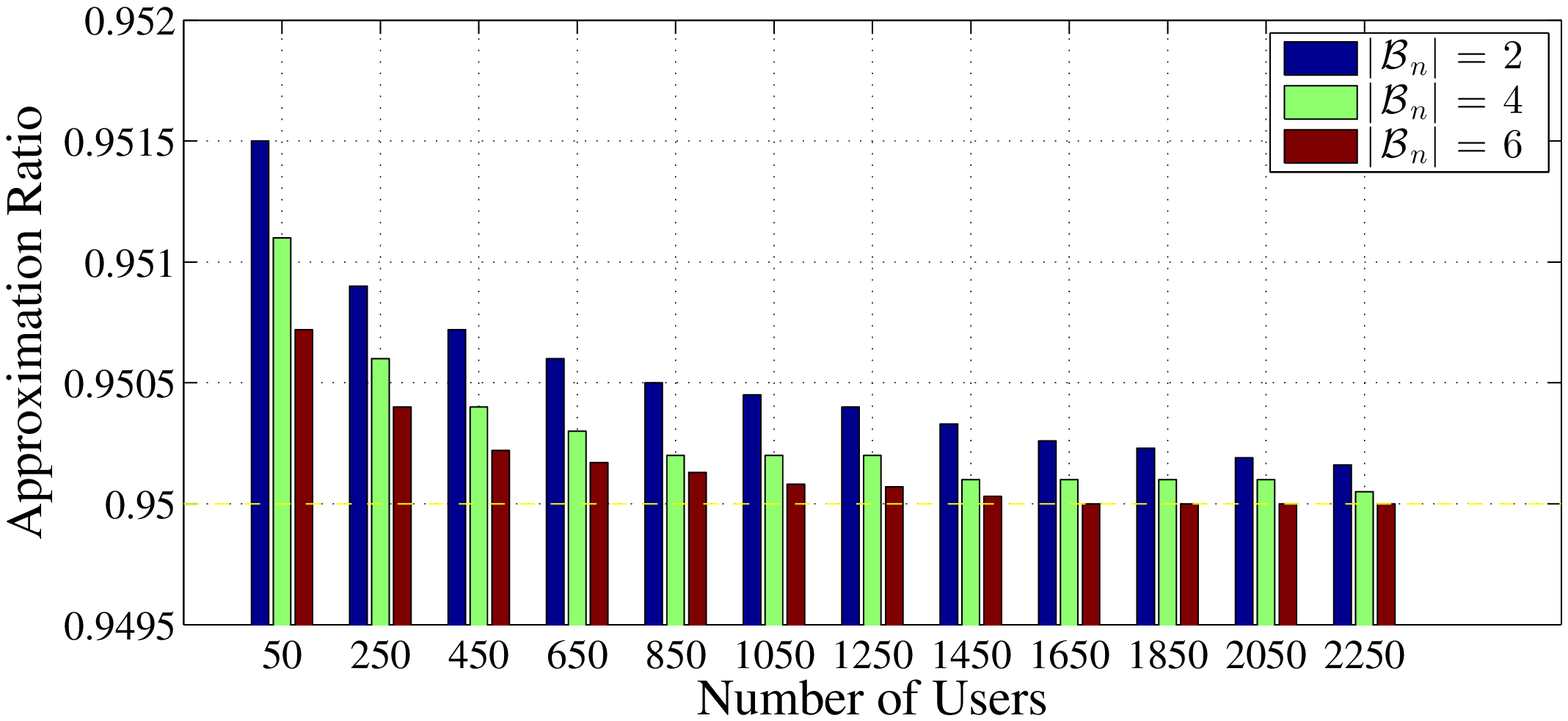}
  \caption{Approx. Ratio of RPAA}% with Different Bid Numbers per User}
  \label{fig:comp2}
\end{center}
\vspace{-7mm}
\end{figure}

\begin{figure}[]
\begin{center}
  \includegraphics[width=0.48\textwidth]{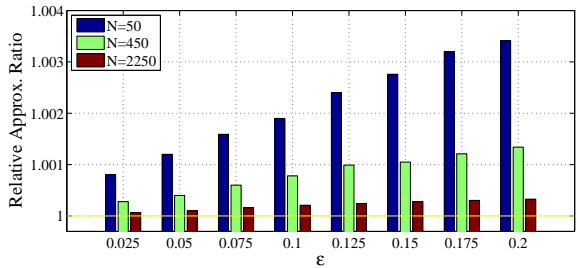}
  \caption{Relative Approx. Ratio of RPAA}% with Different Values of $\epsilon$}
%Relative Approx. Ratio of Varying $\epsilon$ with Different Number of Users}
  \label{fig:comp3}
\end{center}
\vspace{-7mm}
\end{figure}

%
\begin{comment}
\captionsetup[figure]{labelfont=bf}
\begin{figure*}[t]
\captionsetup{width=0.48\textwidth}
\begin{center}
\begin{minipage}[t]{0.45\linewidth}
  \includegraphics[width=\textwidth]{}
  \caption{Approx. Ratio of RPAA}
  \label{fig:comp1}
\end{minipage}%
\hfill
\begin{minipage}[t]{0.45\linewidth}
  \includegraphics[width=\textwidth]{figures/fig2.eps}
  \caption{Approx. Ratio of RPAA}
  \label{fig:comp2}
\end{minipage}
\hfill
\begin{minipage}[t]{0.45\linewidth}
  \includegraphics[width=\textwidth]{}
  \caption{Approx. Ratio of RPAA}
  \label{fig:comp3}
\end{minipage}
\end{center}
\vspace{-7mm}
\end{figure*}
\end{comment}

\subsection{Social Welfare Comparison with PDAA}

We now compare the social welfare achieved by our Alg.~\ref{alg:perturbation} with the primal-dual approximation algorithm in \cite{Zhang:2014jo} (which is essentially the algorithm used in \cite{Shi:sigmetrics14} as well), denoted by PDAA. The algorithm in \cite{Zhang:2014jo} does not consider the distribution of VM demands in multiple data centers. We hence extend this algorithm to multiple data centers by expanding the dimensions of the capacity constraint from $K$ to $K \times D$ to handle $K$ types of resources distributed in $D$ data centers, for a fair comparison. %In the following figures, we use RPAA to represent our Randomized Perturbation-based Approximation Algorithm and PDAA to represent the Primal-Dual Approximation Algorithm in \cite{Zhang:2014jo}.

Fig.~\ref{fig:comp4} and Fig.~\ref{fig:comp5} show that our algorithm consistently outperforms the algorithm in \cite{Zhang:2014jo} in terms of social welfare, under the same parameter settings. This validates our theoretical analysis: our algorithm is guaranteed to achieve a no-lower-than-$(1-\epsilon)$ approximation in social welfare, where the other algorithm achieves a ratio of around $2.72$ \cite{Zhang:2014jo}.

%The reason can be intuitively draw from the design of PDAA. It firstly relaxes the social welfare problem from an integer maximization problem (we have extend it to consider multiple data centers) to obtain a fractional solution. The optimal social welfare of the relaxed social welfare maximization problem is proved at most $\alpha$ times of the optimum of the original integer social welfare problem. Then a decomposition method is deployed to produce a random allocation solution whose expectation is exactly $\frac{1}{\alpha}$ times the fractional solution. Finally the random allocation combined with a fractional VCG payment yields a truthful auction. Under this circumstance, the PDAA only needs to guarantee that the social welfare under the random allocation solution is smaller than the actual optimum social welfare, in order to prove the existence of the random allocation produced by decomposition method. Thus the approximation ratio of the social welfare output by PDAA is only guaranteed to be at least $\frac{1}{\alpha}$ and consequently surpassed by RPAA. In fact, the optimum of the relaxed social welfare problem is only slightly higher than the optimal social welfare in IP (\ref{eqn:socialwelfare}). Thus the output social welfare of PDAA is slightly higher than $\frac{1}{\alpha}$ times the optimal social welfare of IP (\ref{eqn:socialwelfare}), where we set $\alpha=2$, which is updated in the simulation of \cite{Zhang:2014jo}.

Fig.~\ref{fig:comp4} also indicates that the social welfare of both algorithms increases with the increase of $W$ and $|\mathcal{B}_w|$. The resource capacity in our experiments is set to be roughly linear in the total resource demand of the users, and when $W$ is large, more bids can be accepted and hence the social welfare is larger. When each user can submit more bids, the decision space for ILP (\ref{eqn:socialwelfare}) is larger, leading to a better social welfare.
%Whereas, the social welfare increases slower than the number of $|\mathcal{B}_w|$, subjected to constraint (\ref{eqn:socialwelfare}b) and the setting of resource capacity $c_{kd}$, which is linear with the number of users, NOT the total number of bids. 

Fig.~\ref{fig:comp5} implies a negative correlation between the social welfare and $D$, which can be intuitively explained as follows: When the number of data centers is larger, the bid bundles that each user submits contain VMs scattered in more data centers. If any resource demand in any data center was not satisfied, a bid would be rejected. Moreover, as explained for Fig. \ref{fig:comp1} and Fig. \ref{fig_k}, more data centers lead to more constraints being perturbed (demands enlarged) in Alg.~2, which makes $\vec{x}^p$ deviate more from the optimum of the original problem. Note that there is a high probability for the final allocation solution $\vec{y}^{\epsilon}$ to be equal to $\vec{x}^p$. Based on all the above, the chance for each bundle to be accepted decreases, and the social welfare decreases slightly with the increase of $D$.

\begin{comment}
\begin{figure}[]
\begin{center}
  \includegraphics[width=0.48\textwidth]{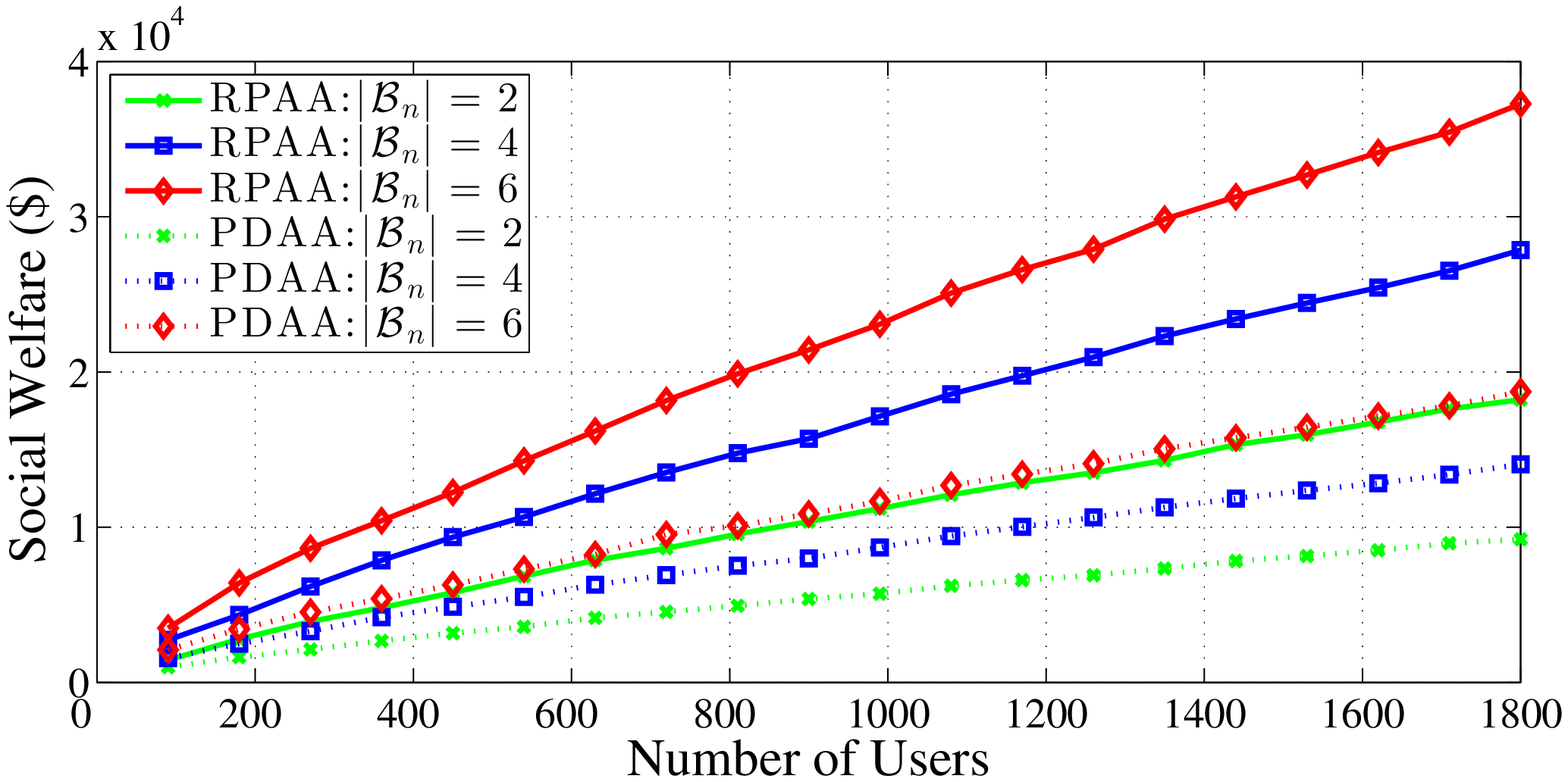}
  \caption{Social Welfare Comparison between RPAA and PDAA}%: Different Bid Numbers per User}
  \label{fig:comp4}
\end{center}
\vspace{-7mm}
\end{figure}

\begin{figure}[]
\begin{center}
  \includegraphics[width=0.48\textwidth]{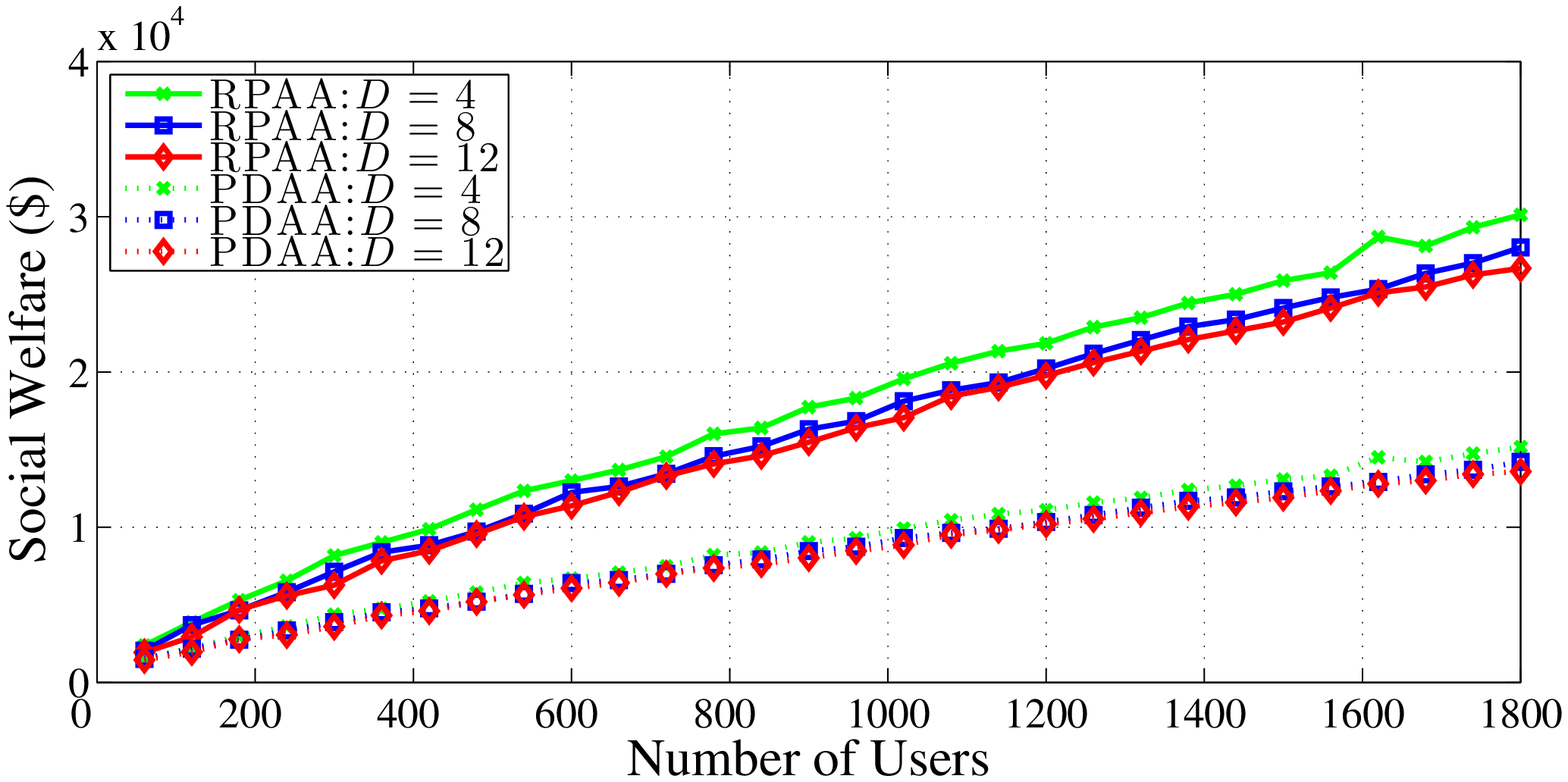}
  \caption{Social Welfare Comparison between RPAA and PDAA}%: Different Numbers of Data Centers}
  \label{fig:comp5}
\end{center}
\vspace{-7mm}
\end{figure}
\end{comment}
%
\captionsetup[figure]{labelfont=bf}
\begin{figure*}[t]
\captionsetup{width=0.48\textwidth}
\begin{center}
\begin{minipage}[t]{0.45\linewidth}
  \includegraphics[width=\textwidth]{figures/fig4.eps}
  \caption{Social Welfare of RPAA and PDAA}%: Different Bid Numbers per User}
  \label{fig:comp4}
\end{minipage}%
\hfill
\begin{minipage}[t]{0.45\linewidth}
  \includegraphics[width=\textwidth]{figures/fig5.eps}
  \caption{Social Welfare of RPAA and PDAA}%: Different Numbers of Data Centers}
  \label{fig:comp5}
\end{minipage}
\end{center}
\vspace{-7mm}
\end{figure*}

\subsection{User Satisfaction Comparison with PDAA}

We next evaluate user satisfaction achieved by both RPAA and PDAA, which is the percentage of users accepted as winners in the respective auctions. Fig.~\ref{fig:comp6} and Fig.~\ref{fig:comp7} show that user satisfaction achieved by our algorithm is about twice that of the other algorithm, which results from similar reasons as given in the comparison of social welfare. User satisfaction of both algorithms improves slightly with the increase of the number of bids a user submits, mainly because more choices of the bids provide a user a higher chance to win one, while the chance does not improve much since all the users now have more bids to submit.
%One the other hand, the increasing social welfare with a larger bundle size per user illustrated in Fig. \ref{fig:comp4} also explained the result of Fig. \ref{fig:comp6}. 
User satisfaction in Fig.~\ref{fig:comp7} decreases slightly as more data centers are included, suffering from the same cause as explained for Fig.~\ref{fig:comp5}.

\captionsetup[figure]{labelfont=bf}
\begin{figure*}[t]
\captionsetup{width=0.48\textwidth}
\begin{center}
\begin{minipage}[t]{0.45\linewidth}
  \includegraphics[width=\textwidth]{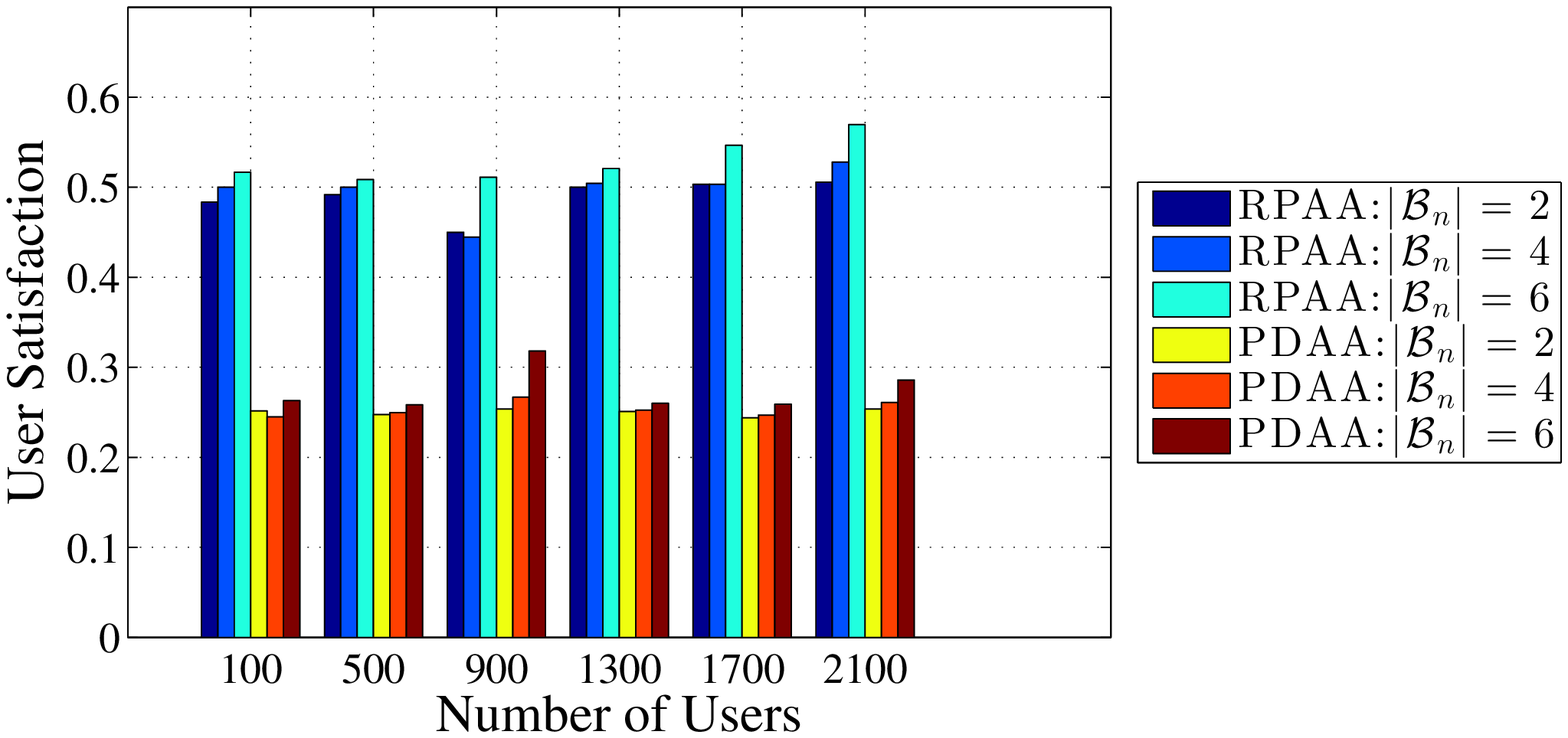}
  \caption{User Satisfaction of RPAA and PDAA}%: Different Bid Numbers per User}
  \label{fig:comp6}
\end{minipage}%
\hfill
\begin{minipage}[t]{0.45\linewidth}
  \includegraphics[width=\textwidth]{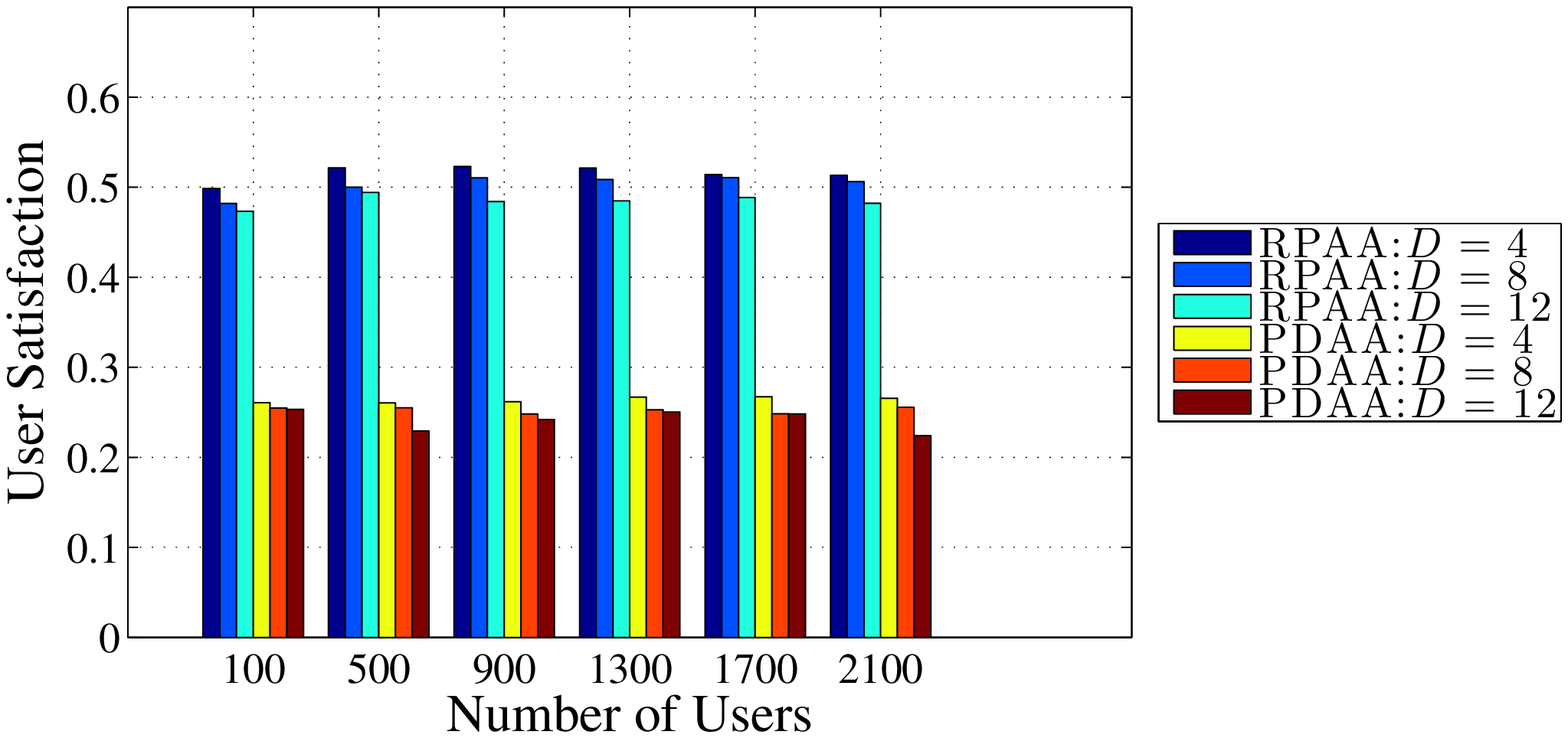}
  \caption{User Satisfaction of RPAA and PDAA}%: Different Numbers of Data Centers}
  \label{fig:comp7}
\end{minipage}
\end{center}
\vspace{-7mm}
\end{figure*}

%\subsection{Analysis of the Trade-off between Running Time and Approximation Ratio in RPAA and PDAA Respectively}
%\vspace{2mm}
Finally, we remark on the running time incurred by the two algorithms. Figs. \ref{fig:comp4}--\ref{fig:comp7} illustrate that our algorithm outperforms the algorithm in \cite{Zhang:2014jo} in social welfare and user satisfaction, which is mainly due to the much better approximation ratio achieved by our algorithm. The running time of PDAA is
%mainly due to running an ellipsoid method to solve a combinatorial optimization problem, targeting a decomposition of the optimal solution of the relaxed social welfare maximization problem into integer solutions, which are the candidates for the random allocation solution. The time complexity of PDAA is
shown to be $O(N^6\log{}N)$, where $N$ is the number of bids, while our algorithm has an expected time complexity of $O(KDN^{8KD+1}/\epsilon^{2KD})$. (details in the proof of Theorem \ref{thm:expectedtime} in Appendix \ref{sec:thm:expectedtime}). Hence,
%There hence may exist a tradeoff between computation efficiency and social welfare optimality with our algorithm, {\em i.e.}, 
our algorithm may sacrifice some of the computation efficiency for a much better approximation to the optimal social welfare. However, the difference is not substantial, and a polynomial running time is still guaranteed for our algorithm in practice.

\section{Concluding Remarks}
\label{sec:conclusion}

This work presents a truthful and efficient auction mechanism for dynamic VM provisioning and pricing in geo-distributed cloud data centers. By employing smoothed analysis in a novel way and randomized reduction techniques, we develop a randomized mechanism that achieves truthfulness, polynomial running time, and $(1-\epsilon)$-optimal social welfare for resource allocation (all in expectation). %, where $\epsilon$ can be arbitrarily close to $0$.
We propose an exact algorithm which solves the NP-hard social welfare maximization problem in expected polynomial time, and apply a perturbation-based randomized scheme based on the exact algorithm to produce a VM provisioning solution that is $(1-\epsilon)$-optimal in social welfare in expectation. Combining the randomized scheme with a randomized VCG payment, we achieve an auction mechanism truthful in expectation. From a theoretical perspective, we achieve a randomized fully polynomial-time-in-expectation ($1-\epsilon$)-approximation scheme for a strongly NP-hard problem which does not have a deterministic FPTAS. We believe that this new technique can be generalized to work for a rich class of combinatorial auctions, other than VM auctions. Trace driven simulations we conduct validate our theoretical analysis and reveals the superior performance of our mechanism as compared to an existing mechanism on dynamic VM provisioning.

\ifCLASSOPTIONcaptionsoff
  \newpage
\fi

\bibliographystyle{IEEEtran}
{\small
\bibliography{IEEEabrv,main}
}
\opt{long}{\begin{appendices}

\section{Proof of Theorem \ref{thm:nphard}}
\label{sec:thm:nphard}
%\vspace*{-3mm}
\begin{proof} 
Let 	

$R^i_{j}=\left\{
\begin{array}{l l}
		R_{kd}^i & \quad  j=1,\ldots,KD\\
		{1 \quad \mbox{\small if } i\in \mathcal{B}_{j-KD}}\atop
		{0 \quad \mbox{\small if } i\notin \mathcal{B}_{j-KD}}
		& \quad  j=KD+1, \ldots, KD+W
	\end{array}\right.$

\noindent and
	
$c_j=\left\{\begin{array}{l l}
c_{kd} & \quad  j=1,\ldots,KD\\
1& \quad j=KD+1,\ldots,KD+W
  \end{array}\right.$

\noindent The social welfare maximization problem (\ref{eqn:socialwelfare}) can be converted to a Multi-dimensional Knapsack Problem in polynomial time:

\begin{equation}
	{\rm maximize} \hspace*{2mm} \sum_{i=1}^{N}{b}_i x_i \label{eqn:knapsack}
\end{equation}

subject to:

\begin{align*}
%\sum_{i=1}^{N}x_iR^i_{j}\leq c_j,&&\forall j\in [K\times D],\\
\sum_{i=1}^{N}x_iR^i_{j}\leq c_j,&& j=1,\ldots, KD+W,\\
x_i\in \{0,1\},&&\forall i\in [N]
\end{align*}
	
%$~~~\sum_{i\in \mathcal{B}_u}x_i\leq 1,~~~~~~~~~\forall u\in [U]~~~~~~~~~~~~~~~~~~~~~~~~~(18)$\\
%We map the instance in the Multi-dimensional knapsack Problem (\ref{eqn:knapsack}) to an instance of the social welfare maximization problem (\ref{eqn:socialwelfare}), where each user submits exactly one single bid. The mapping can be done in polynomial time.

The Multi-dimensional Knapsack Problem is a classic strongly NP-hard combinatorial optimization problem \cite{Puchinger:2010:MKP:1791474.1791481}. Moreover, it has no FPTAS unless P=NP \cite{Gens:1980:CAA:1008861.1008867}\cite{Kulik:2010:NET:1824821.1824869}. %If there exists an algorithm solving the problem in $(4)$, then it solves the original social welfare problem as well, and vice versa. 
\end{proof}

\section{Proof of Lemma \ref{lem:recursion}}
\label{sec:lem:recursion}
%\vspace*{-2mm}

\begin{proof}
If $\vec{x}^{(i)} \in \mathcal{P}(i)$, then $\vec{x}^{(i)}$ is not dominated by any solutions in $\mathcal{P}(i)$. If $\vec{x}^{(i)}-x^{(i)}_i$ ({\em i.e.}, the vector obtained by removing the last element $x^{(i)}_i$ from $\vec{x}^{(i)}$) is dominated by some solution in $\mathcal{P}(i-1)$, {\em e.g.}, $\vec{x}_d^{(i-1)}$, then $\vec{x}_d^{(i-1)}+x^{(i)}_i$ ({\em i.e.}, the vector obtained by appending $x^{(i)}_i$ to the end of $\vec{x}_d^{(i-1)}$) dominates $\vec{x}^{(i)}$, which is a contradiction. Thus we can conclude that if $\vec{x}^{(i)}\in \mathcal{P}(i)$, then $\vec{x}^{(i)}-x^{(i)}_i\in \mathcal{P}(i-1)$.\\
\end{proof}

\section{Proof of Lemma \ref{nondecreasingparetoset}}
\label{sec:lem:nondecreparetoset}
 
\begin{proof}
According to the construction from $\mathcal{P}(i)^{\prime}$ to $\mathcal{P}(i)$ in Algorithm 1, some solutions of $\mathcal{P}(i)^{\prime}$ will be pruned. %Consider a pruned solution from $\mathcal{P}(i)^{\prime}$, $\vec{x}_a^{i}$. Let $\vec{x}_b^i$ denote the solution that dominates $\vec{x}_a^i$.
 When we merge $\mathcal{P}(i-1)+0$ and $\mathcal{P}(i-1)+1$ to $\mathcal{P}(i)^{\prime}$, (1) if all the solutions in $\mathcal{P}(i-1)+0$ are kept in $\mathcal{P}(i)^{\prime}$, then $|\mathcal{P}(i)^{\prime}|\geq |\mathcal{P}(i-1)+0|$; (2) Otherwise, if $\exists \vec{x}_a^{(i)}\in\mathcal{P}(i-1)+0$ which is pruned, then $\exists \vec{x}_b^{(i)}\in \mathcal{P}(i-1)+1$ that dominates $\vec{x}_a^{(i)}$, which means there always exists a solution which substitutes the removed one. Hence $|\mathcal{P}(i)|\geq |\mathcal{P}(i-1)|.$
\end{proof}

\section{Proof of Lemma \ref{lem:POPT_vs_OPT}}
\label{sec:lem:POPT_vs_OPT}
\begin{proof}
We define solution $\vec{x}^{\ast -}$ to be a feasible solution to the perturbed problem which is no larger than $\vec{x}^{\ast}$ component-wisely. 
Due to $\vec{\hat{b}}=P \vec{b}$, we have 
\begin{equation}
POPT=(P\vec{b})^T\vec{x}^p \ge (P\vec{b})^T\vec{x}^{\ast -},
\end{equation}
\noindent since $\vec{x}^{\ast -}$ is a feasible solution to the perturbed problem. In the following, we will show that the perturbed objective value under $\vec{x}^{\ast -}$ is at least a $(1-\Delta)$-fraction of the perturbed objective value under $\vec{x}^{\ast}$, which is further at least a $(1-\Delta)(1-\epsilon/2)$-fraction of $OPT$, the original objective value: 
\begin{align}
&(P\vec{b})^T\vec{x}^{\ast -}\ge (1-\Delta)(P\vec{b})^T\vec{x}^{\ast}\notag\\
\ge& (1-\Delta)\sum_{i\in [N]}((1-\epsilon/2)b_i+\frac{\epsilon\sum_{i^{\prime}\in[N]}b_{i}}{N})x^{\ast}_i\label{eqn:POPT_OPT_1}\\
\ge& (1-\Delta)(1-\epsilon/2)\vec{b}^T\vec{x}^{\ast} \label{eqn:POPT_OPT}
%&\ge (1-\Delta)(1-\epsilon)OPT
%(Pb)^T\vec{x}\ge (1-\epsilon)\vec{b}^T\vec{x},\forall \vec{x}\in \mathcal{T}\\
\end{align}
The rest of the proof is to show that we could upper-bound $\Delta \le \epsilon/2$ by assuming that each bid only requests for a small amount of demand for each type of resource. Let $\mathcal{S}^{\ast}$ denote the set of the accepted bids of $\vec{x}^{\ast}$. To find a $\vec{x}^{\ast-}$, a feasible solution to the perturbed problem, our method is to construct a feasible solution to the perturbed problem by dropping some bids from the accepted bids of $\vec{x}^{\ast}$, until all the constraints are feasible again. This idea is directed by the fact that the demand parameters are only slightly perturbed, thus dropping a small amount of demand from the original optimal solution may only lose little social welfare.     
Note that such $\vec{x}^{\ast-}$ may not be unique, but we only need to find a possible one and lower bound the derived perturbed objective value under it. 
%We define the set $\mathcal{S}^{-}=\{i\mid x^{\ast}_i=1~\text{and}~x^{\ast -}_i=0\}$, thus we have $x^{\ast}=x^{\ast-}$ for all $i\notin \mathcal{S}^{-}$. The remaining bids yield a feasible solution $\vec{x}^{\ast -}$. 
%in the order of increasing $\frac{b_i}{\sum_{k\in [K]}\hat{R}^{kd}_{i}}$ until the perturbed constraints are feasible again. The dropping procedure is  as follows. (1) We calculate $\frac{b_i}{\sum_{k\in [K]}\hat{R}^{kd}_{i}}$ for each accepted bid $i$ by $\vec{x}^{\ast}$ and sort them in the increasing order. (2) We drop the sorted bids one by one while check the feasibility of perturbed constraints once we drop a bid. 
%We stop dropping when perturbed constraints are feasible and find a $\vec{x}^{\ast-}$ whose elements are the bid indexes remained. 
%\xx{Draw a diagram or table to briefly show the dropping-based-on-sorting algorithm for construct a feasible $\vec{x}^{\ast -}$.}

%\vspace{1mm}
\begin{table}[h]
\caption{Dropping-based-on-Sorting Algorithm $DoS$}
\label{table:alg_dos}
\centering
\begin{tabular}{|l|}
\hline
1: Identify $\mathcal{Q}^{+}$, the set of the violated constraints;\\
2: For each $j\in \mathcal{Q}^{+}$:\\
3:~~~~Sort all the bids of the original optimal set $\mathcal{S}^{\ast}$\\
~~~~~~in the increasing order of $\frac{\hat{b}_i}{\hat{R}^j_{i}}$;\\
4:~~~~Drop the sorted bids one by one, \\
~~~~~~util the constraint $j$ is feasible.\\
\hline
\end{tabular}
\end{table}

%\smallskip
To prove that the perturbed objective value under $\vec{x}^{\ast-}$ is at least $1-\epsilon/2$ of that under $\vec{x}^{\ast}$ is equivalent to show that the total bidding price of dropped bids is at most a $\epsilon/2$ fraction of the total bidding price aggregated by the accepted bids of $\vec{x}^{\ast}$. Before that, we define $\mathcal{Q}$ to be the set of $j$ (recall all the tuples of $(k,d)$ are sorted and re-indexed by $j$). Let $\mathcal{Q}^{+}$ be the subset of $\mathcal{Q}$ where at each element $j \in \mathcal{Q}^{+}$, the total perturbed demand of dimension $j$ exceeds the respective capacity. Recall that in the algorithm $DoS$, we drop bids in $|\mathcal{Q}^+|$ rounds to make the violated constraints feasible. A critical concept $\mathcal{S}^{-}_{j}$ is defined to be the set of dropped bids in the round of $j$. Three facts are shown in \eqref{eqn:fact1} and \eqref{eqn:fact2}.
\begin{align}
&\sum_{i\in \mathcal{S}^{-}_{j}}\hat{R}^{j}_i \le (1+2\epsilon) R_{max}^{j}, &\forall j \in \mathcal{Q}^{+} \label{eqn:fact1}\\
&\sum_{i\in \mathcal{S}^{\ast}} \hat{R}_{i}^{j} > c_{j}, \forall j \in \mathcal{Q}^{+}; ~\sum_{i\in \mathcal{S}^{\ast}} \hat{R}_{i}^{j} \le c_{j}, &\forall j \in \mathcal{Q}\setminus \mathcal{Q}^{+}\label{eqn:fact2}
\end{align} 
\noindent Here, fact in \eqref{eqn:fact2} follows the definitions of $\mathcal{Q}^{+}$ and $\mathcal{Q}$. 
The fact in \eqref{eqn:fact1} is derived as follows. 
Recall that the perturbation of demand parameter is $\hat{R}^{j}_{i}=R^{j}_i+\frac{\theta^j_{i}\sum_{i^{\prime}\in[N]}R^{j}_{i}}{N}$ where $\theta^j_i \sim U(0, \frac{\epsilon}{N})$. 
We define the maximal demand of $k$-type resource in data center $d$ to be $R^{j}_{max}=\max_{i\in [N]}R^{j}_i$. 
Then we have 
\begin{equation}
\sum_{i\in \mathcal{S}^{\ast}} \hat{R}^{j}_i\le \sum_{i\in \mathcal{S}^{\ast}} R^{j}_{i} +\sum_{i\in \mathcal{S}^{\ast}} \frac{\epsilon}{N}R^{j}_{max} \le \sum_{i\in \mathcal{S}^{\ast}}R^{j}_{i}+\epsilon R^{j}_{max}.
\end{equation} 
Moreover, we have the total demand under the optimal solution of the original problem is at most the capacity of each type of resource at each data center, {\em i.e.,} $\sum_{i\in \mathcal{S}^{\ast}}R^{j}_i\le c_{j}$ for each $j \in \mathcal{Q}$. 
Thus, for each $j\in \mathcal{Q}^{+}$, we can upper-bound the total perturbed demand under $\vec{x}^{\ast}$, to be $\sum_{i\in \mathcal{S}^{\ast}} \hat{R}^{j}_i \le c_{j}+\epsilon R^{j}_{max}$. It means that if the dropped amount is at least $\epsilon R^{j}_{max}$ of each $j\in \mathcal{Q}^{+}$, the packing constraints will be feasible again.
Now the question is, in each round, how much demand will be actually dropped at most? Consider at some time in round $j$ when the dropping is still not finished yet. We must have that the dropped $j$th demand is less than $\epsilon R^{j}_{max}$; otherwise the remaining $j$th demand does not exceed $c_{j}$, the capacity of $j$th demand, which means this round of dropping should have been stopped. Then consider the moment in this round that the total $j$th demand that have been dropped is less than $\epsilon R^{j}_{max}$, but the bid to be dropped next will finish this round of dropping. We call this moment {\em critical moment}. Note that this critical moment always exists in each dropping round. Since the very next dropped bid in the critical moment is at most $\hat{R}^{j}_{max}\le (1+\epsilon)R^j_{max}$. Plus $\epsilon R^{j}_{max}$, the dropped amount before the critical moment, we prove the fact in \eqref{eqn:fact1}.

The above three facts essentially mean that we find a $\vec{x}^{\ast -}$, a feasible solution of the perturbed problem, such that the allocated resource amount is not much less than that under $\vec{x}^{\ast}$, the optimal solution to the original problem. Recall that to lower-bound $\Delta$, the remaining is to prove that the total perturbed bidding price generated by remaining bids is not much less than that of $\vec{x}^{\ast}$. How to associate the remaining demand to the the remaining social welfare? In fact, the sorting operation gives the answer. Recall in each round $j\in \mathcal{Q}^{+}$, we sort and drop the bids of $\mathcal{S}^{\ast}$ in the increasing order of $\frac{\hat{b}_i}{\hat{R}^{j}_i}$, the bidding price per unit of $j$th demand. It means that the dropped bids have lower bidding price per unit of demand $j$ than the average $\frac{\hat{b}_i}{\hat{R}^{j}_i}$ over all the bids in $\mathcal{S}^{\ast}$. The basic idea to upper-bound the dropped social welfare is as follows. In each round $j$, the dropped $j$th demand could be upper-bounded according to the fact in \eqref{eqn:fact1}. Then we could upper-bound the fraction of the loss of perturbed social welfare due to the dropped bids in the $j$th round over the total perturbed social welfare. Finally, taking over all the rounds, we upper-bound the fraction of the total loss of perturbed social welfare over that under $\vec{x}^{\ast}$, which is shown as follows.
\begin{align}
&\sum_{j\in \mathcal{Q}^{+}} \frac{\sum_{i\in \mathcal{S}^{-}_{j}} \hat{b}_i}{\sum_{i\in \mathcal{S}^{\ast}}\hat{b}_i}\le \sum_{j\in \mathcal{Q}^{+}} \frac{\sum_{i\in \mathcal{S}^{-}_{j}} \hat{R}^j_i}{\sum_{i\in \mathcal{S}^{\ast}}\hat{R}^j_i}~(\text{due to sorting})\\%\frac{\sum_{i\in \mathcal{S}^-_j}\hat{b}_i}{\sum_{i\in \mathcal{S}^-_j}\hat{R}^j_i}\le \frac{\sum_{i\in \mathcal{S}^{\ast}}\hat{b}_i}{\sum_{i\in \mathcal{S}^{\ast}}\hat{R}^j_i})\\
&\le \sum_{j\in \mathcal{Q}^{+}} \frac{(1+2\epsilon) R_{max}^{j}}{c_{j}}~(\text{due to:}~\eqref{eqn:fact1}~\text{and}~\eqref{eqn:fact2})\label{eqn:delta_bound}\\
&\le KD(1+2\epsilon)\times \frac{1}{2KD(2+\frac{1}{\epsilon})}=\frac{\epsilon}{2}\label{eqn:delta_bound_2}
\end{align}
\noindent The first inequality in \eqref{eqn:delta_bound_2} is due to the small bid assumption stated in Lemma \ref{lem:POPT_vs_OPT}, $\frac{R^j_{max}}{c_j}\le \frac{1}{2KD(2+\frac{1}{\epsilon})}$. The assumption essentially means that each bid has a small demand of each type of resource in the desired data center, compared to the corresponding capacity. Thus, $1-\Delta$ in \eqref{eqn:POPT_OPT}, the fraction of remaining perturbed social welfare over that under $\vec{x}^{\ast}$ is lower-bounded to be $1-\frac{\epsilon}{2}$.

According to \eqref{eqn:POPT_OPT}, we have that the social welfare of the perturbed problem under $\vec{x}^{-}$ over that under $\vec{x}^{-}$ is at least $(1-\Delta)(1-\epsilon/2)\ge (1-\epsilon/2)(1-\epsilon/2)\ge 1-\epsilon$, which serves as a critical step to lower-bound the approximation ratio of our allocation algorithm.
 Note that our analysis method is valid for milder conditions that require a higher upper-bound of the fraction of bid demand over capacity. Nevertheless, to obtain the $(1-\epsilon)$-approximation ratio, we will consider such a small bid assumption which is stated in Lemma \ref{lem:POPT_vs_OPT}, Theorem \ref{thm:smoothedfptas} and Theorem \ref{thm:truthful}.

\end{proof}

\section{Proof of Theorem \ref{thm:boundedtime}}
\label{sec:thm:boundedtime}
\begin{proof}

%Constructing $\mathcal{P}(i-1)+i$ is just the process of a copy of $\mathcal{P}(i-1)$ and increasing all the payoff $b(\vec{x})$ and resource demands $C^1(\vec{x})$, $C^2(\vec{x})$, $C^3(\vec{x})$ by $(b_i,R_i^1,R_i^2,R_i^3)$\\ 
There are two main steps in each $i$th round:\\
1. Given $\mathcal{P}(i-1)$, we construct $\mathcal{P}(i)^{\prime}$ by using two copies of all the solutions $\vec{x}^{(i-1)}\in \mathcal{P}(i-1)$ and adding the $i$th bid to each solution of one copy with the value of $0$ and $1$, respectively. The time this step takes is linear in $|\mathcal{P}(i-1)|$, increasing the total bidding price and resource demands of all the solutions in $\mathcal{P}(i-1)$. %More exactly, $|\mathcal{P}^{\prime}(i)|=2|\mathcal{P}(i-1)|$.\\

2. For each solution $\vec{x}^{(i)} \in \mathcal{P}(i)^{\prime}$, we compare $(b(\vec{x}^{(i)}),C_{kd}(\vec{x}^{(i)}),\forall k \in [K], \forall d \in [D])$ to all the other solutions in $\mathcal{P}(i)^{\prime}$ and delete all the dominated solutions. Thus the time this step takes is linear in $KD|\mathcal{P}(i)^{\prime}|^2$.\\

According to the analysis in step 1 and step 2, the running time is bounded by:\\ 

$~~~~~~~~~~~~O(KD\sum_{i=1}^{N-1}|\mathcal{P}(i)|^{2}).$\\

Due to $Lemma~2$, $O(KD\sum_{i=1}^{N-1}|\mathcal{P}(i)|^2)\leq O(KDN|\mathcal{P}(N)|^2).$\\ 
\end{proof}

\section{Proof of Theorem \ref{thm:expectednumber}}
\label{sec:thm:expectednumber}
\begin{proof}
This theorem desires to upper-bound the number of Pareto optimal solutions in expectation when parameters in \eqref{eqn:socialwelfare_p} are independently and randomly perturbed. 
We adopt classical smoothed analysis in which the expected number of Pareto optimal solutions usually relies on the input size and the maximal perturbation density of the perturbed input parameters. Let $\phi$ denote the upper-bound of the perturbation density of each $\hat{b}_i$ and $\hat{R}^{kd}_i$. 
Now we calculate $\phi$ of our perturbed problem. 
Since $\theta^j_i$ is drawn from $[0,\frac{\epsilon}{N}]$, the value of $\hat{b}_i$ lies in the interval $[(1-\epsilon){b_i},(1-\epsilon){b_i}+\frac{\epsilon \sum_{j=1}^N{{b}_j}}{N^2}]$, with the interval length of $\frac{\epsilon \sum_{i^{\prime}=1}^N{{b}_{i^{\prime}}}}{N^2}$. Let $b_{max}= \max\{b_1, \ldots, b_N\}$.  %$\sum_{j=1}^L{b_j} < b_{max} < L \times  \sum_{j=1}^L{b_j}$
We have $\sum_{i^{\prime}=1}^N{{b}_{i^{\prime}}}\geq b_{max}$. Thus the interval length is no smaller than $\frac{\epsilon b_{max}}{N^2}$ and equivalently, the density of $\hat{b}_i$ is upper-bounded everywhere in the interval $[(1-\epsilon){b_i},(1-\epsilon){b_i}+\frac{\epsilon \sum_{i^{\prime}=1}^N b_{i^{\prime}}}{N^2}]$ by $\frac{N^2}{\epsilon b_{max}}$. Similarly, the interval length of the perturbation of $\hat{R}^{j}_i$ is no smaller than $\frac{\epsilon R^j_{max}}{N^2}$, and the density of $\hat{R}^j_i$ is upper-bounded everywhere in the interval $[R^j_i, R^j_i+ \frac{\sum_{i^{\prime}=1}^N R^j_{i^{\prime}}}{N^2}]$ by $\frac{N^2}{\epsilon R^j_{max}}$. Moreover, without loss of generality, we can normalize all the bidding prices to be in $[0,1]$ by dividing each $b_i$ by $b_{max}$. We also normalize all the demands to be in $[0,1]$ by diving each $R^j_i$ by $R^j_{max}$ and diving each $c_j$ by $R^j_{max}$. The perturbation density will be upper-bounded by $\frac{N^2}{\epsilon }$.

According to the latest result of the smoothed number of Pareto optimal solutions for a multi-objective integer programming \cite{Brunsch:improved_smoothed}, we obtain the result in the Proposition \ref{prop:cite}. Before showing the proposition, we show that the result of a multi-objective optimization problem works for our problem. 

According to the definition of a Pareto optimal solution to a multi-objective integer programming  \cite{Brunsch:improved_smoothed}, a solution $\vec{x}$ in the feasible region is Pareto optimal if and only if there is no solution which is at least as good as $\vec{x}$ and better than $\vec{x}$ in at least one criteria. That is, a Pareto optimal solution is a feasible solution that is not dominated by any other solution in the feasible region. In a multi-objective optimization problem, there are multiple objective functions to be optimized in a feasible region. In our problem, according to the definition of Pareto optimal solution (Pareto Optimal Allocation), the defined objectives (criteria to compare two solutions, not the objective function) are $C_{kd}(\vec{x})'s$, the total demand for each $k$-type of resource in data center $d$, plus $s(\vec{x})$, the social welfare under $\vec{x}$. It means we have in total $KD+1$ objects of criteria to define a Pareto optimal solution. In our problem, a solution $\vec{x}$ is a Pareto optimal solution if and only if there does not exist a feasible solution $\vec{x}^{\prime}$ that dominates $\vec{x}$ in all objects of criteria, which falls in the family of the Pareto optimal solution in \cite{Brunsch:improved_smoothed}. Besides the fact that the definition of a Pareto optimal solution for a multi-objective optimization problem is essentially the same as ours, the perturbation policy of our work fits that of \cite{Brunsch:improved_smoothed} where: Each coefficient ($\hat{b}_i$ and $\hat{R}^j_i,~\forall j\in [KD-1]$) is chosen independently according to its own quasi-concave probability density function where the density at each point of the perturbation interval is upper-bounded by $\phi$. Here, quasi-concave density \cite{Brunsch:improved_smoothed} requires that the probability density is non-decreasing within the left half perturbation interval while is non-increasing in the right half. Our perturbation of each parameter follows the uniform distribution which has a quasi-cave density function.   
\begin{proposition}\label{prop:cite}
For any constant $K$, $D$ and $\alpha \in \mathbb{N}$, the $\alpha$th moment of the smoothed number of Pareto optimal solutions of a multi-objective binary programming is $O((N^{2KD}\phi^{KD})^{\alpha})$ for quasi-concave perturbation density functions with density everywhere upper-bounded by $\phi$. 
\end{proposition}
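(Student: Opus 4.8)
The plan is to derive this bound as a direct specialization of the main smoothed-analysis theorem for multiobjective optimization of Brunsch and R\"oglin~\cite{Brunsch:improved_smoothed}, once our notion of Pareto optimality and our perturbation model are matched to theirs. First I would set up the dictionary. The decision vector $\vec{x}$ ranges over a subset of $\{0,1\}^{N}$, so there are $n=N$ binary variables; the criteria used to decide domination in Definition~\ref{def:pareto} are the $KD$ packing loads $C_{1}(\vec{x}),\ldots,C_{KD}(\vec{x})$ together with the social welfare $s(\vec{x})=\vec{\hat{b}}^{T}\vec{x}$, i.e.\ $d=KD+1$ linear criteria in total. Of these, the objective and the first $KD-1$ loads have independently randomly perturbed coefficients (the $\hat{b}_i$ and $\hat{R}^{j}_i$, $j\le KD-1$), while the last load $C_{KD}$ is left unperturbed, i.e.\ adversarial. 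Hence exactly $d-1=KD$ of the $d$ criteria are perturbed, which is precisely the semi-random, one-adversarial-criterion regime treated in~\cite{Brunsch:improved_smoothed}.

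Next I would check the hypotheses of the cited theorem in our setting. (i) The feasible region is a subset of $\{0,1\}^{N}$ --- cut out here by constraints (\ref{eqn:socialwelfare}b), (\ref{eqn:socialwelfare}c) and the unperturbed packing constraint $C_{KD}\le c_{KD}$ --- and every criterion is linear in $\vec{x}$. (ii) Each perturbed coefficient has the form ``fixed affine part $+$ independent additive noise'', and after the normalization performed just above the proposition the noise density is everywhere bounded by $\phi=N^{2}/\epsilon$; it is quasi-concave because a uniform density on an interval is unimodal. (iii) The noises attached to distinct coefficients are mutually independent by construction of the $\{\theta^{j}_i\}$. With these in place, the $\alpha$th-moment bound of~\cite{Brunsch:improved_smoothed} for a $d$-criteria binary program with $d-1$ perturbed criteria gives $O\bigl((n^{2(d-1)}\phi^{\,d-1})^{\alpha}\bigr)$; substituting $n=N$ and $d-1=KD$ yields $O\bigl((N^{2KD}\phi^{KD})^{\alpha}\bigr)$, as claimed.

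Finally I would flag the two points needing care. The minor one is that $\mathcal{P}(i)$ in Alg.~\ref{alg:pareto} is the Pareto set of the subproblem on the first $i$ bids rather than of the whole instance; but Lemma~\ref{nondecreasingparetoset} gives $|\mathcal{P}(i)|\le|\mathcal{P}(N)|$ for all $i$, so it is enough to bound the moments of $|\mathcal{P}(N)|$, and $\mathcal{P}(N)$ is exactly the Pareto set of the full perturbed program (\ref{eqn:socialwelfare_p}) over all $N$ variables, to which the cited theorem applies verbatim. The genuine crux is the legitimacy of leaving the last packing constraint unperturbed: a version of the multiobjective smoothed bound that insisted on perturbing \emph{every} criterion would fail for an instance-dependent capacity $c_{KD}$, so one must invoke the refined statement in~\cite{Brunsch:improved_smoothed} that tolerates one adversarial criterion, and verify that $C_{KD}$ indeed plays exactly that role. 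Confirming this, together with the bookkeeping that reinterprets our ``objective $+$ constraints'' as their ``$d$ criteria'' and that tracks the origin of $\phi=N^{2}/\epsilon$, is the step I expect to demand the most attention; everything else is routine.
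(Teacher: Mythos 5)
Your proposal is correct and takes essentially the same route as the paper: the proposition is obtained by specializing the Brunsch--R\"oglin moment bound for multi-objective binary programs, after verifying that the $KD+1$ linear criteria (the $KD$ packing loads plus the perturbed welfare), the independent quasi-concave noise with density at most $\phi$, and the single unperturbed criterion in the semi-random model all match their hypotheses. Your write-up is if anything slightly more explicit than the paper's about the dictionary ($n=N$, $d-1=KD$ perturbed criteria) and about why the unperturbed last constraint is admissible.
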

\noindent Here, according to definition of $\alpha$th moment of a random variable \cite{wiki:moment}, $\alpha=2$ is the case to calculate the expectation of $|\mathcal{P}(N)|^2$.  Thus putting $\phi = \frac{N^2}{\epsilon}$ and $\alpha =2$ into Proposition \ref{prop:cite}, we have that $E[|\mathcal{P}(N)|^2]\le O(N^{8KD}/\epsilon^{2KD})$, which the theorem follows.   

\end{proof}

\section{Proof of Theorem \ref{thm:expectedtime}}
\label{sec:thm:expectedtime}

\begin{proof}

Combining Theorem \ref{thm:boundedtime} and Theorem \ref{thm:expectednumber}, we derive the expected running time of our exact algorithm on the randomly perturbed ILP (\ref{eqn:socialwelfare_p}) (line 5 in Alg.~\ref{alg:perturbation}) as

\begin{align*}
	O(KDNE[|\mathcal{P}(N)|^2])=O(KDN^{8KD+1}/\epsilon^{2KD}).
\end{align*}

\noindent Note that $K$ and $D$ are fixed constants. The perturbation matrix $P$ can be obtained in polynomial time (lines 3--4 in Alg.~\ref{alg:perturbation}). The set of feasible solutions to ILP (\ref{eqn:socialwelfare}) and the distribution $\Omega$ can be constructed in polynomial time as well (lines 6 in Alg.~\ref{alg:perturbation}). Hence the theorem is proven.
\end{proof}

\end{appendices}}

\end{document}